\documentclass[11pt]{article}
\usepackage{fullpage}
\usepackage{graphicx}
\usepackage{epsfig}
\usepackage{graphics}
\usepackage{latexsym}
\usepackage{amsmath}
\usepackage{amsfonts}
\usepackage{amssymb}
\usepackage{mathrsfs}
\usepackage[switch]{lineno}
\usepackage{amsthm}
\usepackage{xspace}
\usepackage{epstopdf}
\usepackage{float}
\usepackage{hyperref}
\usepackage{pgfplots}
\usepackage{pythonhighlight}
\pgfplotsset{compat=1.8}

\numberwithin{equation}{section}
\usepackage[ruled,vlined]{algorithm2e}
\SetArgSty{textrm}

\usepackage[english]{babel}
\usepackage[nottoc]{tocbibind}

\usepackage{caption}
\usepackage{subcaption}

\usepackage{pifont}
\usepackage{booktabs}
\usepackage{multirow}

\newcommand{\redcomment}[1]{\textcolor{red}{\textrm{#1}}}

\usepackage{amsthm}     
\theoremstyle{plain}     
\newtheorem{theorem}{Theorem}

\newtheorem{corollary}{Corollary}

\newtheorem{lemma}{Lemma}
\newtheorem{proposition}{Proposition}
\theoremstyle{definition} 
\newtheorem{definition}{Definition}

\newtheorem{example}{Example}

\newtheorem{mechanism}{Mechanism}
\theoremstyle{remark} 
\newtheorem{remark}{Remark}

\makeatletter
\@addtoreset{equation}{section}
\def\section{\@startsection {section}{1}{\z@}{-3.5ex plus -1ex minus
 -.2ex}{2.3ex plus .2ex}{\large\bf}}
\makeatother

\def\bfm#1{\mbox{\boldmath$#1$}}

\def\x{\bfm x}

\def\0{\bfm 0}

\DeclareMathAlphabet{\mathpzc}{OT1}{pzc}{m}{it}

\newcounter{my}

\newcounter{my2}

\newcounter{my3}

\newcounter{my4}

\newcounter{my5}

\newcounter{my6}

\allowdisplaybreaks 

\begin{document}

\title{Mechanism Design for Connecting Regions Under Disruptions\thanks{A preliminary version  \cite{DBLP:conf/aaai/ChanLQ025} is in AAAI 2025.}}

\date{}
\maketitle

\vspace{-3em}
\begin{center}

\author{Hau Chan$^{1}$\quad Jianan Lin$^{2}$\quad Zining Qin$^{3}$\quad Chenhao Wang $^{4,3}$\\
${}$\\
$1$ University of Nebraska-Lincoln\\
$2$ Rensselaer Polytechnic Institute\\
$3$ Beijing Normal-Hong Kong Baptist University\\
$4$ Beijing Normal University-Zhuhai\\
\medskip
}

\end{center}

\begin{abstract}
Man-made and natural disruptions such as planned  constructions on roads, suspensions of bridges, and blocked roads by trees/mudslides/floods can often create obstacles that separate two connected regions. 
As a result, the traveling and reachability of agents from their respective regions to other regions can be affected. 
To minimize the impact of the obstacles and maintain agent accessibility, we initiate the problem of constructing a new pathway (e.g., a detour or new bridge) connecting the regions disconnected by obstacles from the mechanism design perspective.  
In the problem, each agent in their region has a private location and is required to access the other region. 
The cost of an agent 
is the distance from their location to the other region via the pathway.  
Our goal is to design strategyproof mechanisms that elicit truthful locations from the agents and approximately optimize the social or maximum cost of agents by determining  locations in the regions for building a pathway. 
We provide a characterization of all strategyproof and anonymous mechanisms. 
For the social and maximum costs, 
we provide upper and lower bounds on the approximation ratios of strategyproof mechanisms.
\end{abstract}

\section{Introduction}\label{sec:intro}

In modern societies, various types of infrastructures are constructed to connect  regions to facilitate the traveling or reachability of agents from their corresponding regions to other regions \cite{amekudzi2007transportation,forkenbrock1990economic,narayanaswami2017urban}. 
These types of infrastructures include highways, streets, roads, bridges, and transportation systems. 
For instance, using the road infrastructure, an agent from a region can drive to reach another region effectively. 

Unfortunately, these infrastructures can sometimes be interrupted either temporarily or permanently due to man-made or natural disruptions \cite{boakye2022role,faturechi2015measuring,gu2020performance,serdar2022urban}.
For example, man-made disruptions can refer to the planned large construction project of a road, the construction of a transportation hub (e.g., a subway station), the suspension of bridges (e.g., due to accidents), or the interruption of an area due to public activities (e.g., parades, temporary markets, or sporting events).  
In addition, natural disruptions can be in the form of the aftermath of disasters (e.g., earthquakes and storms), where roads and bridges are damaged or blocked by large trees,  mudslides, or floods. 

These disruptions can often result in obstacles that disconnect any two regions and affect the traveling and reachability of agents. 
Therefore, our goal is to determine the best way to construct new routes/pathways connecting the disconnected regions in order to minimize the impact of the obstacles and maintain the accessibility of the agents. 
In temporary disruptions with obstacles (e.g., road constructions, public events, or large trees on roads), the new pathways can be viewed as detours connecting the regions so that agents can continue to access other regions before the removal of the obstacles.
In permanent disruptions with obstacles (e.g., the suspensions of bridges or roads), the new pathways can be regarded as part of newly added roads or bridges connecting the regions. 
With the new pathways, the agents from their corresponding regions can still travel and reach the other regions, overcoming the obstacles due to disruptions. 


\begin{figure}[H]
	\centering
\begin{tikzpicture}[scale=1.1]

\draw[thick] (-4.5,0) -- (4.5,0);

\foreach \x in {-4.2,-3.2,-1.2,-0.4,2.8,3.3}
{
    \fill (\x,0) circle (2.2pt);
}

\node[below=3pt] at (-2.0,0.7) {\Large $a$};
\node[below=5pt] at (0,0) {\large $o$};
\node[below=1.5pt] at (1.2,0) {\large $o+L$};
\node[below=3pt] at (2.2,0.8) {\Large $b$};

\draw[green!60!black, very thick]
    (-2.0,0.05) .. controls (-0.7,0.75) and (0.9,0.75) .. (2.2,0.05);

\draw[thick, decorate,
      decoration={coil, aspect=0.35, segment length=5pt, amplitude=8pt}]
      (0.1,0.05) -- (1.2,0.05);

\draw[decorate,
      decoration={brace, amplitude=8pt, mirror},
      gray, thick]
      (-4.5,-0.55) -- (-0.1,-0.55);

\node at (-2.35,-1.05) {Region $A$};

\draw[decorate,
      decoration={brace, amplitude=8pt, mirror},
      gray, thick]
      (1.2,-0.55) -- (4.5,-0.55);

\node at (2.85,-1.05) {Region $B$};

\end{tikzpicture}
\caption{An obstacle ranging from $o$ to $o+L$ disconnects the agents (denoted as solid points) in two regions $A$ and $B$, and a new pathway $(a,b)$ connects them. 
    }
	\label{fig:obs}
\end{figure}

Figure~\ref{fig:obs} provides a motivating example of a structurally deficient bridge that is suspended and therefore becomes impassable. 
We model the original bridge as a one-dimensional infrastructure, where the suspended segment is represented by the obstacle interval $[o,o+L]$. 
Before the suspension, the bridge connects Region~$A$ and Region~$B$; after the suspension, the obstacle disconnects the two regions and prevents agents from traveling directly between them. 
Nevertheless, agents in each region may still need to access the other region for work, school, services, or other daily activities. 
This motivates the construction of an alternative pathway or replacement bridge, represented by the green curve with endpoints $a$ and $b$ in Figure~\ref{fig:obs}, to restore accessibility between Region~$A$ and Region~$B$.

Existing optimization literature has considered building optimal pathways  between two disconnected regions, aiming to minimize the maximum distance between any two points from the two regions (see, e.g., \cite{leizhen1999computing,Kim1998Linear,kim2001computing,tan2000optimal,tan2002finding}). While these studies designed polynomial algorithms for building optimal bridges between different types of convex polygons (more details in related work), 
there are two main assumptions that make the current optimization literature not ideal for capturing real-world situations under disruptions involving agents. 
First, existing literature assumes that the agents are located in all of the points in the regions. 
However, in many real-world situations, agents’ locations consist only of a subset of discrete points in the regions. 
Second, existing literature assumes that each agent's location is public information. 
However, agent locations might not be known in advance and require elicitation \cite{DBLP:books/cu/NRTV2007,procaccia2013approximate}. 
Therefore, our goal is to build optimal pathways to account for agents’ locations to connect them to the respective regions. 

\subsection{Our Contribution} 

We initiate the mechanism design study of building (approximately) optimal pathways  between two regions disconnected by obstacles under disruptions to connect agents from their respective regions to other regions. 
In out setting, a line segment (denoted by an interval $[0,1]$) connecting two regions is separated by an obstacle. The left endpoint of the obstacle is at $o\in (0, 1)$, and the right endpoint is at $o+L$, where $L$ is the distance of the obstacle. (see Figure \ref{fig:obs}.\footnote{The line space has been extensively studied in mechanism design of facility location problems for modeling  geographic regions and other real-world non-geographic  situations \cite{chan2021survey,procaccia2013approximate}.} Agents in the regions are denoted by sets $N_1$ and $N_2$, depending on whether their locations are points on the left-hand side or the right-hand side of the obstacle (i.e., $x_i \in  [0, o)$ or $x_i \in (o+L, 1]$ for any agent $i$ in $N_1$ or $N_2$).

We aim to design mechanisms to elicit agent locations and build a pathway/edge  $(a,b)$ that connects the two disconnected regions, where the left endpoint $a$ is in $[0,o)$ and the right endpoint $b$ is in $(o+L,1]$. 
Given an edge $(a,b)$, the cost of an agent at $x_i\in [0, o)$ is $|x_i-a|+k(b-a)+1-b$ with $k$ being a non-negative multiplication factor, that is, the distance from their location to the (farthest) endpoint on the other region, passing through edge $(a,b)$. 
The cost of an agent at $x_i\in (o+L,1]$ is defined similarly as $|x_i-b|+k(b-a)+a$. 
We consider adding edges that minimize two different objectives:  the social cost (i.e., the total  cost of all agents) and the maximum cost (i.e., the maximum cost  among all agent costs). When $k\ge 1$, a mechanism that returns an optimal solution $(o-\epsilon,o+L+\epsilon)$ is group strategyproof for $\epsilon\rightarrow 0$. Therefore, we only need to focus on the situation when $k\in[0,1)$.\footnote{{In various situations, the social planner can determine the value of $k$ appropriately. 
For instance, $k >> 1$ can be set for constructing a temporary detour. 
When creating a new road or bridge (to replace
an older one), the social planner can set  
$k < 1$ 
by making it wider or having higher speed limits. }
}

We first provide a characterization of all strategyproof and anonymous mechanisms as two-dimensional generalized median mechanisms by showing that the agent preferences over the locations on where to build the pathway are two-dimensional single-peaked. {The single-peakedness means that agents have preferences over a set of options (i.e., pathway locations in our setting) that can be ordered, so that each agent has the most preferred option (called the peak) and their preference for other options decreases as they move away from this peak. See more details in Section \ref{sec:cha}.}

For the social cost, we derive an optimal solution on where to build a pathway and show that the mechanism that returns the optimal solution is group strategyproof. 
For the maximum cost, we show that there is a unique optimal solution on where to build a pathway and the optimal solution is not strategyproof. We show that a deterministic group strategyproof mechanism,  \textsc{TwoExtreme}, that simply connects two agent locations nearest to the obstacle has an approximation ratio of $\frac{2-2(1-k)L}{1+k-(1-k)L}$. Specially, when $L=0$, it is $\frac{2}{1+k}$. 
On the other hand, we show a lower bound of deterministic mechanisms. Specially, when $L=0$, no deterministic strategyproof mechanism has an approximation ratio less than $\frac{2}{1+\sqrt{k}}$. 

In Section~\ref{sec:special}, we study the special case $L=0$ of a singleton obstacle for maximum cost. We provide an improved deterministic mechanism, \textsc{TwoExtremeRestrict},  by not allowing the endpoints of the edge to be too close to the obstacle (see Theorem \ref{thm:round2}). Besides, we improve the lower bound from $\frac{2}{1+\sqrt{k}}$ to the experimental lower bound in Figure \ref{fig:res}. 
Moreover, we design a randomized group strategyproof  mechanism   \textsc{RandMaxCost} that has an approximation ratio of $\max\left(\frac{4-2k}{3-k}, \frac{1+k}{1+k^2}\right)$.  
We also provide a lower bound $\frac{6+6k}{5+7k}$ for any randomized strategyproof mechanisms. 
See Figure \ref{fig:res} for an illustration of the above bounds when $L=0$.

\begin{figure}[h]
	\centering
\begin{tikzpicture}[scale=0.90]
\begin{axis}[axis x line=middle,
             axis y line=middle,
             grid,
             ylabel=Ratio,
             xlabel=$k$,
             xlabel style={at={(1.01,0)}, anchor=west},
             ylabel style={at={(0,1.01)}, anchor=south},
             legend style={at={(0,0)},anchor=south east,at={(axis description cs:0.99,0.67)}},
             axis y discontinuity=crunch,
             extra x ticks={0},
             extra x tick labels={$0$},
             xmin=-0.03,ymin=0.7,
             xtickmin=0,ytickmin=1,
             width=10cm,height=5cm,
             scale only axis,
             ]
    
     \addplot[smooth,thick,samples=50,green,domain=0:1]{2/(1+x)};
    \addlegendentry{ \textsc{TwoExtreme}}
    \addplot[smooth,thick,samples=50,red,domain=0:0.99]{2*max(
        (1-((1+x^2-sqrt(x^4-x^3+3*x^2+x))/(1+x)))/(1+x-((1+x^2-sqrt(x^4-x^3+3*x^2+x))/(1+x))),
        (x*(2*((1+x^2-sqrt(x^4-x^3+3*x^2+x))/(1-x^2))-((1+x^2-sqrt(x^4-x^3+3*x^2+x))/(1-x^2))^2)+1-((1+x^2-sqrt(x^4-x^3+3*x^2+x))/(1-x^2))^2)/(2-2*((1+x^2-sqrt(x^4-x^3+3*x^2+x))/(1+x))), 
        (1+2*((1+x^2-sqrt(x^4-x^3+3*x^2+x))/(1-x^2))*x)/(2-((1+x^2-sqrt(x^4-x^3+3*x^2+x))/(1+x))),
        ((1+x^2-sqrt(x^4-x^3+3*x^2+x))/(1-x^2))
    )};
    \addlegendentry{ \textsc{TwoExtremeRestrict}}
    \addplot[dashed,smooth,thick,samples=50,green,domain=0:1]{max(2/(1+sqrt(x)),1};
    \addlegendentry{Analytic Lower Bound}
    \addplot[dashed, red, thick, smooth] table[x=k, y=Lk, col sep=comma] {lower_bound_data.csv};
    \addlegendentry{Computer-Aided Lower Bound}
    \addplot[smooth,thick,samples=50,blue,domain=0:1]{1+max(((max((x^2+x)/(1+x^2), (1+x)/(3-x)))*(1-x))/(1+x), (1-(max((x^2+x)/(1+x^2), (1+x)/(3-x))))*x)};
    \addlegendentry{\textsc{RandMaxCost}}
    \addplot[dashed,smooth,thick,blue,samples=50,domain=0:1]{(6+6*x)/(5+7*x)};
    \addlegendentry{Randomized Lower Bound}
\end{axis}
\end{tikzpicture}
	\caption{An illustration  of upper and lower bounds for the maximum cost when  $k \in [0,1)$ and $L=0$. The upper bounds of our mechanisms are depicted in solid lines, and the lower bounds are in dashed lines. 
 }
		\label{fig:res}
\end{figure}

All our results apply to the setting where the obstacle is a closed subinterval of $[0,1]$, since such an interval can be reduced to a point as in our model.

\paragraph{Organization.} We introduce the model in Section~\ref{sec:model}. A characterization of all strategyproof mechanisms is provided in Section~\ref{sec:cha}. The social cost and the maximum cost objectives are analyzed in Sections~\ref{sec:soc} and~\ref{sec:max}, respectively.  Section~\ref{sec:special} is devoted to the special case of $L=0$.

\subsection{Related Works} 
While no existing mechanism design literature considers our setting, we discuss the most related optimization studies on building optimal bridges connecting two regions and adding edges to discrete networks to improve network parameters. 
We also discuss the related works in the approximate mechanism design without money paradigm. 

\paragraph{Bridge-building.} 
Existing optimization literature has considered the problem of building an optimal bridge to connect two disconnected regions. 
Cai et al.
\cite{leizhen1999computing} introduced the problem of adding a line segment to connect two disjoint convex polygonal regions in a plane, such that the length of the longest path from a point in one polygon, passing through the bridge, to a point in another region is minimized.
They proposed an $O(n^2 \log n)$-time algorithm, where $n$ is the maximum number of extreme points of the polygons. 
Later, \cite{bhattacharya2001computing} proposed a linear-time algorithm that improves the $O(n^2 \log n)$-time algorithm in \cite{leizhen1999computing}.  
Tan \cite{tan2000optimal} independently presented an alternative linear-time algorithm for the above setting and further generalized it to an $O(n^2)$-time algorithm for bridging two convex polyhedra in space.
 \cite{kim2001computing} provided algorithms to find an optimal bridge between two convex polygons, two simple non-convex polygons, and one convex and one simple non-convex polygons in $O(n)$,  $O(n^2)$, and $O(n\log n)$, respectively. 
Later, Tan \cite{tan2002finding} provided an $O(n\log^3 n)$-time algorithm for the settings of two simple non-convex polygons. 
Kim et al. \cite{Kim1998Linear} proposed a linear-time algorithm to compute an optimal bridge between two parallel lines separated by an obstacle to minimize the length of the longest path connecting two points on the lines. 
However, all of the above-mentioned works focus on all points in the regions. Our work focuses on a finite subset of points, which are the agents' locations, and the mechanism perspective in which agents' locations are private. 



\paragraph{Edge addition on networks.} 
Existing optimization studies have examined adding edges to discrete networks (with nodes and edges) to minimize the diameter or average shortest distances between pairs of nodes of a network (see, e.g., \cite{demaine2010minimizing,meyerson2009minimizing,papagelis2011suggesting,perumal2013minimizing}. 
However, all these optimization studies on discrete networks do not consider disconnected regions that are continuous and assume agents occupy all nodes/vertices of the network. Moreover, they do not consider the mechanism design perspective. 


\paragraph{Mechanism design.} 
Our considered mechanism design setting is within the paradigm of approximate mechanism design without money, initialized by Procaccia and Tennenholtz \cite{procaccia2013approximate} who used facility location problems (FLPs) as case studies. This paradigm investigates the design of approximately optimal strategyproof mechanisms through the lens of the approximation ratio. 
In a typical setting of FLPs, the agents report their private locations on the real line to a mechanism. The mechanism determines the locations for building facilities to minimize some  objectives involving the costs of  agents, where the cost of each agent is their distance to the facilities. 
Following their work, variations of FLPs have been introduced and studied (see, e.g., \cite{dokow2012mechanism,feldman2013strategyproof,filos2021approximate,lin2020nearly,mei2019facility,meir2019strategyproof}).
{We note that the case $k=0$ of our setting is equivalent to a 2-FLP problem where each agent $i$ has two locations, $x_i$ and the endpoint of the other region (0 or 1), whose cost is the total distance from their two locations to the two facilities (which are now represented as a pathway with $k=0$). }
We refer readers to a survey on  mechanism design for FLPs  \cite{chan2021survey}. 
The most relevant mechanism design work to ours is the work of \cite{Chan023} in which they considered modifying the structure of regions by adding a shuttle or road to improve the distances of the agents to a prelocated facility. In contrast, they do not consider two regions separated by an obstacle.

\section{Model}\label{sec:model}

Let $N=\{1,\ldots,n\}$ be the set of agents located in an interval $[0,1]$. The location profile of agents is denoted as $\mathbf x=(x_1,\ldots,x_n)$. There is an obstacle represented by $[o,o+L]$, where $o\in (0,1)$ is the left endpoint, and $L\in [0, 1-o)$ is the length. Provided that no agent is located within this obstacle, it partitions the agents into $N=(N_1,N_2)$ according to their regions, where $N_1=\{i\in N~|~x_i<o\}$ is the set of agents on the left region, and $N_2=\{i\in N~|~x_i>o+L\}$ is the set of agents on the right region. 
The agents on one region are required to access the other region. 
Due to the obstacle, the agents cannot pass through it and reach the other region directly. 
Hence, we want to build a new edge $(a,b)$ that connects the two regions with $a\in[0,o)$ and $b\in(o+L,1]$. 
The length of the edge is $k(b-a)$, where $k$ is a positive constant. 

A deterministic mechanism $f:\mathbb R^n\rightarrow \mathbb R^2$ is a function that takes the agent location profile $\mathbf x$ as input and returns an edge $f(\mathbf x)$. Given an edge $f(\mathbf x)=(a,b)$, the \emph{cost} of each agent $i\in N_1$ on the left region is the distance to  the right endpoint $1$ through the edge, 
$$cost(a,b,x_i)=|x_i-a|+k(b-a)+(1-b).$$
Similarly, the \emph{cost} of each agent $i\in N_2$ on the right region is the distance to the left endpoint $0$ through the edge, 
$$cost(a,b,x_i)=|x_i-b|+k(b-a)+a.$$
A randomized mechanism is a function $f$ from $\mathbb R^n$
to probability distributions over $\mathbb R^2$.  If $f(\mathbf x) = P$ is a probability distribution, the cost of agent $i\in N$ is
defined as the expected cost  $cost(P,x_i)=\mathbb E_{(a,b)\sim P}[cost(a,b,x_i)]$. 

A mechanism $f$ is \emph{strategyproof} if no agent can decrease their cost by misreporting the location within their region.
Formally, $f$ is strategyproof if for any $i\in N,\mathbf x$ and $x_i'$ with $x_i,x_i'$ on the same region,
$cost(f(x_i,\mathbf x_{-i}),x_i)\le cost(f(x_i',\mathbf x_{-i}),x_i)$,
where $\mathbf x_{-i}$ is the location profile of the agents in $N\setminus\{i\}$.
Further, $f$ is called \emph{group strategyproof} if no group of agents can misreport simultaneously so that all agents in the group are better off. That is, for any $S\subseteq N,\mathbf x,\mathbf x_S'$, there exists an agent $i\in S$ such that 
$cost(f(\mathbf x),x_i)\le cost(f(\mathbf x_S',\mathbf x_{-S}),x_i)$. 
A mechanism $f$ is \emph{anonymous} if the outcomes are invariant under
permutation of agents, i.e., $f(x_1,\ldots,x_n)=f(x_{\pi(1)},\ldots,x_{\pi(n)})$ for every profile $\mathbf x$ and every permutation of agents $\pi:N\rightarrow N$. Since a non-anonymous mechanism is based on the identity
of the agents and is much less interesting, we focus only on anonymous mechanisms.

Our goal is to design (group) strategyproof mechanisms with good performance guarantees under two objectives: minimizing the social cost and minimizing the maximum cost. The social cost with respect to an edge $(a,b)$ and location profile $\mathbf x$ is 
$$SC(a,b,\mathbf x)=\sum_{i\in N}cost(a,b,x_i).$$
The maximum cost with respect to an edge $(a,b)$ and location profile $\mathbf x$ is 
$$MC(a,b,\mathbf x)=\max_{i\in N}cost(a,b,x_i).$$
A mechanism $f$ is $\alpha$-approximation ($\alpha\ge 1$) for the objective function $\Delta\in\{SC,MC\}$ if 
$\Delta(f(\mathbf x),\mathbf x)\le \alpha\cdot \min_{(a,b)\in\mathbb R^2}\Delta(a,b,\mathbf x)$ for all location profiles $\mathbf x\in\mathbb R^n$.

\medskip
We remark that when the constant coefficient is $k\ge 1$, there is a trivial solution $(o-\epsilon,o+L+\epsilon)$ for some fixed value $\epsilon>0$. As $k\ge 1$, every agent wants the edge to be as short as possible, that is, $\epsilon$ approaches $0$. Then, a mechanism that returns the fixed solution $(o-\epsilon,o+L+\epsilon)$ is clearly group strategyproof and (almost) optimal for both objectives when $\epsilon\rightarrow 0$. Therefore, 
in the remainder of this paper, we assume that $k\in[0,1)$. 
{
Because it is natural for each agent to only misreport locations within their own region, our model assumes that agents cannot misreport their locations in other regions. However, it is worth noting that all of our mechanisms (Mechanism \ref{mec:socopt}-\ref{alg:ran3}) are still strategyproof and retain the same approximation even without this assumption.  }

\section{Characterizing  Strategyproof Mechanisms}\label{sec:cha}

In this section, we show that the preference profile of agents is multi-dimensional single-peaked, and the generalized median mechanisms compose the class of all anonymous strategyproof mechanisms.

We start with some necessary definitions. Let $D$
be a set of possible outcomes. A one-dimensional axis $A$ on $D$ is any strict ordering
$<_A$ of the outcomes in $D$. A multi-dimensional axis $A^m=\langle A_1,\ldots,A_m\rangle$ on $D$ is a collection of $m$ distinct axes, each being a one-dimensional axis on $D$. 
\begin{definition}[\cite{barbera1993generalized}]\label{def:mul}
     Let $A^m$ be an $m$-dimensional axis on the set $D$ of possible outcomes. An agent $i$'s preference $\succeq_i$ is \emph{$m$-dimensional single-peaked with respect to $A^m$} if:
(1) there is a single most-preferred outcome (peak) $p_i\in D$, and 
    (2) for any two outcomes $\alpha,\beta\in D$, $\alpha\succeq_i\beta$ whenever $\beta<_{A_t}\alpha<_{A_t} p_i$ or $p_i<_{A_t}\alpha <_{A_t}\beta$ for all axes $A_t$, $t=1,\ldots,m$. 
\end{definition}

Then, a preference profile is called \emph{$m$-dimensional single-peaked} if there exists an $m$-dimensional axis
$A^m$ such that every agent preference is $m$-dimensional single-peaked with respect to $A^m$. While the preference profile in our problem is not one-dimensional single-peaked (see the detailed explanation for $L=0$ in Appendix \ref{app:a}), it is indeed two-dimensional single-peaked. 

\begin{theorem}\label{thm:sing}
    For any instance of our problem, the preference profile of agents is 2-dimensional single-peaked.
\end{theorem}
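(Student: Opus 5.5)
We will exhibit an explicit 2-dimensional axis on the outcome set $D=[0,o)\times(o+L,1]$ and verify both conditions of Definition~\ref{def:mul} for every agent. The natural choice is the pair of linear orders induced by the two coordinates. Since Definition~\ref{def:mul} requires \emph{strict} orderings of all of $D$, we break ties lexicographically. Concretely, $A_1$ orders $(a,b)$ by $a$ first and then by $b$. $A_2$ orders $(a,b)$ by $b$ first and then by $a$. The intended meaning of the condition ``$\beta<_{A_t}\alpha<_{A_t}p_i$ or $p_i<_{A_t}\alpha<_{A_t}\beta$ for all $t$'' is then that $\alpha$ lies coordinatewise between $p_i$ and $\beta$, i.e.\ $\alpha$ lies in the box spanned by $p_i$ and $\beta$. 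We should check this reading carefully with the lexicographic tie-breaks on the boundary of the box.

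\textbf{Peaks.} Fix an agent $i\in N_1$ at $x_i$. We write
\[
cost(a,b,x_i)=|x_i-a|-ka+(k-1)b+1 .
\]
This expression is separable in $a$ and $b$. The $b$-part $(k-1)b$ is strictly decreasing because $k<1$. The $a$-part $|x_i-a|-ka$ decreases strictly on $[0,x_i]$ and, for $a\ge x_i$, equals $(1-k)a-x_i$, which is strictly increasing. Hence the unique minimizer is $p_i=(x_i,1)$.

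Symmetrically, for $i\in N_2$ we have
\[
cost=|x_i-b|+kb+(1-k)a .
\]
This is strictly increasing in $a$ and uniquely minimized in $b$ at $b=x_i$, so the peak is $p_i=(0,x_i)$. The peak for $N_2$ lies in $D$ since $0\in[0,o)$. For $N_1$ the peak $(x_i,1)$ lies in $D$ since $1\in(o+L,1]$.

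\textbf{Betweenness condition.} Because the cost is separable, $cost(a,b)=g(a)+h(b)$ with both $g$ and $h$ single-peaked (quasiconvex) in one variable, each minimized at the corresponding coordinate of the peak. If $\alpha$ lies between $p_i$ and $\beta$ in each coordinate, then $g(\alpha_a)\le g(\beta_a)$ and $h(\alpha_b)\le h(\beta_b)$ by monotonicity on each side of the peak. Adding these gives $\alpha\succeq_i\beta$.

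The only step needing care is the translation from the strict lexicographic orders back to coordinatewise betweenness. Suppose $p_i<_{A_1}\alpha<_{A_1}\beta$. Then $p_{i,a}\le\alpha_a\le\beta_a$, and when the $a$-coordinates tie the order falls back to $b$. Combining this with the analogous statement for $A_2$, both coordinates of $\alpha$ lie weakly between those of $p_i$ and $\beta$. For the mixed direction, the definition demands the \emph{same} direction pattern across axes, i.e.\ one of the two displayed chains holds for all $t$. We will therefore also handle configurations where $\alpha$ is ``below'' $p_i$ in $a$ but ``above'' in $b$. These cannot occur within one chain type except through the peak sitting on the boundary ($b=1$ or $a=0$). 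That boundary position actually makes the argument easier: for $N_1$ agents every $\beta$ has $\beta_b\le 1=p_{i,b}$, so only one direction in $b$ is relevant. This bookkeeping, rather than any inequality, is where I expect the main friction. If the lexicographic tie-break causes trouble, I would instead take $A_1$ ordered by $a$ with ties broken by $-b$, which is still strict and two-dimensional.
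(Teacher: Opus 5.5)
Your proof is correct and follows the paper's argument: the paper also uses the two coordinate axes, identifies the peaks $(x_i,1)$ for $N_1$ and $(0,x_j)$ for $N_2$, and checks that moving each coordinate toward the peak cannot raise the cost, which is your separability argument $cost=g(a)+h(b)$. Your lexicographic tie-break is a refinement the paper skips (its raw coordinate orders are not strict orders on $D$, as Definition~\ref{def:mul} requires), and the ``mixed direction'' bookkeeping you worry about is moot: the chain on $A_1$ alone forces weak betweenness in the $a$-coordinate and the chain on $A_2$ alone does so in the $b$-coordinate, whatever the directions on the two axes.
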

\begin{proof}
    In our problem, an outcome is a shortcut edge $(a,b)$ with $a\in[0,o)$ and $b\in (o+L,1]$, and it uniquely corresponds to a point $(a,b)$ in the 2-dimensional $xy$-coordinate system. Thus, the set of all possible outcomes can be represented by a set $D=\{(x,y)\in \mathbb R^2~|~0\le x<o\le o+L<y\le 1\}$ in a plane, as shown in Figure \ref{fig:single}. 

\begin{figure}[H]
    \centering
    \includegraphics[width=5cm]{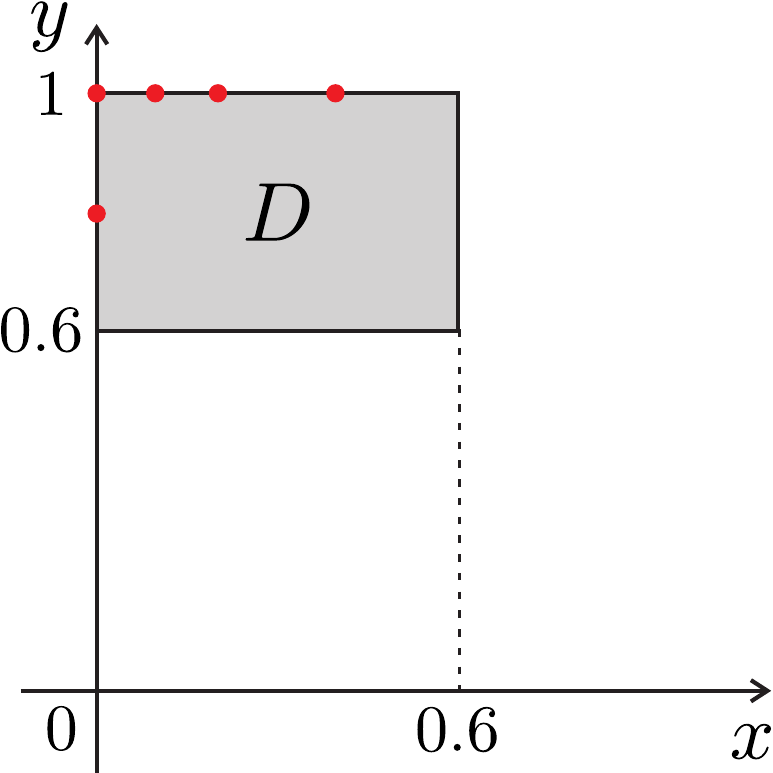}
    \caption{An illustration of the set $D$ when the obstacle is at $o=0.6$ with length $L=0$. For location profile $\mathbf x=(0.1,0.2,0.4,0.8,1)$, the agent peaks are $(0.1,1),(0.2,1),(0.4,1),(0,0.8),(0,1)$, denoted as red points. 
    }
    \label{fig:single}
\end{figure}

    Now we show that every agent is $2$-dimensional single-peaked with respect to the collection of $x$-axis and $y$-axis. For any agent $i\in N_1$, the single most-preferred outcome (peak) is $p_i=(x_i,1)\in D$. For any two outcomes (points) $(x,y),(x',y')\in D$, if they satisfy (1) $x'<x<x_i$ or $x_i<x<x'$, and (2)  $y'<y<1$, then we have 
    \begin{align*}
       & c(x,y,x_i)=|x_i-x|+k(y-x)+1-y
        < |x_i-x'|+k(y'-x')+1-y'  =c(x',y',x_i),
    \end{align*}
    implying that agent $i$ has a smaller cost under $(x,y)$ than that under $(x',y')$ and the agent prefers $(x,y)$. On the other hand, for any agent $j\in N_2$, the single most-preferred outcome is $p_j=(0,x_j)\in D$. For any two outcomes  $(x,y),(x',y')\in D$, if they satisfy (1) $y'<y<x_j$ or $x_j<y<y'$, and (2)  $0<x<x'$, similarly,  it is easy to see that agent $j$ has a smaller cost under $(x,y)$ than that under $(x',y')$, and thus the agent  prefers $(x,y)$. Hence, both conditions in Definition \ref{def:mul} are satisfied for every agent, and the preference profile is 2-dimensional single-peaked.
\end{proof}

For (one-dimensional) single-peaked preferences in a one-dimensional space, a mechanism is a  \emph{generalized
median mechanism} if there exists $n+1$ constants $b_1,\ldots,b_{n+1}\in\mathbb R\cup\{-\infty,+\infty\}$ such that the outcome is $\text{med}(p_1,\ldots,p_n,b_1,\ldots,b_{n+1})$, where $\text{med}$ is the median function, and $p_1,\ldots,p_n$ is the most-preferred outcome of the $n$ agents. 
For $m$-dimensional single-peaked preferences in an $m$-dimensional space, an \emph{$m$-dimensional generalized median mechanism} can be decomposed into $m$ independent one-dimensional generalized median mechanisms, with the $t$-th mechanism determining the coordinate of the outcome on the $t$-th dimension, for all $t=1,\ldots,m$ (see, e.g., \cite{sui2015mechanism}).   Barber{\`a} \emph{et al.} \cite{barbera1993generalized} provide a characterization result: 
    a mechanism for multi-dimensional single-peaked preferences in a multi-dimensional space is strategyproof and anonymous
if and only if it is a multi-dimensional generalized median mechanism. This characterization applies to our problem by Theorem \ref{thm:sing}.

\begin{corollary}\label{cor:cha}
    A mechanism for our problem is strategyproof and anonymous
if and only if it is a $2$-dimensional generalized median mechanism. 
\end{corollary}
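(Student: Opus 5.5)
The plan is to derive the corollary directly from the characterization of Barber\`a et al.\ \cite{barbera1993generalized}, with Theorem~\ref{thm:sing} supplying the required structure. Their result says that, on a product outcome space where every agent's preference is multi-dimensional single-peaked with respect to the coordinate axes, a social choice function is strategyproof and anonymous if and only if it is a coordinatewise generalized median. So there are three things to check. First, our outcome space has the required product structure. Second, our preference domain is rich enough for their ``only if'' direction. Third, the easy ``if'' direction holds for our specific costs.

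\emph{Step 1: product structure.} The outcome set is $D=[0,o)\times(o+L,1]$. This is a Cartesian product of two intervals, and the two axes of Theorem~\ref{thm:sing} are exactly its coordinate projections. Hence $D$ is a box in the sense of \cite{barbera1993generalized}. Since the intervals are half-open, the phantom points $b_j$ must be taken in the closure. In each coordinate the median is always attained in the range of peaks and phantoms, so I would restrict the phantoms to the closure of the interval. Any phantom lying at the excluded endpoint $o$ (resp.\ $o+L$) would then have to be ruled out or handled as the supremum, matching the paper's own convention of $(o-\epsilon,o+L+\epsilon)$.

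\emph{Step 2: the ``if'' direction.} I would prove this one directly. Let $f=(g_1,g_2)$, where $g_1$ is a generalized median of the first coordinates of the peaks and $g_2$ of the second. The cost of an agent $i\in N_1$ is $|x_i-a|+(1-k)(1-b)+k\cdot\text{const}$, up to the term $-ka+kb$. Rewrite it as
\[
\bigl(|x_i-a|-ka\bigr)+(1-k)(1-b)+k.
\]
This cost is separable in $a$ and $b$. By reporting within its region, agent $i$ can only move its peak $(x_i,1)$ along the first coordinate, so it can only affect $g_1$. By the standard property of generalized medians, a misreport either leaves $g_1$ unchanged or pushes it away from $x_i$. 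Because $k<1$, the term $|x_i-a|-ka$ is single-peaked in $a$ with peak $x_i$, so pushing $g_1$ away from $x_i$ never helps. The case of $N_2$ is symmetric. Anonymity is immediate.

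\emph{Step 3: the ``only if'' direction.} This is the main obstacle. The Barber\`a et al.\ theorem is proved for the full domain of multi-dimensional single-peaked preferences, while our domain is restricted: peaks lie only on the two edges $\{(x,1)\}$ and $\{(0,y)\}$, and preferences come from a specific parametric cost. I would argue in two parts.
\begin{itemize}
\item[(i)] Show that in each coordinate, strategyproofness forces uncompromisingness. For $N_1$ agents this follows because the cost is strictly single-peaked in $a$ and those agents can sweep the entire first coordinate.
\item[(ii)] Show that $g_1$ cannot depend on the reports of $N_2$ agents, whose first coordinate is always $0$. Such dependence would enter only through their second coordinate, so I would use decomposability/separability arguments in the style of Border--Jordan to rule it out.
\end{itemize}
With (i) and (ii) in hand, Moulin's one-dimensional characterization gives a generalized median in each coordinate. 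If this domain-richness verification becomes delicate, the fallback is to invoke \cite{barbera1993generalized} directly, as the paper does, on the strength of Theorem~\ref{thm:sing}.
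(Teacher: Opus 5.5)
Your Step~2 is fine. Writing the cost of $i\in N_1$ as $(|x_i-a|-ka)+(1-k)(1-b)+k$ shows that every coordinatewise generalized median is strategyproof and anonymous. That is more than the paper writes down: its whole argument is the appeal to Barber\`a et al.\ through Theorem~\ref{thm:sing}. The genuine gap is Step~3, and it cannot be closed, because both of its targets are false here.

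Against (ii), take $n=2$. When one agent is at $x\in[0,o)$ and the other at $y\in(o+L,1]$, output $\bigl(\min\{x,\,s(y-o-L)\},\,y\bigr)$ with $s=\min\{1,\frac{o}{2(1-o-L)}\}$; on same-region profiles use \textsc{TwoExtreme}. This mechanism is anonymous. The left agent faces a clamp of its own report with $b$ fixed, so it cannot gain. The right agent at $y$ who reports $y'$ pays $|y-y'|+ky'+(1-k)\phi(y')$, where $\phi(y')=\min\{x,s(y'-o-L)\}$ is nondecreasing and $s$-Lipschitz with $s\le 1$. Hence its cost rises by at least $(1+k)(y'-y)$ when $y'>y$, and by at least $(1-k)(1-s)(y-y')\ge 0$ when $y'<y$. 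Yet for $x>o/2$ the first coordinate equals $s(y-o-L)$ and varies with $y$. Every $2$-dimensional generalized median would instead output $\text{med}(x,0,b_1,b_2,b_3)$ there, which does not depend on $y$.

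Against (i) plus Moulin, take $n=1$ with the agent at $x$ in the left region. Fix $t\in(0,o)$ and output $(0,1)$ if $x\le (1-k)t/2$ and $(t,1)$ otherwise. This gives the agent its favourite point of the menu $\{(0,1),(t,1)\}$ (the costs are $x+k$ versus $|x-t|+k-kt$), so it is strategyproof. But its first coordinate jumps, while $\text{med}(x,b_1,b_2)$ is continuous in $x$.

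So the ``only if'' half of the corollary fails for this problem, and your fallback, which is exactly the paper's argument, does not rescue it. Theorem~\ref{thm:sing} shows only that each agent's preference lies inside the multidimensional single-peaked domain. Such an inclusion transfers sufficiency to a subdomain but not necessity. The necessity proofs of Barber\`a et al.\ and of Moulin use the whole domain, with arbitrary peaks and arbitrary trade-offs, both across coordinates and across the two sides of a peak. Here the peaks lie on two edges of $D$, and every agent trades the two coordinates at the fixed rates $1\pm k$ and $1-k$. That rigidity is what the two examples exploit: cross-coordinate compensation in the first, and a non-interval range in the second. What your proposal does prove is the sufficiency direction. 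A necessity statement would need additional hypotheses (for instance, ones excluding non-interval ranges and cross-coordinate compensation) or a richer preference domain.
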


\begin{corollary}\label{cor:lower}
    Let $f$ be a deterministic anonymous strategyproof mechanism. Consider a profile $\mathbf{x}$ for which $f$ outputs facility locations $(a, b) = (a_0, b_0)$. If an agent $i \in N_1$ moves closer to or away from $a_0$ (or $i \in N_2$ moves closer to or away from $b_0$) without crossing it, then the output $(a, b)$ remains unchanged. 
\end{corollary}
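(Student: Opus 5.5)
We derive Corollary~\ref{cor:lower} from the structure given by Corollary~\ref{cor:cha}. Since $f$ is deterministic, anonymous and strategyproof, it is a $2$-dimensional generalized median mechanism. So there are constants $\alpha_1,\ldots,\alpha_{n+1}$ and $\beta_1,\ldots,\beta_{n+1}$ in $\mathbb R\cup\{-\infty,+\infty\}$ such that
$$a=\text{med}(p^1_1,\ldots,p^1_n,\alpha_1,\ldots,\alpha_{n+1}),\qquad b=\text{med}(p^2_1,\ldots,p^2_n,\beta_1,\ldots,\beta_{n+1}).$$
Here $p_i=(p^1_i,p^2_i)$ is agent $i$'s peak from the proof of Theorem~\ref{thm:sing}. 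Concretely, $p_i=(x_i,1)$ for $i\in N_1$ and $p_j=(0,x_j)$ for $j\in N_2$.

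Take $i\in N_1$ and move it from $x_i$ to $x_i'$, staying in $[0,o)$ and on the same side of $a_0$ as before. We treat each coordinate of the peak separately.
\begin{itemize}
\item The second coordinate of $p_i$ is $1$ both before and after the move. Hence the multiset defining $b$ does not change, and $b=b_0$.
\item The first coordinate changes from $x_i$ to $x_i'$, with both values on the same side of the median $a_0$. Suppose first that $x_i<a_0$ and $x_i'<a_0$. The number of entries strictly below $a_0$, at $a_0$, and strictly above $a_0$ is then unchanged. The median of a multiset of odd size $2n+1$ depends only on the value at rank $n+1$, and $a_0$ still has that rank, so the median is still $a_0$. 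The case $x_i>a_0$, $x_i'>a_0$ is symmetric.
\end{itemize}
The case $i\in N_2$ is the mirror image. Its first peak coordinate is fixed at $0$, so $a$ is unchanged, and its second coordinate $x_j$ stays on the same side of $b_0$, so $b=b_0$ by the same rank-counting argument.

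A short rank-preservation lemma makes this precise: moving one entry of a multiset without crossing its median leaves the median unchanged. I would state it for entries strictly on one side of the median, which is what ``without crossing it'' means.

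The main obstacle is the boundary case where $x_i$ or $x_i'$ equals $a_0$. Moving from exactly $a_0$ to one side still keeps $a_0$ as the median. The argument is that the entry at $a_0$ was either the median itself or tied with it, so at least $n$ other entries lie on each side of $a_0$, counting ties appropriately. That should be checked carefully, or such cases excluded by reading ``without crossing'' as covering them.
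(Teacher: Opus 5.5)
Your main argument is the right one, and it matches how the paper obtains this statement. The paper gives no separate proof; it presents the corollary as an immediate consequence of the generalized-median characterization in Corollary~\ref{cor:cha}. Your coordinate-wise rank count is exactly that consequence: the second coordinate of an $N_1$ peak is frozen at $1$, and moving the first coordinate within the open side $\{x<a_0\}$ or $\{x>a_0\}$ leaves unchanged the numbers of entries below, at, and above $a_0$.

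The gap is in your boundary discussion, where you assert the wrong direction. The claim that an agent sitting exactly at $a_0$ can move to one side while $a_0$ stays the median is false. Take \textsc{TwoExtreme}, which is deterministic, anonymous and strategyproof, and has $a_0=x_r$. If the agent at $x_r$ is the unique rightmost agent of $N_1$ and moves left, the left endpoint moves with it. Your counting shows only that at least $n$ of the other $2n$ entries are $\ge a_0$. After the move you need $n+1$ such entries for $a_0$ to keep rank $n+1$, so you are one short unless another entry ties at $a_0$. This case cannot be proved; the statement has to be read with the agent \emph{starting} strictly on one side of $a_0$.

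Conversely, the endpoint case that your strict-sided lemma leaves out is the one that must be included. This is an agent moving from $x_i<a_0$ onto $x_i'=a_0$, or from $x_i>a_0$ onto $a_0$. It is exactly what the paper uses in Proposition~\ref{thm:lower}, where $x_l$ is moved from $0$ to $a$, $x_r$ from $o^-$ to $a$, and similarly $y_l,y_r$ to $b$. The general-$L$ lower bound uses the same kind of move. This case does hold. If $x_i<a_0$ moves to $a_0$, the number of entries strictly below $a_0$ drops by one, so it is still at most $n$. The number of entries at most $a_0$ is unchanged, so it is still at least $n+1$. Hence the rank-$(n+1)$ entry is still $a_0$, and the case from above is symmetric.

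State your lemma as ``start strictly on one side of the median, end anywhere on the same closed side,'' and drop the false claim about starting at $a_0$. With that change the proof is complete. Applying the lemma one agent at a time then also justifies the simultaneous independent moves used in the paper's 16-profile construction.
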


\section{Social Cost}\label{sec:soc}

In this section, we consider the social cost. We start with some necessary notations. 
For each point $x\in[0,o)$, define $N_{1l}(x)=\{i\in N_1|x_i\le x\}$ to be the set of agents in $N_1$ on the left of $x$, and  $N_{1r}(x)=\{i\in N_1|x_i> x\}$ to be the set of agents in $N_1$ on the right of $x$. For each point $x\in (o,1]$, define $N_{2l}(x)=\{i\in N_2|x_i< x\}$ to be the set of agents in $N_2$ on the left of $x$, and  $N_{2r}(x)=\{i\in N_2|x_i\ge  x\}$ to be the set of agents in $N_2$ on the right of $x$. 

We consider a mechanism that determines the location of the two endpoints separately. First, the mechanism fixes $b$ and moves $a$ from  $0$ towards the obstacle as long as the social cost is decreasing. Second, it moves $b$ from  $1$ towards the obstacle as long as the social cost is decreasing. 

\begin{mechanism}[\textsc{OptSocCost}]\label{mec:socopt}
   Given location profile $\mathbf x$, let $X_L$ be the set of all points $x\in[0,o)$ that satisfy  \begin{equation*}\label{eq:mm}
       |N_{1l}(x)\cup N_2|\cdot (1-k)< |N_{1r}(x)|\cdot (1+k),
   \end{equation*}
   and let $a^*=\sup X_L$ if $X_L$ is non-empty and $a^*=0$ otherwise.
Let $X_R$ be the set of all points $x\in(o,1]$ that satisfy  $$|N_{2r}(x)\cup N_1|\cdot (1-k)< |N_{2l}(x)|\cdot (1+k),$$
and let $b^*=\inf X_R$ if $X_R$ is non-empty and $b^*=1$ otherwise.
Return $(a^*,b^*)$.
\end{mechanism}

Note that for every $x\in [0,o)$ and fixed $b$, if we move the left endpoint $a$ from $x$ to $x+\epsilon$ for some sufficiently small value $\epsilon>0$, the cost of the agents in $N_{1l}(x)\cup N_2$ increases by $(1-k)\epsilon$, and the cost of the agents in $N_{1r}(x)$ decreases by $(1+k)\epsilon$. Thus, if we move $a$ from  $0$ towards the obstacle as long as the social cost is decreasing, it approaches the supremum of $X_L$. Symmetrically, we move $b$ towards the obstacle and it approaches the infimum of $X_R$. 
Since the supremum of $X_L$ and  the infimum of $X_R$ can only be the agent locations, the mechanism is in polynomial time. 

{When $k=0$, \textsc{OptSocCost} is exactly the median mechanism that selects the $x$-coordinate (resp. $y$-coordinate) of the outcome to be the median of the $x$-coordinates (resp. $y$-coordinates) of $n$ agent peaks. 
}



\begin{theorem}
    Mechanism \ref{mec:socopt} is group strategyproof and optimal for the social cost.
\end{theorem}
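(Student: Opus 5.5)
The plan has two parts, optimality first and then group strategyproofness.

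\emph{Optimality.} Write the social cost as a separable function:
\[
SC(a,b,\mathbf x)=F(a)+G(b),\qquad
F(a)=\sum_{i\in N_1}|x_i-a|+(|N_2|-k n)\,a,\qquad
G(b)=\sum_{j\in N_2}|x_j-b|+(kn-|N_1|)\,b,
\]
up to an additive constant. The right derivative of $F$ at $a$ is
\[
|N_{1l}(a)|-|N_{1r}(a)|+|N_2|-kn=|N_{1l}(a)\cup N_2|(1-k)-|N_{1r}(a)|(1+k).
\]
This expression is nondecreasing in $a$, so $F$ is convex and piecewise linear with breakpoints at agent locations. Hence $F$ is minimized on $[0,o)$ at the first point where the right derivative becomes nonnegative, which is exactly $\sup X_L$ (or $0$ if $X_L=\emptyset$). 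The same argument applied to $G$ shows it is minimized at $\inf X_R$ (or $1$ if $X_R=\emptyset$). Separability then gives that $(a^*,b^*)$ minimizes $SC$.

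One boundary issue needs care. $X_L$ is a union of half-open intervals, so its supremum is either an agent location in $[0,o)$ or $o$ itself. Since the domain is $[0,o)$, I would note that $\sup X_L<o$ whenever some $N_1$ agent makes the inequality fail before $o$. Otherwise the infimum approaches $o$ and is attained only in the limit, as in the paper's remark about $\epsilon$. I will treat this the way the paper does and not dwell on it.

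\emph{Group strategyproofness.} This is the main step. The key structural fact is that $a^*$ depends only on the reports of $N_1$ and on the cardinalities $|N_1|,|N_2|$. Since agents cannot switch regions, these cardinalities are fixed. Symmetrically, $b^*$ depends only on $N_2$'s reports. Concretely, $a^*$ is the $t$-th smallest $N_1$ location for a fixed index $t$ determined by $n,|N_2|,k$, or $0$ if $t=0$. The threshold condition $|N_{1r}(x)|>\frac{(1-k)n}{2}-\ldots$ is a condition on counts only, so $a^*$ is an order statistic of $N_1$'s reports (possibly clipped to $0$). Likewise, $b^*$ is an order statistic of $N_2$'s reports.

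Now take a coalition $S$ and suppose every member strictly gains. An agent $i\in S\cap N_1$ has cost $|x_i-a|+(1-k)(1-b)+\text{const}$ in $b$ and $a$. Since $b$ is controlled only by $N_2$ reports, there are two cases to rule out.

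\emph{Case 1: the deviation changes $a^*$.} By the standard order-statistic argument, $a^*$ moves away from every agent on one side. If $a^*$ increases, it must be that some $S\cap N_1$ agent with $x_i\le a^*$ reported to the right. That agent's $|x_i-a|$ increases. Its benefit from $b$ would require $b$ to increase toward $1$, which can only come from $N_2$ members. In turn, the $N_2$ members whose deviation moved $b^*$ are hurt in their $|x_j-b|$ term, and they gain from $a$ only if $a$ moves toward $0$.

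\emph{Combining.} A cleaner route is to use Theorem~\ref{thm:sing}. Each agent's peak has the form $(x_i,1)$ or $(0,x_j)$. The outcome is coordinatewise an order statistic (generalized median) of the peak coordinates, namely the $x$-coordinates $x_i$ for $N_1$ and $0$ for $N_2$, with the $y$-coordinates handled similarly. The crucial point is that $N_2$ agents have $x$-peak $0\le$ everything, so their reports of the $x$-coordinate are effectively fixed, and likewise for $N_1$ in $y$.

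It therefore suffices to show the following: in each coordinate, the one-dimensional generalized median with fixed phantoms is group strategyproof, in the sense that any coalition changing the coordinate contains an agent who weakly loses in that coordinate. Moreover, I claim that the coalition members who move $a^*$ are $N_1$ agents who lose in $a$ and cannot gain in $b$, since they do not influence $b$ and $b$ enters their cost monotonically. So the $a$-coordinate loser is hurt in $a$. Is that loser compensated by $b$? Here I would argue that $b$ moving closer to $1$ requires $N_2$ agents with $x_j\ge b^*$ to report left... I expect this interplay to be the obstacle. I would first try to show it away directly: any coalition improving both coordinates must contain a $N_1$ agent whose $a$-term worsens by the deviation while $b$ weakly worsens for it. I would do this by taking the extreme agent (the one whose misreport crossed $a^*$) and comparing magnitudes via $1-k>0$. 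If that fails, I would fall back on a weaker per-coordinate argument.
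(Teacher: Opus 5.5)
Your optimality argument is correct. The decomposition $SC=F(a)+G(b)+|N_1|$, with $F$ and $G$ convex and piecewise linear, is a cleaner packaging of the paper's local-exchange argument. The boundary worry is also moot: $N_{1r}(x)=\emptyset$ for $x\ge x_r$, so $X_L\subseteq[0,x_r)$ and $a^*\le x_r<o$.

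The group strategyproofness part, however, is not a proof, and the gap is exactly the interplay you flagged. The per-coordinate property you reduce to (a coalition that moves $a$ contains an agent who weakly loses in the $a$-term) does not suffice, because that agent may be compensated through $b$. Sign considerations settle every direction pattern except one:
\begin{itemize}
\item if $a'>a$ and $b'\le b$, a deviating agent with $x_i\le a$ loses in both terms;
\item if $a'>a$ and $b'>b$, a deviating $N_2$ agent with $x_j\le b$ loses in both;
\item if $a'<a$ and $b'\le b$, a deviating agent with $x_i\ge a$ loses in both.
\end{itemize}
When $a'<a$ and $b'>b$ (both endpoints move away from the obstacle), every $N_1$ agent gains $(1-k)(b'-b)$ through $b$, and every $N_2$ agent gains $(1-k)(a-a')$ through $a$. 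So the statement you intend to establish --- some $N_1$ agent whose $a$-term worsens while $b$ weakly worsens for it --- is false in this case. Your fallback per-coordinate argument cannot rule out such a two-sided trade either.

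The missing idea is to decide which side's pivotal agent to examine by comparing displacements, as the paper does. Assume w.l.o.g.\ $|a-a'|\ge|b-b'|$; otherwise argue symmetrically on $N_2$. Use the correct cost $|x_i-a|-ka+(1-k)(1-b)+k$; your expression drops the $-ka$ term, and it matters. Since $a^*$ is a fixed-rank order statistic of the $N_1$ reports, there are two cases:
\begin{itemize}
\item If $a'<a$, some agent with $x_i\ge a$ must be in $S$, and its cost changes by $(1+k)(a-a')-(1-k)(b'-b)\ge 0$.
\item If $a'>a$, some agent with $x_i\le a$ must be in $S$, and its cost changes by at least $(1-k)(a'-a)-(1-k)|b'-b|\ge 0$.
\end{itemize}
The second comparison is tight. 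Moving $a$ rightward hurts such an agent at rate exactly $1-k$, the same rate at which moving $b$ toward $1$ helps it. So $1-k>0$ by itself cannot close the argument; the assumption $|a-a'|\ge|b-b'|$ is what does. Equivalently, in the problematic case the two pivotal agents' cost changes sum to $2k[(a-a')+(b'-b)]\ge 0$, so one of them does not strictly gain.
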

\begin{proof}
For the group strategyproofness, we consider a group of agents $S\subseteq N_1\cup N_2$. Let $f(\mathbf x)=(a,b)$ be the outcome when all agents report true locations, and $f(\mathbf x_{S}',\mathbf x_{-S})=(a',b')$ be the outcome when the agents in $S$ misreport $\mathbf x_{S}'$. Assume w.l.o.g. that $|a-a'|\ge |b-b'|$. We show that at least one agent in the group cannot gain by misreporting. 

If $a'< a$, the agents in $N_{1r}(a)$ cannot gain and they are not in the group. Since the agents in $N_2$ cannot change the location of $a$, by the definition of the mechanism,  an agent located at $a$ must be in the group and misreport a location to the left of $a$.  The cost of this agent decreases by at most $(1-k)|b'-b|-(1+k)(a-a')\le 0$, indicating that this agent can never gain. 

If $a'> a$, by the mechanism, there  exists at least one agent in $N_{1l}(a)$ and in group $S$ who misreport the location to the right of $a$. However, the cost of such an agent in $N_{1l}(a)\cap S$ will decrease by at most $(1-k)|b'-b|-(1-k)(a'-a)\le 0$, and the agent cannot gain by misreporting. 

For the optimality, let $(a^*,b^*)$ be the solution returned by the mechanism, and let $(a,b)$ be an optimal solution.  Assume w.l.o.g. that $|a-a^*|\ge |b-b^*|$.
If $a<a^*$, we consider a new solution $(a+\epsilon,b)$, where $\epsilon>0$ is a sufficiently small number so that there is no agent in the interval $(a,a+\epsilon)$. Compared with solution $(a,b)$, each agent in $N_{1l}(a)\cup N_2$ increases the cost by $(1-k)\epsilon$, and each agent in $N_{1r}(a)$ decreases the cost by $(1+k)\epsilon$. By the definition of $a^*$ and the fact that $a<a^*$, we have 
$|N_{1l}(a)\cup N_2|\cdot (1-k)< |N_{1r}(a)|\cdot (1+k)$, indicating that the social cost of $(a,b)$ is larger than that of $(a+\epsilon,b)$, which contradicts the optimality. 

If $a>a^*$, we change the solution $(a,b)$ to $(a^*,b)$.  Each agent in $N_{1l}(a^*)\cup N_2$ decreases the cost by $(1-k)(a-a^*)$, and the increase of cost for every agent in $N_{1r}(a^*)$ is at most $(1+k)(a-a^*)$. By the definition of $a^*=\sup X_L$ and the fact that $X_L$ is an open set, we have 
$|N_{1l}(a^*)\cup N_2|\cdot (1-k)\ge  |N_{1r}(a^*)|\cdot (1+k)$.
It indicates that the social cost of $(a^*,b)$ is no more than that of $(a,b)$, and $(a^*,b)$ is also optimal. 

Then we change the solution $(a^*,b)$ to $(a^*,b^*)$. By a symmetric analysis, we know that $b>b^*$ is impossible, and if $b\le b^*$,   the social cost of $(a^*,b^*)$ is no more than that of $(a^*,b)$. Hence, $(a^*,b^*)$ is optimal. 
\end{proof}


\section{Maximum Cost}\label{sec:max}

In this section, we consider the maximum cost. We first characterize the unique optimal solution and show that the mechanism that returns the optimal solution is not strategyproof. 
We then study deterministic and randomized mechanisms. 

\subsection{The Optimal Maximum Cost}\label{sec:maxopt}

Given location profile $\mathbf x$, let $x_l=\min\{x_i|i\in N_1\}$ and $x_r=\max\{x_i|i\in N_1\}$ be the two extreme agent locations to the left, and let $y_l=\min\{x_j|j\in N_2\}$ and $y_r=\max\{x_j|j\in N_2\}$ be the two extreme agent locations to the right.

\begin{lemma}\label{lem:opt1}
    In any optimal solution $(a,b)$ for maximum cost, we have $a\le \frac{x_l+x_r}{2}$ and $b\ge \frac{y_l+y_r}{2}$.
\end{lemma}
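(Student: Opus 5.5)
The plan is a direct exchange argument. Take any optimal solution $(a,b)$, suppose $a>\frac{x_l+x_r}{2}$, and show that moving the left endpoint to $a'=\frac{x_l+x_r}{2}$ strictly decreases the maximum cost. This contradicts optimality. The claim for $b$ then follows by the mirror-image argument, moving $b$ to $\frac{y_l+y_r}{2}$. We assume $N_1$ and $N_2$ are both nonempty so that $x_l,x_r,y_l,y_r$ are defined. Since $k\in[0,1)$, we have $1-k>0$, and the whole argument rests on this.

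First I will handle the agents in $N_2$. For $j\in N_2$ the cost is $|x_j-b|+k(b-a)+a=|x_j-b|+kb+(1-k)a$. With $b$ fixed, replacing $a$ by $a'<a$ lowers every such cost by exactly $(1-k)(a-a')>0$. So the maximum over $N_2$ strictly decreases. Note that $a'=\frac{x_l+x_r}{2}\in[0,o)$, so $(a',b)$ is feasible.

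Next I will handle the agents in $N_1$. For $i\in N_1$ the cost is $|x_i-a|+(1-k)b+1-kb-\dots$; more usefully, it equals $|x_i-a|-ka+kb+1-b$, so only the term $|x_i-a|-ka$ depends on $a$. The maximum over $N_1$ is governed by $\max_{i\in N_1}|x_i-a|$, which is attained at $x_l$ or $x_r$.
\begin{itemize}
\item When $a\ge\frac{x_l+x_r}{2}$, the farthest agent is $x_l$. The maximum over $N_1$ is then $a-x_l-ka+kb+1-b=(1-k)a-x_l+kb+1-b$, which is strictly increasing in $a$.
\item At $a'=\frac{x_l+x_r}{2}$, the maximum over $N_1$ is still attained at $x_l$ (tied with $x_r$). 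It equals $(1-k)a'-x_l+kb+1-b$, which is smaller than the value at $a$ by $(1-k)(a-a')>0$.
\end{itemize}
Some agents in $N_1$ to the right of $a'$ individually get worse. This does not matter, because none of them exceeds the new cost of $x_l$.

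Combining the two parts, both the $N_1$-maximum and the $N_2$-maximum strictly decrease, so $MC(a',b,\mathbf x)<MC(a,b,\mathbf x)$, contradicting optimality. The symmetric argument for $b$ uses $b'=\frac{y_l+y_r}{2}$ and the cost form $(1-k)(1-b)$ for $N_1$ agents.

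The only delicate step is the $N_1$ part: one must argue about the maximum over $N_1$ rather than individual costs, and use the identity $\max_i|x_i-a|=\max(a-x_l,\,x_r-a)$ on both sides of the midpoint. Everything else is a routine one-line computation.
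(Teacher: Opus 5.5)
Your proof is correct and is essentially the paper's argument: assuming $a>\frac{x_l+x_r}{2}$, shifting $a$ to the left lowers the cost of $x_l$ (the maximizer over $N_1$) and of every agent in $N_2$ at rate $1-k>0$, which contradicts optimality, and the claim for $b$ follows by symmetry. The only difference is that the paper moves $a$ by a small $\epsilon<a-\frac{x_l+x_r}{2}$, whereas you jump straight to the midpoint; this is equally valid because $x_l$ remains a maximizer over $N_1$ all the way down to the midpoint (just delete the aborted expression $|x_i-a|+(1-k)b+1-kb-\dots$).
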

\begin{proof}
    Suppose that $a> \frac{x_l+x_r}{2}$, which indicates that $cost(a,b,x_r)< cost(a,b,x_l)$. Clearly, the maximum cost is attained by $x_l$ or $y_l$ or $y_r$.  We consider a solution $(a-\epsilon,b)$ that moves the left endpoint to a sufficiently small positive value $\epsilon<a-\frac{x_l+x_r}{2}$. Then the cost of the agents at $x_l,y_l,y_r$ decreases, and thus the maximum cost decreases, which contradicts the optimality. Therefore, it must be $a\le \frac{x_l+x_r}{2}$. By a symmetric analysis, we can prove 
    $b\ge \frac{y_l+y_r}{2}$.
\end{proof}

\begin{lemma}\label{lem:opt2}
    In any optimal solution $(a,b)$ for maximum cost, either $a= \frac{x_l+x_r}{2}$, or $b= \frac{y_l+y_r}{2}$.
\end{lemma}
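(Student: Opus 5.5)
Suppose for contradiction that $(a,b)$ is optimal but $a\neq \frac{x_l+x_r}{2}$ and $b\neq \frac{y_l+y_r}{2}$. By Lemma~\ref{lem:opt1} this means $a<\frac{x_l+x_r}{2}$ and $b>\frac{y_l+y_r}{2}$. The plan is to exhibit a perturbation $(a+\epsilon,b-\delta)$ that strictly lowers the maximum cost, which contradicts optimality. The key observation is that, with $a$ strictly left of the midpoint of $[x_l,x_r]$, the maximum cost among $N_1$ is attained only by the agent at $x_r$. Moreover $a<x_r$, since otherwise $a\ge x_r\ge \frac{x_l+x_r}{2}$. Symmetrically, the maximum cost among $N_2$ is attained only by the agent at $y_l$, and $b>y_l$. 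So only two agents matter, and for small perturbations the maximum over all other agents stays strictly below the current value.

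Next I compute how the two relevant costs change under $a\to a+\epsilon$, $b\to b-\delta$ with small $\epsilon,\delta>0$, keeping $a+\epsilon\le x_r$ and $b-\delta\ge y_l$. For $x_r$,
\[
cost=(x_r-a)+k(b-a)+(1-b),
\]
which changes by $-(1+k)\epsilon+(1-k)\delta$. For $y_l$,
\[
cost=(b-y_l)+k(b-a)+a,
\]
which changes by $-(1+k)\delta+(1-k)\epsilon$. Taking $\epsilon=\delta$, both changes equal $-2k\epsilon$. This is strictly negative when $k>0$, and then we are done.

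The main obstacle is the case $k=0$, and I also want a choice that works for every $k\in[0,1)$. Since $\frac{1-k}{1+k}<1$, I pick the ratio $\delta/\epsilon=r$ to satisfy $\frac{1-k}{1+k}<r<\frac{1+k}{1-k}$; for example $r=1$ works when $k>0$. At $k=0$ this interval is empty, so the equal-move argument gives zero change, and I need a different handle.

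For $k=0$ I would argue via the two objective values separately. Set $A=\max_{N_1}cost$ and $B=\max_{N_2}cost$; whichever is smaller can be left alone. If $A>B$, move only $a$ to the right. This decreases $A$ by $\epsilon$, increases $B$ by $\epsilon$ (at $k=0$ the coefficient $1-k$ is $1$), and leaves a strict improvement for small $\epsilon$. The case $B>A$ is symmetric, moving only $b$ to the left.

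In general, then, the argument splits into three cases. If $A>B$, move $a$ alone, which reduces $A$ by $(1+k)\epsilon$ and raises $B$ by at most $(1-k)\epsilon$. If $B>A$, move $b$ alone. If $A=B$, move both with $\epsilon=\delta$, which needs $k>0$. This leaves the degenerate case $k=0$ with $A=B$, which I expect to be the real difficulty. There the maximum cost cannot be strictly reduced, so the statement would have to be read as asserting that optimal solutions can be moved, keeping the same cost, to one with $a=\frac{x_l+x_r}{2}$ or $b=\frac{y_l+y_r}{2}$; alternatively the paper may implicitly assume $k>0$. I would flag this case and handle it by sliding both endpoints at equal rates until one of them reaches its midpoint, which preserves optimality.
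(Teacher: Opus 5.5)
Your argument is essentially the paper's. Lemma~\ref{lem:opt1} gives the strict inequalities, the maximum is then attained only at $x_r$ or $y_l$, and moving to $(a+\epsilon,b-\epsilon)$ lowers both of those costs by $2k\epsilon$; your $A>B$ / $B>A$ case split is not needed when $k>0$.

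Your concern about $k=0$ is justified, and the paper's proof has the same silent dependence on $k>0$. At $k=0$ the two relevant costs sum to the constant $1+x_r-y_l$, so the lemma (and the uniqueness claim in Theorem~\ref{thm:opt}) genuinely fails. For example, with $\mathbf x=(0,0.2,0.8,1)$ and $o=0.5$, the edge $(0.05,0.95)$ has maximum cost $0.2$, equal to the optimum attained at $(0.1,0.9)$. So only your weakened ``slide to a midpoint'' version holds there.
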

\begin{proof}
    Suppose that $a\neq \frac{x_l+x_r}{2}$ and $b\neq \frac{y_l+y_r}{2}$. By Lemma \ref{lem:opt1}, we have $a< \frac{x_l+x_r}{2}$ and $b> \frac{y_l+y_r}{2}$. It is easy to see that the maximum cost is attained by either $x_r$ or $y_l$.
  We consider a solution $(a+\epsilon,b-\epsilon)$ that moves the two endpoints towards the obstacle by a sufficiently small value $\epsilon>0$.  
  The cost of both agents at $x_r$ and $y_l$ decreases by $(1+k)\epsilon-(1-k)\epsilon=2k\epsilon$, and thus the maximum cost decreases, which contradicts the optimality. Therefore, it must be $a= \frac{x_l+x_r}{2}$, or $b= \frac{y_l+y_r}{2}$, or both hold. 
\end{proof}

Now we describe the algorithm \textsc{OptMaxCost} to derive the optimal maximum cost. 
\begin{mechanism}
    (\textsc{OptMaxCost}) Given location profile $\mathbf x$,  
 if $1-y_r\ge x_l$, define $a^*= \frac{x_l+x_r}{2}$ and $b^*= \frac{y_l-x_l}{2}+\frac12$.
     If $1-y_r<x_l$, define $a ^*= \frac{x_r-y_r}{2}+\frac12$ and $b^*=\frac{y_l+y_r}{2}$. Return $(a^*,b^*)$.
\end{mechanism}



Notice that when $1-y_r=x_l$, the mechanism simply returns the two midpoints $(\frac{x_l+x_r}{2},\frac{y_l+y_r}{2})$.

\begin{theorem}\label{thm:opt}
   \textsc{OptMaxCost} returns the unique optimal solution for maximum cost. In addition, the optimal maximum cost is attained by both the agents located at $x_r$ and $y_l$. 
\end{theorem}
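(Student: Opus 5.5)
The plan is to use Lemmas~\ref{lem:opt1} and~\ref{lem:opt2} to reduce the two-dimensional problem to two one-parameter families. On each family the maximum cost is the maximum of one decreasing and one increasing linear function, so the minimizer is where the two are equal.

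\textbf{Reduction to two extreme agents.} By Lemma~\ref{lem:opt1}, every optimal $(a,b)$ satisfies $a\le \frac{x_l+x_r}{2}$ and $b\ge \frac{y_l+y_r}{2}$. In this region, among the agents in $N_1$ the largest cost is at $x_r$, namely
\[
C_1(a,b)=x_r-a+k(b-a)+1-b .
\]
Among the agents in $N_2$ the largest cost is at $y_l$, namely
\[
C_2(a,b)=b-y_l+k(b-a)+a .
\]
Hence $MC(a,b,\mathbf x)=\max(C_1,C_2)$ on this region. (When $a=\frac{x_l+x_r}{2}$ the agent at $x_l$ ties with $x_r$, and when $b=\frac{y_l+y_r}{2}$ the agent at $y_r$ ties with $y_l$.) By Lemma~\ref{lem:opt2}, it suffices to minimize over the family $F_1=\{a=\frac{x_l+x_r}{2},\ b\ge \frac{y_l+y_r}{2}\}$ and the family $F_2=\{b=\frac{y_l+y_r}{2},\ a\le\frac{x_l+x_r}{2}\}$.

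\textbf{Minimizing on each family.} On $F_1$, $C_1$ is strictly decreasing in $b$ (slope $-(1-k)<0$) and $C_2$ is strictly increasing in $b$ (slope $1+k>0$). So $\max(C_1,C_2)$ has a unique minimizer: the crossing point if it lies in the allowed range, and otherwise the boundary $b=\frac{y_l+y_r}{2}$. Solving $C_1=C_2$ with $a=\frac{x_l+x_r}{2}$ gives $2b=x_r+1+y_l-2a$, that is, $b=\frac{y_l-x_l}{2}+\frac12$. This satisfies $b\ge\frac{y_l+y_r}{2}$ if and only if $1-y_r\ge x_l$. Symmetrically, on $F_2$ the crossing point is $a=\frac{x_r-y_r}{2}+\frac12$, which satisfies $a\le\frac{x_l+x_r}{2}$ if and only if $1-y_r\le x_l$.

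\textbf{Comparing the families.} Suppose $1-y_r>x_l$. On $F_2$ the crossing point violates $a\le\frac{x_l+x_r}{2}$, so there $C_2>C_1$ is impossible to balance and the best point of $F_2$ is its corner $(\frac{x_l+x_r}{2},\frac{y_l+y_r}{2})$. That corner lies in $F_1$ and is strictly worse than the balanced point of $F_1$. Hence the unique optimum is $(a^*,b^*)$ as defined in \textsc{OptMaxCost}, and at it $C_1=C_2$, so the maximum cost is attained by both $x_r$ and $y_l$. The case $1-y_r<x_l$ is symmetric. When $1-y_r=x_l$, both crossing points equal the corner of the two midpoints.

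\textbf{Expected main obstacle.} The delicate part is the boundary and feasibility bookkeeping. First, I must confirm that when a crossing point is infeasible the monotonicity really makes the corner the unique minimizer of that family, and that the corner is strictly dominated by the other family's balanced point, which gives uniqueness. Second, I must check that the returned point lies in the outcome domain $D$, in particular $o+L<b^*\le 1$ and $0\le a^*<o$. I would first try to derive these from $x_l\le x_r<o$ and $o+L<y_l\le y_r\le 1$. If that fails, I would argue directly that the constrained optimum is still pinned by Lemma~\ref{lem:opt2} together with the same monotonicity argument.
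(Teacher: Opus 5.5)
Your argument is correct and is essentially the paper's proof. Lemma~\ref{lem:opt2} splits the optimum into two families: the family $a=\frac{x_l+x_r}{2}$, where the opposite monotonicity in $b$ of the costs at $x_r$ and $y_l$ pins down $b^*$, and the family $b=\frac{y_l+y_r}{2}$, where the cost at $x_r$ forces the corner, which is then dominated. The feasibility check you flag is immediate: for $1-y_r\ge x_l$ we have $0\le a^*<o$ and $o+L<y_l\le\frac{y_l+y_r}{2}\le b^*=\frac{1+y_l-x_l}{2}\le 1$.

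One caveat you share with the paper: uniqueness rests on Lemma~\ref{lem:opt2}, whose strict gain $2k\epsilon$ vanishes at $k=0$, so the uniqueness part needs $k>0$. Indeed, for $k=0$ the maximum cost on your region depends only on $a+b$, and the optimum is generally not unique; for example, with $o=\frac12$, $L=0$, $\mathbf x=(0,0.2,0.8,0.9)$, both $(0.1,0.9)$ and $(0.05,0.95)$ have maximum cost $0.2$.
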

\begin{proof}
We focus only on the case when $1-y_r\ge x_l$, as the other case is symmetric. The mechanism returns $(a^*,b^*)=(\frac{x_l+x_r}{2},\frac{y_l-x_l}{2}+\frac12$).  The maximum cost under $(a^*,b^*)$ is attained by the agents at $x_l$, $x_r$ and $y_l$, that is,  
\begin{align*}
cost(a^*,b^*,x_l)&=\frac{x_r-x_l}{2}+k(b^*-a^*)+\frac12-\frac{y_l-x_l}{2}    \\
&= \frac12-\frac{y_l+x_l}{2}+k(b^*-a^*)+\frac{x_l+x_r}{2} = cost(a^*,b^*,y_l),
\end{align*}
which is no less than $cost(a^*,b^*,y_r)$ since $b^*\ge \frac{y_l+y_r}{2}$.

Suppose that $(a,b)$ is an optimal solution. By Lemma \ref{lem:opt2}, we have  either $a= \frac{x_l+x_r}{2}$ or $b= \frac{y_l+y_r}{2}$. When $a= \frac{x_l+x_r}{2}$, if $b<b^*$, then the agents at $x_l$ and $x_r$ have a larger cost in $(a,b)$ than that in $(a^*,b^*)$, implying that $(a,b)$ is not optimal for maximum cost. If $b>b^*$, then the agent at $y_l$ has a larger cost in $(a,b)$ than that in $(a^*,b^*)$, and thus $(a,b)$ is not optimal. Therefore, $(a^*,b^*)$ is the unique optimal solution. 

When $b= \frac{y_l+y_r}{2}$, we have $a\le \frac{x_l+x_r}{2}$ by Lemma \ref{lem:opt1}, and the maximum cost induced by $(a,b)$ is at least
\begin{align*}
    cost(a,b,x_r)&=x_r-a+k(\frac{y_l+y_r}{2}-a)+1-\frac{y_l+y_r}{2}\\
    &\ge x_r-(1+k)\frac{x_l+x_r}{2}+1-(1-k)\frac{y_l+y_r}{2}\ge cost(a^*,b^*,y_l),
\end{align*}
where the equation $cost(a,b,x_r)=cost(a^*,b^*,y_l)$ holds only if $1-y_r= x_l$ and $a=\frac{x_l+x_r}{2}$, that is, $(a,b)=(a^*,b^*)$. Therefore, $(a^*,b^*)$ is the unique optimal solution. 
\end{proof}

However, \textsc{OptMaxCost} is not strategyproof. Consider a location profile $\mathbf x=(0,0.2,0.8,1)$, and the obstacle is at $0.5$ with $L=0$.  The mechanism returns the two midpoints $(\frac{x_l+x_r}{2},\frac{y_l+y_r}{2})=(0.1,0.9)$, and the cost of the agent at $0.2$ is $0.2+0.8k$. If this agent misreports the location as $0.4$, then the outcome of the mechanism becomes $(0.2,0.9)$, and the cost of this agent with true location $0.2$ decreases to $0.1+0.7k<0.2+0.8k$.


\subsection{Deterministic Mechanisms}\label{sec:maxdet}

We consider designing deterministic strategyproof mechanisms with good performance guarantees. {We first present a simple mechanism that connects the two extreme agent locations $x_r$ and $y_l$ (following the idea that the optimal maximum cost is attained by both $x_r$ and $y_l$). 
We then improve the mechanism by restricting the endpoints of the edge from being too close to the obstacle. 
}

\begin{mechanism}[\textsc{TwoExtreme}]\label{alg:app2}
    Given location profile $\mathbf x$, return $(x_r,y_l)$. 
\end{mechanism}

This mechanism falls in the class of generalized median mechanisms. Because $x_r$ is the rightmost $x$-coordinate of the $n$ peaks and $y_l$ is the leftmost $y$-coordinate of the $n$ peaks, setting $b_1=-\infty$ and $b_2=\cdots=b_{n+1}=+\infty$ gives $x_r=\text{med}(0,\ldots,0,(x_i)_{i\in N_1},b_1,\ldots,b_{n+1})$, and setting  $b_1=\cdots=b_{n}=-\infty$ and $b_{n+1}=+\infty$ gives $y_l=\text{med}(1,\ldots,1,(x_j)_{j\in N_2},b_1,\ldots,b_{n+1})$. 
Thus it is strategyproof by the characterization in Corollary \ref{cor:cha}. 

\begin{theorem}\label{thm:21k}
    Mechanism  \ref{alg:app2} is group strategyproof and $\frac{2 - 2(1 - k)L}{1 + k - (1 - k)L}$-approximation for maximum cost. When $L = 0$, the ratio becomes $\frac{2}{1 + k}$; when $L = 1$, it is 1. 
\end{theorem}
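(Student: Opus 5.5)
The statement has two parts: group strategyproofness and the approximation ratio. I would prove them separately.

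\textbf{Group strategyproofness.} I would mimic the argument used for \textsc{OptSocCost}. Let $(a,b)=(x_r,y_l)$ be the truthful outcome and $(a',b')$ the outcome after a coalition $S$ misreports. By symmetry, assume $|a-a'|\ge|b-b'|$.

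If $a'<x_r$, every agent located at $x_r$ must belong to $S$ and report further left. The cost of such an agent changes by at least
\[
(1+k)(x_r-a')-(1-k)|b'-b|\ge 0,
\]
so that agent does not gain.

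If $a'>x_r$, some agent $i\in N_1\cap S$ with $x_i\le x_r$ must have reported to the right of $x_r$. Its cost changes by
\[
(a'-x_r)-k(a'-x_r)+k(b'-b)-(b'-b)=(1-k)\bigl[(a'-x_r)-(b'-b)\bigr]\ge 0 .
\]
So again that agent does not gain.

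The case $|b-b'|>|a-a'|$ is symmetric, using the agents at $y_l$ and the $N_2$ agents to the left of $y_l$.

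\textbf{Approximation ratio.} Write $M=k(y_l-x_r)+1-y_l+x_r$. The mechanism's maximum cost is
\[
\max\bigl\{\,k(y_l-x_r)+1-y_l+(x_r-x_l),\;\; k(y_l-x_r)+x_r+(y_r-y_l)\,\bigr\}.
\]
Here the first term is the cost of the agent at $x_l$; it dominates the cost of the agent at $x_r$. The second term is the cost of the agent at $y_r$.

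For the benchmark, I would invoke Theorem~\ref{thm:opt}. By symmetry assume $1-y_r\ge x_l$. Then the optimum is $(a^*,b^*)=\bigl(\tfrac{x_l+x_r}{2},\tfrac{y_l-x_l+1}{2}\bigr)$, and the optimal cost equals the cost of the agent at $y_l$:
\[
OPT=(b^*-y_l)+k(b^*-a^*)+a^* .
\]

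The core step is to bound $ALG/OPT$ over all feasible parameters. The constraints are $0\le x_l\le x_r<o$, $o+L<y_l\le y_r\le 1$, $1-y_r\ge x_l$, and in particular $y_l-x_r\ge L$.

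I plan to show that both candidate terms of $ALG$ are at most $\tfrac{2-2(1-k)L}{1+k-(1-k)L}\cdot OPT$. For each term, I would treat the ratio as a function of one variable at a time. Each such ratio is linear-fractional, so it is monotone in that variable, and the worst case moves to an extreme point. I expect the extremes to be $x_l=0$, $y_r=1$, and $y_l-x_r=L$ with the gap placed suitably. At that point $OPT=\tfrac{1+k-(1-k)L}{2}$-type quantities appear and $ALG$ becomes $1-(1-k)L$, which yields the claimed formula.

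\textbf{Sanity checks and tightness.} Setting $L=0$ gives $\tfrac{2}{1+k}$. At $L=1$ the feasible region degenerates and the ratio is $1$. For tightness, I would exhibit the extremal instance: agents at $0$, just left of $o$, just right of $o+L$, and at $1$.

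\textbf{Main obstacle.} I expect the hardest step to be the case analysis showing that the worst case really sits at this vertex. In particular, I need to check that the $y_r$-term never exceeds the $x_l$-term's bound under the assumption $1-y_r\ge x_l$. I would first try to reduce $y_r$ and $x_l$ to the boundary $x_l=1-y_r$, and only then optimize over $x_r$ and $y_l$.
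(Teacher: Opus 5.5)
Your group-strategyproofness argument is correct and matches the paper's. In the case $a'<x_r$, every agent at $x_r$ must deviate, and its cost rises by at least $(1+k)(x_r-a')-(1-k)|b'-b|\ge 0$. In the case $a'>x_r$, some $N_1$ agent must deviate, and its cost rises by $(1-k)[(a'-x_r)-(b'-b)]\ge 0$. The paper phrases this second case as a contradiction (no $N_1$ agent can gain, and $N_2$ agents cannot move $x_r$), which is the same reasoning. Your approximation argument also follows the paper's route: assume $1-y_r\ge x_l$, take $OPT$ from Theorem~\ref{thm:opt}, and push variables to the boundary. As written, however, it stops short of a proof. The worst-case vertex is only ``expected'', and the step you call the main obstacle is left open.

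That step is one line and needs no case analysis or reduction to $x_l=1-y_r$. Under $1-y_r\ge x_l$,
\[
cost(x_r,y_l,x_l)-cost(x_r,y_l,y_r)=1-x_l-y_r\ge 0,
\]
so $ALG$ is just the cost of the agent at $x_l$, and the $y_r$-term never matters. Write $d=y_l-x_r\ge L$. The ratio then depends only on $(x_l,d)$, not on where the gap sits:
\[
\frac{ALG}{OPT}=2\cdot\frac{1-x_l-(1-k)d}{1+k-2kx_l-(1-k)d}.
\]
Two monotonicity facts finish the proof, and they are exactly what the paper's inequality chain uses.

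First, the denominator exceeds the numerator by $k+(1-2k)x_l\ge 0$. So the ratio has the form $\frac{A-t}{B-t}$ with $A\le B$ and $t=(1-k)d$, which is nonincreasing in $d$. Hence it is at most its value at $d=L$.

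Second, $\frac{1-(1-k)L-x_l}{1+k-(1-k)L-2kx_l}\le\frac{1-(1-k)L}{1+k-(1-k)L}$. After cross-multiplying, this reduces to $1+k-(1-k)L\ge 2k\bigl(1-(1-k)L\bigr)$, i.e.\ $(1-k)(1-L+2kL)\ge 0$.

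With these supplied, your sketch becomes the paper's proof. Your tightness instance ($x_l=0$, $x_r\to o^-$, $y_l\to(o+L)^+$, $y_r=1$) does attain the bound, for every $L$.
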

\begin{proof}
   {For the group strategyproofness, we consider a group of agents $S\subseteq N_1\cup N_2$. Let $f(\mathbf x)=(x_r,y_l)$ be the outcome when all agents report true locations, and $f(\mathbf x_{S}',\mathbf x_{-S})=(x_r',y_l')$ be the outcome when the agents in $S$ misreport $\mathbf x_{S}'$. Assume w.l.o.g. that $|x_r-x_r'|\ge |y_l-y_l'|$. If $x_r'< x_r$,  an agent located at $x_r$ must be in the group and misreport a location to the left of $x_r$.  The cost of this agent decreases by at most $(1-k)|y_l'-y_l|-(1+k)(x_r-x_r')\le 0$, indicating that this agent can never gain. If $x_r'> x_r$, the cost of any agent in $N_1$ decreases by at most $(1-k)|y_l'-y_l|-(1-k)(x_r'-x_r)\le 0$, indicating that the agent in $N_1$ can never gain and they are not in the group $S$. However, by the definition of the mechanism, other agents in $N_2$ are not able to induce an outcome $(x_r',y_l')$ with $x_r'\neq x_r$, giving a contradiction. }

    For the approximation, we consider an arbitrary instance with location profile $\mathbf x$.   Assume w.l.o.g. that $1-y_r\ge x_l$. By Theorem \ref{thm:opt}, the optimal solution is $(a^*,b^*)=(\frac{x_l+x_r}{2},\frac{y_l-x_l}{2}+\frac12)$, and the optimal maximum cost is \begin{align*}
        cost(a^*,b^*,x_r)&=cost(a^*,b^*,y_l)=cost(a^*,b^*,x_l)\\
        &=\frac{x_l+x_r}{2}-x_l+k(b^*-\frac{x_l+x_r}{2})+1-b^*\\
        &=\frac12-\frac{y_l-x_r}{2}+k\left(\frac12+\frac{y_l-2x_l-x_r}{2}\right).
    \end{align*}

    Now we consider the solution $(x_r,y_l)$ returned by the mechanism. Clearly, the maximum cost is attained by $x_l$ or $y_r$. Since $1-y_r\ge x_l$, the cost of the agent at $x_l$ is 
    \begin{align*}
        &cost(x_r,y_l,x_l)=x_r-x_l+k(y_l-x_r)+1-y_l
        \ge x_r+k(y_l-x_r)+y_r-y_l=cost(x_r,y_l,y_r),
    \end{align*}
    implying that the maximum cost is attained by $x_l$.
    Further,
    \begin{align*}
            \frac{\mathrm{ALG}}{\mathrm{OPT}}
            &= 2 \cdot \frac{x_r - x_l + k(y_l - x_r) + 1 - y_l}{1 - y_l + x_r + k(1 + y_l - 2x_l - x_r)} \\
            &= 2 \cdot \frac{1 + (1 - k)x_r - x_l - (1 - k)y_l}{1 + (1 - k)x_r + k(1 - 2x_l) - (1 - k)y_l} \\
            &\le 2 \cdot \frac{1 + (1 - k)x_r - x_l - (1 - k)(x_r + L)}{1 + (1 - k)x_r + k(1 - 2x_l) - (1 - k)(x_r + L)} \\
            &= 2 \cdot \frac{1 - x_l - (1 - k)L}{1 + k - 2k x_l - (1 - k)L} \\
            &\le \frac{2 - 2(1 - k)L}{1 + k - (1 - k)L}.
        \end{align*}
    Therefore,  Mechanism  \ref{alg:app2} is $\frac{2 - 2(1 - k)L}{1 + k - (1 - k)L}$-approximation for maximum cost. 
\end{proof}

\begin{remark}
    We remark that it may be of interest to consider other two-extreme mechanisms that always return $(x_l,y_r)$, $(x_l,y_l)$ or $(x_r,y_r)$. Although these mechanisms are also group strategyproof, their approximation ratio is 2 and it cannot be improved for any $k\in[0,1)$. For details, see the discussion for $L=0$ in Appendix \ref{app:other}.
\end{remark}

 The analysis for Mechanism  \ref{alg:app2} is tight for any $k\in[0,1]$. Consider an instance with $L=0$, $x_l=0$, $1-y_r\ge x_l$, and $y_l=x_r+\epsilon$ for some sufficiently small positive number $\epsilon$. The optimal solution is $(\frac{x_l+x_r}{2},\frac{y_l-x_l+1}{2})=(\frac{x_r}{2},\frac{y_l+1}{2})$ by Theorem \ref{thm:opt}, and the optimal maximum cost is $\frac{x_r}{2}+k\cdot \frac{1+y_l-x_r}{2}+1-\frac{1+y_l}{2}=\frac{1+k}{2}-\frac{1-k}{2}\epsilon$.  Mechanism  \ref{alg:app2} returns $(x_r,y_l)$, and the induced maximum cost is $x_r-x_l+k(y_l-x_r)+1-y_l=1-(1-k)\epsilon$. The ratio is $\frac{1-(1-k)\epsilon}{(1+k)/2-(1-k)\epsilon/2}\rightarrow \frac{2}{1+k}$, when $\epsilon$ approaches $0$.

Next, we complete the results by a lower bound.

\begin{theorem}
    For any $k \in [0, 1)$, no deterministic strategyproof mechanism can achieve an approximation ratio lower than $g_m$ for the maximum cost, where
    \begin{align*}
        g_m = \min_{a \in \left[0, 1 - L\right]} \max\left(
        \frac{a + k(1 - a)}{\frac{a}{2} + k\left(1 - \frac{a}{2}\right)},\;
        2 \cdot \frac{\max(a, 1 - L - a) + k(1 - a)}{1 - L + k(1 + L)}
        \right).
    \end{align*}
    Moreover, the minimum is attained at
    \begin{align*}
        a_0 = \frac{ -k[(1 - k)L + 2(1 + k)] + \sqrt{k^2[(1 - k)L + 2(1 + k)]^2 + 4(1 + k)(1 - k)k[(1 + k)(1 - L)] } }{2(1 + k)(1 - k)}.
    \end{align*}
    When $L=0$, the lower bound is $\frac{2}{1+\sqrt{k}}$.
\end{theorem}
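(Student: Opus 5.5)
The plan is to build a two-profile argument around a single base instance in which the right region degenerates, so that only the left endpoint $a$ matters. Then Corollary~\ref{cor:lower} lets us move one agent without changing the output. Concretely, take $o=1-L$, so that the right region is $(1,1]$ in the limit and any feasible $b$ is (essentially) $1$. Place one agent of $N_2$ at $1$. Place two agents of $N_1$, one at $0$ and one at $o^-=1-L-\epsilon$, and let $\epsilon\to 0$ throughout. Let $f$ be any deterministic anonymous strategyproof mechanism, and let $(a,1)$ be its output on this base profile, with $a\in[0,1-L]$.

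\emph{Step 1 (base profile).} By Theorem~\ref{thm:opt}, the optimum is $a^*=\frac{x_l+x_r}{2}=\frac{1-L}{2}$, $b^*=1$. Its maximum cost is
$$\frac{1-L}{2}+k\Big(1-\frac{1-L}{2}\Big)=\frac{1-L+k(1+L)}{2}.$$
Under $(a,1)$, the left agents have costs $a+k(1-a)$ and $(1-L-a)+k(1-a)$. The right agent's cost $a+k(1-a)$ is dominated by these. Hence the ratio is at least
$$2\cdot\frac{\max(a,1-L-a)+k(1-a)}{1-L+k(1+L)}.$$

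\emph{Step 2 (moving one agent).} Move the agent at $1-L$ leftwards to $a$; if $a=1-L$, there is nothing to do. This agent does not cross $a$, so Corollary~\ref{cor:lower} keeps the output at $(a,1)$. In the new profile the left agents are at $0$ and $a$. By Theorem~\ref{thm:opt}, the optimum is $(a/2,1)$ with cost $\frac a2+k(1-\frac a2)$. The mechanism's cost is attained by the agent at $0$, namely $a+k(1-a)$. This gives the first term of $g_m$. The right agent is already at $1$ and stays put, so $b$ needs no argument.

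\emph{Step 3 (combine and optimize).} Since $f$ commits to a single $a$, its ratio is at least the maximum of the two expressions at that $a$. Minimizing over $a\in[0,1-L]$ yields $g_m$.

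For the explicit minimizer, I would argue the following. The first term is increasing in $a$, because its derivative has numerator $\frac{1-k}{2}\cdot k>0$ (up to sign checks). The second term is decreasing on $a\le\frac{1-L}{2}$, where $\max(a,1-L-a)=1-L-a$. So the minimum is at the crossing point, which I expect to lie in $[0,\frac{1-L}{2}]$. Equating
$$\frac{a+k(1-a)}{\frac a2+k(1-\frac a2)}=2\cdot\frac{1-L-a+k(1-a)}{1-L+k(1+L)},$$
cross-multiplying and clearing the factor $2$ gives a quadratic $(1+k)(1-k)a^2+k[(1-k)L+2(1+k)]a-k(1+k)(1-L)=0$ (modulo sign bookkeeping). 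Its positive root is the stated $a_0$.

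For $L=0$, the root simplifies to $a_0=\frac{2\sqrt k}{1+\sqrt k}$ after factoring. Substituting back gives $\frac{2}{1+\sqrt k}$.

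The main obstacle is the algebra in Step 3: confirming that the crossing lies in the branch $a\le\frac{1-L}{2}$, so that it is the global minimizer, and simplifying the discriminant to the closed form, particularly the $L=0$ specialization. The degenerate-region limit also needs a line of care to keep the instance valid ($b\in(o+L,1]$ nonempty, achieved via $\epsilon$).
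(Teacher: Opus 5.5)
Your construction is essentially the paper's: the same degenerate base profile $(0,1-L,1)$ with $o=1-L$, the same forced profile $(0,a,1)$ for the first term, and the same quadratic, which you derived correctly. Reading the second term directly off the base profile, using that the agent at $1$ behaves like a left agent at $0$, slightly streamlines the paper. The paper obtains that term from an extra profile $(a,1-L,1)$, which yields the same ratio. There is, however, one concrete error.

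\textbf{The $L=0$ step fails as written.} At $L=0$ the quadratic is $(1+k)\bigl[(1-k)a^2+2ka-k\bigr]=0$. Its positive root is
\[
a_0=\frac{\sqrt{k}-k}{1-k}=\frac{\sqrt{k}}{1+\sqrt{k}},
\]
not $\frac{2\sqrt{k}}{1+\sqrt{k}}$; you appear to have dropped the factor $2$ in the denominator $2(1+k)(1-k)$. With your value the substitution does not give the claimed bound: the first term evaluates to $2-\sqrt{k}$, which differs from $\frac{2}{1+\sqrt{k}}$ for every $k\in(0,1)$. With the correct root, $(1-k)a_0+k=\sqrt{k}$, so the first term is $\frac{2\sqrt{k}}{\sqrt{k}+k}=\frac{2}{1+\sqrt{k}}$. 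The second term is $2(1-a_0)=\frac{2}{1+\sqrt{k}}$, as required.

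\textbf{The branch check is immediate.} The check you flag as the main obstacle is simpler than the paper's squaring argument.
\begin{itemize}
\item The first term is always at least $1$, since its numerator exceeds its denominator by $\frac{(1-k)a}{2}\ge 0$.
\item The second term equals exactly $1$ at $a=\frac{1-L}{2}$, the optimum of the base profile.
\item At $a=0$ the first term equals $1$ (for $k>0$), while the second equals $\frac{2(1-L+k)}{1-L+k(1+L)}\ge 1$.
\end{itemize}
The first term is increasing; the second is decreasing on $[0,\frac{1-L}{2}]$ and increasing afterwards. Hence the crossing lies in $[0,\frac{1-L}{2}]$ and is the minimizer.

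\textbf{Anonymity.} Invoking Corollary~\ref{cor:lower} restricts your argument to anonymous mechanisms. The paper argues directly instead: at $(0,a,1)$, the agent at $a$ could report $1-L$ and obtain its unique peak $(a,1)$. Strategyproofness alone therefore forces the output, so the bound also covers non-anonymous mechanisms.
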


\begin{proof}
    Consider the location profile $\mathbf{x} = (0, 1 - L, 1)$ with obstacle $(1 - L, 1)$. Let $f$ be a strategyproof mechanism that returns $f(\mathbf{x}) = (a, 1)$ for some $a \in [0, 1 - L]$.

    First, consider a profile $\mathbf{x}' = (0, a, 1)$. The optimal solution is $(\frac{a}{2}, 1)$ with maximum cost $\frac{a}{2} + k(1 - \frac{a}{2})$. By strategyproofness, $f(\mathbf{x}')$ must be $(a, 1)$; otherwise, the agent at $a$ could benefit by reporting $1 - L$ to induce outcome $(a, 1)$ under $\mathbf{x}$. The resulting maximum cost is $a + k(1 - a)$, yielding a ratio of at least
    \[
    \frac{a + k(1 - a)}{\frac{a}{2} + k\left(1 - \frac{a}{2}\right)},
    \]
    which increases with $a$.

    Next, consider $\mathbf{x}'' = (a, 1 - L, 1)$. The optimal solution is $(\frac{1 - L}{2}, 1)$ with maximum cost $\frac{1 - L + k(1 + L)}{2}$. Again, strategyproofness forces $f(\mathbf{x}'') = (a, 1)$; otherwise, the agent at $a$ could misreport $0$ to obtain $(a, 1)$ under $\mathbf{x}$. The resulting maximum cost is $\max(a, 1 - L - a) + k(1 - a)$, leading to a ratio of at least
    \[
    2 \cdot \frac{\max(a, 1 - L - a) + k(1 - a)}{1 - L + k(1 + L)},
    \]
    which increases with $a$ when $a \ge \frac{1 - L}{2}$ and decreases otherwise.

    Therefore, the overall lower bound is
    \begin{align*}
        \min_{a \in [0, 1 - L]} \max\left(
        \frac{a + k(1 - a)}{\frac{a}{2} + k\left(1 - \frac{a}{2}\right)},\;
        2 \cdot \frac{\max(a, 1 - L - a) + k(1 - a)}{1 - L + k(1 + L)}
        \right).
    \end{align*}

    If $a \ge \frac{1 - L}{2}$, the minimum is attained at $a = \frac{1 - L}{2}$. However, if there exists $a_0 \in [0, \frac{1 - L}{2})$ such that the two expressions are equal, then $a_0$ minimizes the maximum. Setting
    \begin{align*}
        \frac{a + k(1 - a)}{\frac{a}{2} + k\left(1 - \frac{a}{2}\right)} = 2 \cdot \frac{1 - L - a + k(1 - a)}{1 - L + k(1 + L)},
    \end{align*}
    we obtain the quadratic equation:
    \begin{align*}
        (1 + k)(1 - k)a^2 + k[(1 - k)L + 2(1 + k)]a - k(1 + k)(1 - L) = 0.
    \end{align*}
    The non-negative root is
    \begin{align*}
        a_0 = \frac{ -k[(1 - k)L + 2(1 + k)] + \sqrt{k^2[(1 - k)L + 2(1 + k)]^2 + 4(1 + k)(1 - k)k(1 + k)(1 - L) } }{2(1 + k)(1 - k)}.
    \end{align*}

    To verify $a_0 \le \frac{1 - L}{2}$, define $B = (1 - k)L + 2(1 + k)$. It suffices to show:
    \[
    (1 + k)^2 - (1 - k)L \ge \sqrt{k^2 B^2 + 4k(1 - k)(1 + k)^2(1 - L)}.
    \]
    The left-hand side is nonnegative, as its minimum at $L = 1$ is $k(3 + k) \ge 0$. Squaring both sides and simplifying, we obtain:
    \begin{align*}
        &\left((1 + k)^2 - (1 - k)L\right)^2 - \left(k^2 B^2 + 4k(1 - k)(1 + k)^2(1 - L)\right) \\
        &= (1 - k)^2(1 + k)\left[(1 + k) - 2L + (1 - k)L^2\right] \\
        &= (1 - k)^2(1 + k)(1 - L)\left[(1 + k) + (k - 1)L\right] \ge 0.
    \end{align*}
    Equality holds only when $k = 1$ or $L = 1$. Hence, $a_0$ indeed achieves the minimum.
\end{proof}



\section{Maximum Cost for $L=0$}\label{sec:special}

In this section, we study the case when $L=0$ with improved results.

\subsection{Deterministic Mechanisms}

{According to previous analysis, the worst case for the \textsc{TwoExtreme} Mechanism happens when $x_r$ and $y_l$ are very close to the obstacle $o$. To solve this case, we need the endpoints of the edge to keep some distance from the obstacle, which inspires the following mechanism. }

\begin{mechanism}[\textsc{TwoExtremeRestrict}]\label{alg:round2}
    Given location profile $\mathbf x$, return $(a, b)$  with $a=\min(x_r, o-oc)$ and $b=\max(y_l, o+c-oc)$, where $o$ is the obstacle location and $c$ is a value in $[0,1]$ set to be $\frac{1+k^2 - \sqrt{k^4-k^3+3k^2+k}}{1-k^2}$.
\end{mechanism}

{ Note that \textsc{TwoExtreme} is a special case of Mechanism \ref{alg:round2} when $c=0$. Since the value of $c$ above is setting optimally, Mechanism \ref{alg:round2} has an improved approximation ratio than the $\frac{2}{1+k}$-approximation of \textsc{TwoExtreme}. 
Mechanism \ref{alg:round2} is also in the class of generalized median mechanisms. Indeed, we can set $|N_1|$ phantoms at point $(o-oc, 0)$, $|N_2|$ phantoms at  $(1, o+c-co)$, and one phantom at $(0,1)$ in Euclidean plane, and then $a$ (resp. $b$) is the $x$-coordinate (resp. $y$-coordinate) median of the $2n+1$ points consisting of the $n+1$ phantoms and the $n$ peaks of agents. }

\begin{theorem}\label{thm:round2}
    Mechanism  \ref{alg:round2} is group strategyproof and the approximation ratio for maximum cost is at most
    $$
    2\cdot \max\left(\frac{1-(1-k)c}{1\!+\!k\!-\!(1-k)c},  \frac{k(2c-c^2)+1-c^2}{2-2c+2ck}, \frac{1+2ck}{2\!-\!(1\!-\!k)c}, c\right).
    $$
\end{theorem}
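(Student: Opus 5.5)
Group strategyproofness follows the argument for Theorem \ref{thm:21k}. Write the outcome as $a=\min(x_r,\,o-oc)$ and $b=\max(y_l,\,o+c-oc)$. Take a coalition $S$ with outcome $(a',b')$ after misreporting, and assume w.l.o.g. that $|a-a'|\ge|b-b'|$.

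If $a'<a$, then $a=x_r\le o-oc$ and the agent at $x_r$ must be in $S$. That agent's cost changes by at least $(1+k)(a-a')-(1-k)|b-b'|\ge 0$, so it does not gain. If $a'>a$, every agent in $N_1$ has cost change at least $(1-k)(a'-a)-(1-k)|b'-b|\ge0$, so no agent of $N_1$ gains and none is in $S$. But agents in $N_2$ cannot move $a$, a contradiction. The case $a'=a$ with $b'\neq b$ is symmetric. Strategyproofness also follows from the phantom representation and Corollary \ref{cor:cha}.

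For the ratio, normalize by scaling so that the two regions are $[0,o)$ and $(o,1]$. W.l.o.g. assume $1-y_r\ge x_l$, so Theorem \ref{thm:opt} gives the optimum
\[
\mathrm{OPT}=\tfrac12-\tfrac{y_l-x_r}{2}+k\Bigl(\tfrac12+\tfrac{y_l-2x_l-x_r}{2}\Bigr).
\]
The mechanism's maximum cost is attained by one of $x_l$, $y_r$, or (when an endpoint is clipped) $x_r$ or $y_l$. Split into four cases according to whether $a=x_r$ or $a=o-oc$, and whether $b=y_l$ or $b=o+c-oc$.

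\textbf{Unclipped case.} When $a=x_r$ and $b=y_l$, the ratio is the TwoExtreme expression
\[
2\,\frac{1+(1-k)(x_r-y_l)-x_l}{1+(1-k)(x_r-y_l)+k(1-2x_l)}.
\]
Since both endpoints are unclipped, $y_l-x_r\ge c$ (the gap between $o-oc$ and $o+c-oc$ is exactly $c$). The expression increases as $y_l-x_r$ shrinks and decreases in $x_l$, so we substitute $y_l-x_r=c$ and $x_l=0$, giving the first term $\frac{1-(1-k)c}{1+k-(1-k)c}$.

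\textbf{Clipped cases.} When at least one endpoint is clipped, ALG is fixed by $o$ and $c$ together with the extreme agents. We maximize $\mathrm{ALG}/\mathrm{OPT}$ over the remaining free parameters ($x_l$, $x_r$, $y_l$, $y_r$, $o$). Since each cost is linear and OPT is linear, the ratio is quasi-convex in each variable, so the maximum lies at extreme configurations. The candidates are: $x_l=0$; $x_r$ at $o$ or at $o-oc$; $y_r=1$; and $o\in\{0,1\}$ or at a breakpoint. Evaluating these should produce the second and third terms, $\frac{k(2c-c^2)+1-c^2}{2-2c+2ck}$ and $\frac{1+2ck}{2-(1-k)c}$.

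\textbf{Degenerate term.} The term $c$ (times 2) should come from the configuration with all agents at the far ends, e.g. $x_l=x_r=0$ and $y_l=y_r=1$ with $o$ extreme. There OPT is small while the clipped edge forces a detour of order $c$.

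The main obstacle is organizing the clipped cases: determining which agent attains ALG's maximum in each sub-case, and justifying the reduction to extreme points. I would first fix which endpoint is clipped, then write ALG as a maximum of linear functions and compare each against $\mathrm{OPT}$, which is linear over the region where $1-y_r\ge x_l$. Since a ratio of linear functions attains its maximum at vertices of the polytope, I would enumerate those vertices. The final bound is the maximum of the four resulting terms.
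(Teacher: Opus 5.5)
Your group-strategyproofness argument and your unclipped case match the paper's; the latter is exactly its Case~1. There is one harmless slip. $a'<a$ does not force $a=x_r$, since $a=o-oc<x_r$ is possible with the agent at $x_r$ reporting below $o-oc$. The inequality $(1+k)(a-a')-(1-k)|b-b'|\ge 0$ still applies, because $a'<a\le x_r$.

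The genuine gap is the three clipped cases, which produce the terms $\frac{k(2c-c^2)+1-c^2}{2-2c+2ck}$, $\frac{1+2ck}{2-(1-k)c}$ and $c$. You say a vertex enumeration ``should produce'' them but never perform it, and the specifics you give point the wrong way.

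\emph{First, $x_l=0$ is the wrong extreme there.} The output does not depend on $x_l$, so the costs of the agents at $x_r$ and $y_l$ do not involve it. Meanwhile $2\,\mathrm{OPT}$ contains $k(1-2x_l)$, so for those agents the ratio \emph{increases} with $x_l$. The paper therefore pushes $x_l$ up to $\min(o,1-o)$, $\min(1-o,o(1-c))$ or $\min(o,(1-c)(1-o))$, using $x_l\le x_r$ and $x_l\le 1-y_r$. This leaves a piecewise-monotone function of $o$, evaluated at $o=0$, at the breakpoint ($\frac{1}{2}$, $\frac{1}{2-c}$ or $\frac{1-c}{2-c}$), and at $o=1$. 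For example, $\frac{1+2ck}{2-(1-k)c}$ is the Case~2 bound for the agent at $x_r$ at $o=\frac{1}{2}$. Likewise, $\frac{k(2c-c^2)+1-c^2}{2-2c+2ck}$ is the Case~3 bound for the agent at $y_l$ at $o=\frac{1}{2-c}$, where $x_l=x_r=o(1-c)=1-o$.

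\emph{Second, your proposed source of the $c$ term is wrong.} With $x_r=0$ and $y_l=1$, neither endpoint is clipped, and the mechanism outputs $(0,1)$, which is optimal. The term actually arises in Case~2 as $o\to 1$ with $x_l=0$, $x_r\to o$, $y_l\to 1$: the agent at $x_r$ pays $(1+k)c$ against $\mathrm{OPT}=\frac{1+k}{2}$.

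\emph{Third, you never settle which agents can attain ALG in each clipped case.} The paper shows it is $x_l,x_r,y_l$ in Case~2 (with $x_l\le o(1-c)$ assumed w.l.o.g.), $x_l,y_l$ in Case~3, and $x_l,x_r$ in Case~4.

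Your quasi-convexity reduction is sound in principle. Within each case the feasible set is a polytope in $(o,x_l,x_r,y_l,y_r)$, and OPT is positive and linear. ALG is a maximum of convex piecewise-linear functions rather than linear ones, since terms like $|x_l-o(1-c)|$ appear. To make this a proof, you must write down each case's constraints and check every vertex against the four terms. That check is the actual content of the theorem, and the proposal assumes it rather than derives it.
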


We defer the proof to Appendix \ref{app:f} and provide a proof sketch here.

\begin{proof}[Sketch of Proof.]
The proof consists of two main parts: establishing group strategyproofness and bounding the approximation ratio. The parameter $c \in [0,1]$ is treated generically initially and is optimized at the end to minimize the final ratio.

\medskip\noindent
\emph{1. Simplified Proof of Group Strategyproofness.}
Let $(a, b)$ be the mechanism's (random) output, noting that $a$ and $b$ are determined independently based on the reports and the constants $(1-c)o$ and $o+c(1-o)$. The core idea is to show that no coordinated misreport by any group of agents $S$ can make all its members strictly better off.

Assume for contradiction that a profitable group deviation exists, leading to a new output $(a', b')$. Without loss of generality, assume $|a'-a| \ge |b'-b| > 0$. The argument proceeds by analyzing the agent positioned at the critical point $x_r$ (for facility $a$), showing it can never gain:
\begin{itemize}
    \item \textbf{If $a < x_r$:} For $a$ to change ($a' \neq a$), the agent at $x_r$ must be in the deviating group $S$ and report a location to the left, causing $a' < a < x_r$. This movement increases its distance to facility $a$. Any potential benefit from a change in $b$ is offset by this increase, resulting in a non-positive net gain.
    \item \textbf{If $a = x_r$:} A change in $a$ requires the agent at $x_r$ to misreport (either left or right), immediately increasing its cost from the $a$ side. Again, any compensating gain from $b$ is insufficient, leading to a non-positive net gain.
\end{itemize}
Since the agent at the pivotal location $x_r$ cannot benefit, the supposed profitable group deviation cannot exist, establishing group strategyproofness.

\medskip\noindent
\emph{2. Analysis of the Approximation Ratio.}
We bound the ratio $\frac{MC(a,b,\mathbf{x})}{OPT}$, where $OPT$ is the optimal maximum cost derived in Theorem~\ref{thm:opt}. The analysis is divided into four cases based on the reported agent locations relative to the mechanism's threshold constants.

\begin{itemize}
    \item \textbf{Case 1:} $x_r \le o(1-c)$ and $y_l \ge o+c(1-o)$. \\ 
    Output: $(a, b) = (x_r, y_l)$. The worst-case cost is incurred by the agent at $x_l$. The derived upper bound for this case is $\frac{1-(1-k)c}{1+k-(1-k)c}$.

    \item \textbf{Case 2:} $x_r \ge o(1-c)$ and $y_l \le o+c(1-o)$. \\ 
    Output: $(a, b) = (o(1-c), o+c(1-o))$. The maximum cost is checked for agents at $x_l$, $x_r$, and $y_l$. The individual bounds are:
    \begin{itemize}
        \item $x_l$: $\frac{1-(1-k)c}{1+k-(1-k)c}$
        \item $x_r$: $\max\left(\frac{1-(1-k)c}{1+k-(1-k)c}, \frac{1+2ck}{2-(1-k)c}, c\right)$
        \item $y_l$: $\max\left(c, \frac{1+2ck}{2-(1-k)c}, \frac{1-(1-k)c}{1+k-(1-k)c}\right)$
    \end{itemize}

    \item \textbf{Case 3:} $x_r \le o(1-c)$ and $y_l \le o+c(1-o)$. \\ 
    Output: $(a, b) = (x_r, o+c(1-o))$. The maximum cost is checked for agents at $x_l$ and $y_l$.
    \begin{itemize}
        \item $x_l$: $\frac{1-(1-k)c}{1+k-(1-k)c}$
        \item $y_l$: $\max\left(c, \frac{k(2c-c^2)+1-c^2}{2-2c+2ck}, \frac{1-(1-k)c}{1+k-(1-k)c}\right)$
    \end{itemize}

    \item \textbf{Case 4:} $x_r \ge o(1-c)$ and $y_l \ge o+c(1-o)$. \\ 
    Output: $(a, b) = (o(1-c), y_l)$. The maximum cost is checked for agents at $x_l$ and $x_r$.
    \begin{itemize}
        \item $x_l$: $\frac{1-(1-k)c}{1+k-(1-k)c}$
        \item $x_r$: $\max\left(\frac{1-(1-k)c}{1+k-(1-k)c}, \frac{k(2c-c^2)+1-c^2}{2-2c+2ck}, c\right)$
    \end{itemize}
\end{itemize}

\medskip\noindent
\emph{3. Determining the Final Ratio.}
Taking the maximum across all bounds from the four cases, the overall approximation ratio satisfies:
\[
\frac{MC(a,b,\mathbf{x})}{OPT} \le 2 \cdot \max\left( R_1(c), R_2(c), R_3(c), R_4(c) \right),
\]
where
\begin{align*}
R_1(c) &= \frac{1-(1-k)c}{1+k-(1-k)c}, \quad
R_2(c) = \frac{k(2c-c^2)+1-c^2}{2-2c+2ck}, \\
R_3(c) &= \frac{1+2ck}{2-(1-k)c}, \quad
R_4(c) = c.
\end{align*}
To find the optimal parameter $c^*$ that minimizes this worst-case bound, we observe that $R_1(c)$ is decreasing in $c$, while $R_3(c)$ is increasing in $c$. The optimal $c^*$ is found by solving the equation $R_1(c) = R_3(c)$, which yields:
\[
c^* = \frac{1+k^2 - \sqrt{k^4 - k^3 + 3k^2 + k}}{1 - k^2}.
\]
At this value $c^*$, it holds that $R_1(c^*) = R_2(c^*)$ and $R_1(c^*) \ge c^*$, meaning the maximum in the bound is attained by these equal terms. Substituting $c^*$ back into $R_1(c)$ (or $R_2(c)$) gives the final, minimized approximation ratio guaranteed by the mechanism.
\end{proof}

Next, we complete the results by a lower bound which is shown in Figure~\ref{fig:res}. We could not get the expressions, therefore we provide a procedure which can help us get the lower bound by simulations.

\begin{proposition}[Computer-assisted deterministic lower bound for $L=0$]\label{thm:lower}
Fix $k\in[0,1)$ and $L=0$.  For an obstacle location $o\in[\frac12,1)$ and a possible mechanism output $(a,b)\in[0,o]\times[o,1]$, define $F_k(o,a,b)$ as the maximum approximation ratio over the $16$ profiles obtained by independently choosing
\[
x_l\in\{0,a\},\quad x_r\in\{a,o^-\},\quad y_l\in\{o^+,b\},\quad y_r\in\{b,1\},
\]
where $o^-$ and $o^+$ denote the limits $o-\varepsilon$ and $o+\varepsilon$ as $\varepsilon\to0^+$.  Let
\[
B(k)=\sup_{o\in\left[\frac12,1\right)}\inf_{(a,b)\in[0,o]\times[o,1]}F_k(o,a,b).
\]
Then every deterministic strategyproof mechanism has approximation ratio at least $B(k)$ for the maximum-cost objective.  Moreover, $B(k)\ge \frac{2}{1+\sqrt{k}}$.  If the numerical computation on a $1000\times1000$ grid gives the value $B_{1000}(k)$, then under the grid-error estimate described below, a safe reported lower bound is
\[
\max\left\{\frac{2}{1+\sqrt{k}},\, B_{1000}(k)-0.003\right\}.
\]
\end{proposition}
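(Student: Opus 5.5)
We will prove the three claims of Proposition~\ref{thm:lower} in turn: the lower bound $B(k)$ via a Corollary~\ref{cor:lower}-type invariance argument, then the comparison $B(k)\ge \frac{2}{1+\sqrt{k}}$, then the grid-error bound.

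\textbf{Step 1: reduction to the 16 profiles.} Fix a deterministic strategyproof mechanism $f$, which we take to be anonymous as in Section~\ref{sec:model}, and fix $o\in[\frac12,1)$. Consider the base profile with four agents at $0$, $o^-$, $o^+$ and $1$, and write $f=(a,b)$ for its output. We then show that $f$ still outputs $(a,b)$ on each of the 16 profiles in the statement. We change one coordinate at a time: the agent at $0$ moves to $a$, the agent at $o^-$ moves to $a$, the agent at $o^+$ moves to $b$, or the agent at $1$ moves to $b$.
\begin{itemize}
\item[] Each such move keeps the agent on its own side of the relevant endpoint. It is exactly the argument used for $\mathbf x'$ and $\mathbf x''$ in the proof of the general lower bound. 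If the output changed, then either the moved agent, with its new true location, could report its old location to recover $(a,b)$, or the agent at its old location could report the new one.
\item[] The key point is that $(a,b)$ is the peak-coordinate of that agent on the relevant axis. By the two-dimensional single-peakedness of Theorem~\ref{thm:sing}, an agent at $a$ strictly prefers $(a,b)$ to any outcome whose first coordinate differs and whose second coordinate is not closer to its peak $1$.
\end{itemize}
To avoid such case analysis we will rather invoke Corollary~\ref{cor:lower} directly. Moving an agent of $N_1$ toward $a_0$ without crossing it, or an agent of $N_2$ toward $b_0$, leaves the output unchanged. Applying the corollary iteratively shows the output is the same $(a,b)$ on all 16 profiles; the limiting points $o^\pm$ and ties at $a$ or $b$ are handled by taking $\varepsilon\to0$. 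Hence the ratio of $f$ is at least $F_k(o,a,b)$. Since the adversary chooses $o$ while $f$ effectively chooses $(a,b)$, the ratio is at least $\inf_{(a,b)}F_k(o,a,b)$ for every $o$. Taking the supremum over $o$ gives $B(k)$. Here $F_k$ is computed explicitly from Theorem~\ref{thm:opt}, which supplies the value of OPT on each of the 16 profiles.

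\textbf{Step 2: $B(k)\ge\frac{2}{1+\sqrt{k}}$.} Rescale the general deterministic lower bound with $L=0$ so that the obstacle sits at $o$. We take $o\to1^-$ (or $o=\frac12$ with symmetry), and the profiles $\mathbf x$, $\mathbf x'$, $\mathbf x''$ used there appear among the 16 profiles:
\begin{itemize}
\item[] $x_l\in\{0,a\}$ and $x_r\in\{a,o^-\}$ on the left;
\item[] $y_l=y_r$ essentially at $1$ on the right.
\end{itemize}
Consequently $\inf_{(a,b)}F_k(o,a,b)$ dominates the min--max expression $g_m$ at $L=0$ in the limit $o\to1$, and that expression equals $\frac{2}{1+\sqrt{k}}$. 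The one subtlety is that $b$ is free in $F_k$, whereas the earlier argument fixed $b=1$. We will check that for $b<1$ the agent at $y_r=1$ only increases the maximum cost, so $F_k$ is minimized at $b\to1$ as $o\to1$.

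\textbf{Step 3: grid error.} We show that $F_k$ is Lipschitz in $(a,b)$ on $[0,o]\times[o,1]$ with an explicit constant. Each of the 16 ratios is a quotient of linear functions whose denominator is bounded below by OPT $\ge$ some $c_0(k)>0$, and the maximum of Lipschitz functions is Lipschitz. A grid with spacing $10^{-3}$ then overestimates the infimum by at most $\mathrm{Lip}\cdot 10^{-3}$. For the supremum over $o$, any grid value of $o$ is itself a valid lower bound, so no error arises in that direction. Bounding this Lipschitz constant uniformly, including near $o^\pm$ and small $k$, so that the total error fits within $0.003$ is the main obstacle. If a uniform constant fails near degenerate corners, we will instead bound it regionwise and refine the grid only there.
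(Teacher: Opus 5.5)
Your Steps 1 and 2 follow the paper's proof. You freeze the output on all 16 profiles via Corollary~\ref{cor:lower}. You then note that as $o\to1^-$ the profiles $(0,a,1,1)$ and $(a,o^-,1,1)$ lie among the 16 and reproduce the two forced cases of the analytic bound, whose min--max is $\frac{2}{1+\sqrt{k}}$.

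Two small corrections to Step 2:
\begin{itemize}
\item The alternative ``$o=\frac12$ with symmetry'' is unjustified: reflection sends $o\to1^-$ to $o\to0^+$, not to $\frac12$.
\item The monotonicity-in-$b$ check is unnecessary. Since $b\in[o,1]$, $b\to1$ automatically. What you need is that the two profile ratios converge uniformly in $(a,b)$ as $o\to1^-$. This holds for $k>0$ because the optimal maximum cost is at least $k(b-a)+\frac{1-b+a}{2}\ge\min(k,\frac12)$ whenever both regions are occupied.
\end{itemize}

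Step 3 departs from the paper, and as planned it would not work. The paper does not certify the grid error. It assumes $F_k$ is $S$-Lipschitz in $\ell_1$ on the cells near the numerical minimizer, takes $S=5$ from observed slopes, and uses the nearest-grid-point distance $\frac{1}{2(N-1)}$ to get $\frac{5}{1998}<0.003$. That is why the proposition is stated only ``under the grid-error estimate.''

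Your route builds a global constant from $\mathrm{OPT}\ge c_0(k)$ and takes the maximum over the 16 constituent ratios. It cannot reach $0.003$ for small $k$:
\begin{itemize}
\item At $k=0$ there is no $c_0(0)>0$: OPT vanishes on the profile $(0,0,1,1)$ with output $(0,1)$.
\item For $k>0$, the profile $(a,a,b,b)$ with $b=1$ has ratio $\frac{k+(1-k)a}{k+(1-k)a/2}$, whose slope at $a=0$ is $\frac{1-k}{2k}$. That is about $49.5$ at $k=0.01$, so your constant yields an error bound of order $0.05$.
\item Your distance estimate $10^{-3}$ is twice the paper's half-spacing bound, so even a constant of $5$ would not suffice; you would need $\mathrm{Lip}\le 3$.
\end{itemize}

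The missing ingredient is localization. Show that $F_k$ stays well above the minimax level near such degenerate points. For example, at $(a,b)=(0,1)$ the profile $(0,o^-,1,1)$ alone gives ratio $\frac{o+k}{o/2+k(1-o/2)}\ge\frac{2+4k}{1+3k}$. Then bound the slope of $F_k$ only on the remaining region. Your fallback of refining the grid would certify a different number than $B_{1000}(k)$.

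If you only want the proposition as stated, the conditional implication ``$S$-Lipschitz on the relevant cells $\Rightarrow$ error at most $\frac{S}{2(N-1)}$'' suffices. Your observation that each grid value of $o$ is itself a valid lower bound, so the supremum over $o$ adds no error, is correct and is left implicit in the paper.
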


\begin{proof}
Fix an arbitrary deterministic strategyproof mechanism $f$.  Consider the base profile $(0,o^-,o^+,1)$, and suppose that the limiting output of $f$ is $(a,b)\in[0,o]\times[o,1]$.  By the monotonicity consequence of strategyproofness in Corollary~\ref{cor:lower}, if a left-side agent moves toward $a$ without crossing $a$, then the output cannot change; similarly, if a right-side agent moves toward $b$ without crossing $b$, then the output cannot change.  Therefore, starting from the base profile, each of the four extreme agents can be independently moved to one of the two locations
\[
x_l\in\{0,a\},\quad x_r\in\{a,o^-\},\quad y_l\in\{o^+,b\},\quad y_r\in\{b,1\},
\]
and the mechanism is still forced to output the same pair $(a,b)$ on all resulting $16$ profiles.

Thus, for this fixed $o$ and this fixed possible output $(a,b)$, the approximation ratio of $f$ is at least $F_k(o,a,b)$.  Since the mechanism may output any point in $[0,o]\times[o,1]$, the worst-case ratio for this obstacle location is at least $\inf_{(a,b)}F_k(o,a,b)$.  Since the adversary may further choose $o$, every deterministic strategyproof mechanism has approximation ratio at least $B(k)$.

It remains to compare this bound with the previous analytic lower bound.  The hand proof for $L=0$ is contained in the above $16$-case construction.  Indeed, take the limiting subcase $o\to1^-$ and $b=1$.  Then the $16$ profiles include the profiles corresponding to $(0,a,1,1)$ and $(a,o^-,1,1)$, which are exactly the two forced bad cases used in the manual proof.  These two cases yield the one-dimensional lower-bound expressions
\[
\frac{a+k(1-a)}{a/2+k(1-a/2)}
\quad\text{and}\quad
2\cdot\frac{\max\{a,1-a\}+k(1-a)}{1+k}.
\]
Minimizing the maximum of these two expressions over $a$ gives $\frac{2}{1+\sqrt{k}}$.  Since the computer-assisted construction maximizes over all $16$ cases, while the manual proof uses only this subfamily, we have $B(k)\ge \frac{2}{1+\sqrt{k}}$.

Finally, we explain the grid-error estimate.  For fixed $k$ and $o$, the function $F_k(o,a,b)$ is the maximum of finitely many piecewise linear-fractional functions, because both the mechanism cost and the optimal maximum cost of each of the $16$ profiles are obtained from maxima of affine functions.  Hence $F_k$ is piecewise smooth and has no high-frequency oscillation.  If, on the cells relevant to the numerical minimizer, $F_k$ is $S$-Lipschitz with respect to the $\ell_1$ metric, then the distance from a continuous minimizer to the nearest point of an $N\times N$ grid is at most $\frac1{2(N-1)}$, since the two side lengths of the rectangle are $o$ and $1-o$.  Therefore the grid minimum exceeds the true continuous infimum by at most $\frac{S}{2(N-1)}$.

In our computation, the observed local slopes of the active ratios are of the same order as in the analytic bad cases, whose slopes are at most about $2$ near the minimax point.  Taking the conservative value $S=5$ gives, for $N=1000$, the absolute error estimate $\frac5{2\cdot999}<0.0026$.  We round this up to $0.003$.  This justifies reporting $B_{1000}(k)-0.003$ as a conservative numerical lower estimate of $B(k)$.
\end{proof}

The comparison between the $300\times300$ and $1000\times1000$ grids should be used only as a sanity check, not as a proof of convergence.  In the current data, the maximum absolute change between these two grids is about $0.0021$.  If one assumes first-order grid error, namely $e_N\approx \frac{C}{N}$, this difference gives $C\approx \frac{0.0021}{0.0033-0.001}\approx0.9$, and hence $e_{1000}\approx9\times10^{-4}$.  This is consistent with the more conservative Lipschitz-based estimate above.  Since the lower-bound values are always at least $1$, subtracting $0.003$ corresponds to a relative error below roughly $0.3\%$.

The important structural point is that the $16$ simulated profiles form a strong cover of the hand-derived bad cases.  The manual proof fixes the mechanism's output and checks only two forced profiles, whereas the computer-assisted argument fixes the same output and checks all profiles obtained by independently moving each extreme agent to either the boundary or the output endpoint.  Therefore the continuous bound $B(k)$ cannot be smaller than the old lower bound $\frac{2}{1+\sqrt{k}}$.  After subtracting the conservative grid error, any part of the numerical curve that is still above $\frac{2}{1+\sqrt{k}}$ gives a genuine improvement over the manual lower bound.

Although generally there is a gap between our upper bound  and the lower bound, they are matching when $k=0$ or $k\rightarrow 1$.The largest gap is about $0.071\pm 0.003$, which happens when $k \approx 0.19$.

\subsection{Randomized Mechanisms}\label{sec:maxrand}
Next, we consider randomized mechanisms. 
Inspired by the worst-case instance  of  \textsc{TwoExtreme} that  returns $(x_r,y_l)$, we have the following randomized mechanism. 




\begin{mechanism}[\textsc{{RandMaxCost}}]\label{alg:ran3}
    {Given location profile $\mathbf x$, return $(a,b)$ as $(x_r, y_l)$ with probability $p$ and $(\frac{x_r}{2}, \frac{y_l+1}{2})$ with probability $1-p$, where}
    $
    p = \max\left(\frac{1+k}{3-k}, \frac{k+k^2}{1+k^2}\right)
    $.
\end{mechanism}

\begin{theorem}\label{thm:ranub2}
    {Mechanism  \ref{alg:ran3} is a randomized group strategyproof mechanism, and the approximation ratio for maximum cost is $\max\left(\frac{4-2k}{3-k}, \frac{1+k}{1+k^2}\right)$.}
\end{theorem}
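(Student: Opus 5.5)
The plan is to treat the two claims of Theorem~\ref{thm:ranub2} separately. Group strategyproofness comes from an "expected-endpoint" linearization that reduces it to the argument used for \textsc{TwoExtreme} in Theorem~\ref{thm:21k}. The approximation ratio comes from comparing expected agent costs against the closed-form optimum of Theorem~\ref{thm:opt}, specialized to $L=0$.

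\emph{Group strategyproofness.} Both candidate outcomes, $(x_r,y_l)$ and $(\frac{x_r}{2},\frac{y_l+1}{2})$, depend on the profile only through $x_r$ and $y_l$, and are nondecreasing in each of them. Write $q=p+\frac{1-p}{2}=\frac{1+p}{2}$. Then $\mathbb E[a]=q\,x_r$ and $\mathbb E[b]=q\,y_l+\frac{1-p}{2}$.

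The key observation concerns the agent located at $x_r$. In both outcomes it lies weakly to the right of the left endpoint (since $\frac{x_r}{2}\le x_r$), so its cost $x_r-a+k(b-a)+1-b$ is affine in $(a,b)$. Its expected cost therefore equals $x_r-(1+k)\mathbb E[a]-(1-k)\mathbb E[b]+\text{const}$. The agent at $y_l$ is symmetric, with $\mathbb E[b]$ weakly to its right.

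Now suppose a group $S$ misreports and changes $(x_r,y_l)$ to $(x_r',y_l')$. Assume w.l.o.g. $|x_r-x_r'|\ge|y_l-y_l'|$ (the other case is symmetric via the agent at $y_l$), and split on the direction of the change.
\begin{itemize}
\item If $x_r'<x_r$, the agent at $x_r$ must be in $S$. Its expected cost changes by $(1+k)q(x_r-x_r')\pm(1-k)q|y_l-y_l'|\ge 0$, so it does not gain.
\item If $x_r'>x_r$, some agent of $N_1$ must report to the right of $x_r$. Every agent of $N_1$ sees each realized left endpoint move away from it, by $x_r'-x_r$ or $\frac{x_r'-x_r}{2}$. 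So its $|x_i-a|$ term increases, while $b$ moves by at most proportionally less.
\end{itemize}
The second case needs slightly more care than the first, because an agent strictly between $\frac{x_r}{2}$ and $x_r$ has a nonlinear cost. I would handle it by bounding the per-outcome changes separately, as in the proof of Theorem~\ref{thm:21k}, using the inequality $(1-k)\Delta a\ge(1-k)\Delta b$ outcome by outcome.

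\emph{Approximation.} Assume w.l.o.g. $1-y_r\ge x_l$, so that $\mathrm{OPT}$ is given by Theorem~\ref{thm:opt}. Since the maximum of expectations is at most the expectation of maxima, it suffices to bound each agent's expected cost. The maximum is attained at one of $x_l,x_r,y_l,y_r$, and I would compute each of these expected costs explicitly.
\begin{itemize}
\item For $x_l$: the $(x_r,y_l)$ branch costs at most $\frac{2}{1+k}\mathrm{OPT}$ by Theorem~\ref{thm:21k}. The midpoint branch is exactly the optimal solution when $x_l=0$, and close to it in general.
\item For $x_r$ and $y_l$: the first branch is good, and the midpoint branch is the potentially bad one.
\end{itemize}
Each ratio (expected cost)$/\mathrm{OPT}$ is a linear-fractional function of the locations, so it is maximized at extreme configurations: $x_l\in\{0,x_r\}$, $y_r\in\{y_l,1\}$, and $x_r,y_l$ pushed to $o$ or to the boundary. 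After this reduction, I expect two dominant worst cases.
\begin{itemize}
\item The tight instance for \textsc{TwoExtreme} ($x_l=0$, $x_r,y_l\to o$). It yields a bound of the form $p\cdot\frac{2}{1+k}+(1-p)\cdot 1$, or more precisely the $1+\frac{p(1-k)}{1+k}$ shape visible in Figure~\ref{fig:res}.
\item A configuration where the midpoint branch hurts $x_r$ (e.g.\ $x_l=x_r$). It gives the $1+(1-p)k$-type term.
\end{itemize}
Substituting $p=\max\left(\frac{1+k}{3-k},\frac{k+k^2}{1+k^2}\right)$ then gives $\max\left(\frac{4-2k}{3-k},\frac{1+k}{1+k^2}\right)$.

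The main obstacle is the approximation case analysis: verifying that no mixed configuration, for instance where $x_l$ and $y_r$ are both extremal but $o$ is interior, beats these two cases. I would handle it by treating the ratio as linear-fractional in each variable separately and reducing coordinate by coordinate to vertices. If some vertex gives a larger ratio, the choice of $p$ would have to be rebalanced.
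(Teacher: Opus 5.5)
\textbf{Gap 1: strategyproofness when $x_r'>x_r$.} You claim that every agent of $N_1$ sees each realized left endpoint move away from it. This is false for an agent at $x_i\in(\frac{x_r}{2},x_r]$: in the midpoint realization the endpoint moves from $\frac{x_r}{2}$ \emph{toward} $x_i$. In that realization the agent strictly gains, by $(1+k)\frac{x_r'-x_r}{2}$ from the move of $a$ when $x_r'\le 2x_i$. No outcome-by-outcome inequality like $(1-k)\Delta a\ge(1-k)\Delta b$ can rule this out.

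The missing idea is to weigh this gain against the loss $p(1-k)(x_r'-x_r)$ in the $(x_r,y_l)$ realization. When $\mathbb E[b']\le\mathbb E[b]$, the net change is at least $\bigl[p(1-k)-(1-p)\frac{1+k}{2}\bigr](x_r'-x_r)$. This is nonnegative exactly when $p\ge\frac{1+k}{3-k}$, which is why $p$ is defined as a maximum. Your argument never uses this bound, and it cannot be correct without it: for $p<\frac{1+k}{3-k}$, such an agent profits unilaterally by reporting slightly to the right of $x_r$.

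There is a second problem in the same case. If also $y_l'>y_l$, the bound $|y_l'-y_l|\le|x_r'-x_r|$ does not protect that agent; its net change can be $(p-1)(x_r'-x_r)<0$. The paper handles this subcase with the agent at $y_l$ instead. That agent must then be in $S$, and its expected cost rises by $(1+k)(\mathbb E[b']-\mathbb E[b])+(1-k)(\mathbb E[a']-\mathbb E[a])>0$.

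\textbf{Gap 2: the approximation reduction runs the wrong way.} You argue that the maximum of expectations is at most the expectation of maxima, so it suffices to bound each agent's expected cost. But the paper's objective for randomized mechanisms is the expected maximum cost (see this proof and Theorem~\ref{thm:randlb}). Per-agent expected costs only bound it from below. Whenever $0<x_l<x_r$ and $x_l<1-y_r$, the maximizer is $x_l$ in one realization and $x_r,y_l$ in the other, so $\mathbb E[MC]$ strictly exceeds every agent's expected cost.

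The fix is to take the maximum inside each realization. This gives the closed-form ratio $\frac{(1+p)(1-k)(x_r-y_l)+1+p+(1-p)k-2px_l}{1+k+(1-k)(x_r-y_l)-2kx_l}$. It is nondecreasing in $x_r-y_l\le 0$. At $x_r-y_l=0$ it is linear-fractional in $x_l\in[0,\frac12]$, using $x_l\le 1-y_r<1-x_l$. So the vertex analysis you flag as the main obstacle reduces to $x_l\in\{0,\frac12\}$, which yields $1+\max\bigl(\frac{p(1-k)}{1+k},(1-p)k\bigr)$. Your two predicted worst cases are exactly these endpoints. As written, though, your reduction bounds the wrong quantity and the case analysis is left unfinished.
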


We first prove its group strategyproofness.

\begin{lemma}\label{lem:rans2}
    Mechanism \ref{alg:ran3} is group strategyproof. 
\end{lemma}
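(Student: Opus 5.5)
Since $L=0$, the expected cost of an agent under Mechanism \ref{alg:ran3} is linear in the two candidate edges. For $i\in N_1$ it is
\[
p\,cost(x_r,y_l,x_i)+(1-p)\,cost\bigl(\tfrac{x_r}{2},\tfrac{y_l+1}{2},x_i\bigr).
\]
The outcome depends on the reports only through $x_r$ (the maximum report in $N_1$) and $y_l$ (the minimum report in $N_2$), and agents can only misreport within their own region. The plan mirrors the group strategyproofness argument of Theorem~\ref{thm:21k}. Take a coalition $S$ and a deviation that changes $(x_r,y_l)$ to $(x_r',y_l')$. Assume w.l.o.g. that $|x_r-x_r'|\ge |y_l-y_l'|$; the other case is symmetric, with the roles of $N_1$ and $N_2$ exchanged. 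Then $x_r'\neq x_r$, and I would exhibit one member of $S$ who does not strictly gain.

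First I compute how each agent's cost changes with $x_r$ and $y_l$. Write $\Delta a=x_r'-x_r$ and $\Delta b=y_l'-y_l$. In the first edge the endpoints move by $(\Delta a,\Delta b)$, and in the second by $(\Delta a/2,\Delta b/2)$. For $i\in N_1$, the term $k(b-a)+1-b$ changes by $-(1-k)\Delta b-k\Delta a$ in the first edge and by half of that in the second. The term $|x_i-a|$ changes by at least $-|\Delta a|$ in the first edge and $-|\Delta a|/2$ in the second. When the agent sits on the correct side of both endpoints, these distance changes have a definite sign.

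Case $x_r'<x_r$. Some agent located at $x_r$ must belong to $S$ and report to the left. Its true location $x_r$ lies at or to the right of both old endpoints $x_r$ and $x_r/2$, and also to the right of the new endpoints. So its distance term rises by exactly $p|\Delta a|+(1-p)|\Delta a|/2$. The remaining terms contribute $-k\Delta a(p+\tfrac{1-p}{2})=+k|\Delta a|(\cdot)$, which also hurts the agent. The only possible help is from $\Delta b$, and it is at most $(1-k)|\Delta b|\,(p+\tfrac{1-p}{2})$. Since $|\Delta b|\le|\Delta a|$, the net change is at least $(1+k-(1-k))|\Delta a|\cdot(p+\tfrac{1-p}{2})\ge 0$, so this agent does not gain.

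Case $x_r'>x_r$. Some member $i\in S\cap N_1$ must report to the right of $x_r$, so $x_i\le x_r$. For the first edge, $x_i\le x_r<x_r'$, so its distance rises by $\Delta a$, and the total change from that edge is $(1-k)(\Delta a-\Delta b)\ge 0$. The second edge is where I expect the main obstacle. If $x_i<x_r/2$, the same computation gives a non-negative change. If $x_i\in[x_r/2,x_r]$, moving the midpoint toward $x_i$ can reduce its distance by up to $\Delta a/2$, giving a per-edge change of as low as $-(1+k)\Delta a/2+(1-k)|\Delta b|/2$. In that subcase I need the first edge to compensate. The first edge contributes at least $p(1-k)(\Delta a-|\Delta b|)$, but when $|\Delta b|=|\Delta a|$ that bound is zero. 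I would therefore handle this subcase more carefully: split on the sign of $\Delta b$, or use that $N_1$ agents cannot affect $y_l$, so a nonzero $\Delta b$ requires an $N_2$ member of $S$. I would then switch to arguing about the agent at $y_l$ (or the $N_2$ deviator) via the symmetric computation, taking whichever of the two coordinates moved more relative to each side's gain. If this still fails, I would use the specific bound $p\ge\frac{1+k}{3-k}$. This is exactly the condition $p(1-k)\ge(1-p)\frac{1+k}{2}$, and with it the first-edge loss dominates the second-edge gain when $\Delta b=0$. Combining the two cases concludes that no coalition can make all its members strictly better off.
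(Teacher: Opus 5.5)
Your case structure is the paper's, and your case $x_r'<x_r$ is exactly the paper's Case 1. Your normalization $|x_r'-x_r|\ge|y_l'-y_l|$ is equivalent to the paper's normalization on expectations, since $\mathbb E[a]=\frac{1+p}{2}x_r$ and $\mathbb E[b]=\frac{1+p}{2}y_l+\frac{1-p}{2}$. But the case $x_r'>x_r$ is not proved. You list possible repairs (``split on the sign of $\Delta b$, or \dots; if this still fails \dots'') without checking any of them. The difficulty you hit comes from bounding everything through $|\Delta b|$: that discards the sign of $\Delta b$, which is exactly what decides which coalition member to examine. Relatedly, your second-edge floor $-\frac{(1+k)\Delta a-(1-k)|\Delta b|}{2}$ holds only when $\Delta b\le 0$. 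In general the floor is $-\frac{(1+k)\Delta a+(1-k)\Delta b}{2}$.

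The case closes by splitting on the sign of $\Delta b$, as in the paper's Cases 2 and 3.

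\emph{Subcase $\Delta b\le 0$.} Take any $N_1$ member of $S$. A leftward move of $b$ only raises $N_1$ costs. Summing the two edge bounds gives a change of at least $\bigl[p(1-k)-(1-p)\frac{1+k}{2}\bigr]\Delta a+(1-k)|\Delta b|\frac{1+p}{2}$. This is nonnegative because $p\ge\frac{1+k}{3-k}$, and the argument covers all $\Delta b\le 0$, not only $\Delta b=0$. That bound on $p$ is not a fallback. If $p<\frac{1+k}{3-k}$, a lone agent in $(\frac{x_r}{2},x_r)$ strictly profits by reporting slightly to the right of $x_r$, so the lemma would be false.

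\emph{Subcase $\Delta b>0$.} Here no bound on $p<1$ can rescue the $N_1$ side. Take $\Delta b=\Delta a$ and $x_i\in[\frac{x_r'}{2},x_r]$. The first edge changes that agent's cost by $0$ and the second edge by $-\Delta a$, a strict expected gain of $(1-p)\Delta a$. So you must switch to the agent at $y_l$, which must belong to $S$ because $y_l'>y_l$. Both realizations of $b$ and of $b'$ lie to its right. Its expected cost therefore changes by $\frac{1+p}{2}\bigl((1+k)\Delta b+(1-k)\Delta a\bigr)>0$, since both endpoints moved right. No comparison of $|\Delta a|$ with $|\Delta b|$ is needed for this agent.
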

\begin{proof}
We show that the mechanism is group strategyproof whenever $p\ge \frac{1+k}{3-k}$.  Consider a group of agents $S\subseteq N_1\cup N_2$. Let $f(\mathbf x)=(a,b)$ be the outcome when all agents report true locations, and $f(\mathbf x_{S}',\mathbf x_{-S})=(a',b')$ be the outcome when the agents in $S$ misreport $\mathbf x_{S}'$, where $a,b,a',b'$ are random variables that follow the distributions given in the mechanism. Assume w.l.o.g. that $|\mathbb E[a]-\mathbb E[a']|\ge |\mathbb E[b]-\mathbb E[b']|$, which will cause $|\mathbb{E}[a]-\mathbb{E}[a']|\ne 0$. We show that at least one agent in the group cannot gain by misreporting.

\textbf{Case 1}. When $\mathbb E[a']< \mathbb E[a]$, then it must be  $x_r'<x_r$, and the agent located at $x_r$ is in the group. Under the solution $(a,b)$, the cost of the agent at $x_r$ is
$$cost(a,b,x_r)=x_r-\mathbb E[a]+k(\mathbb E[b]-\mathbb E[a])+(1-\mathbb E[b]).$$
Under the solution $(a',b')$, the cost of the agent at $x_r$ is
$$cost(a',b',x_r)=x_r-\mathbb E[a']+k(\mathbb E[b']-\mathbb E[a'])+(1-\mathbb E[b']).$$
Since $|\mathbb E[a]-\mathbb E[a']|\ge |\mathbb E[b]-\mathbb E[b']|$, it follows that 
$$cost(a',b',x_r)-cost(a,b,x_r)=(1+k)(\mathbb E[a]-\mathbb E[a'])-(1-k)(\mathbb E[b']-\mathbb E[b])\ge 0, $$
indicating that this agent cannot gain.

\textbf{Case 2}. When $\mathbb E[a'] > \mathbb E[a], \mathbb E[b']\le \mathbb E[b]$, there exists  at least one agent $i\in S\cap N_1$. It is clear that any agent located at $[0,\frac{x_r}{2}]$ cannot gain because the change of the endpoints in both sides do not benefit this agent. For an agent $i\in N_1$ located at $(\frac{x_r}{2},x_r]$, under the solution $(a,b)$, the cost of the agent at $x_r$ is
\begin{align*}
    cost(a,b,x_i)&=p\cdot (x_r-x_i+k(\mathbb E[b]-x_r))+ (1-p)\cdot (x_i-\frac{x_r}{2}+k(\mathbb E[b]-\frac{x_r}{2})) +1-\mathbb E[b]\\
    &=p\cdot (x_r-x_i-kx_r)+ (1-p)\cdot (x_i-\frac{x_r}{2}-\frac{kx_r}{2}) +1-(1-k)\mathbb E[b]
\end{align*}
Under the solution $(a',b')$, the cost of this agent is
$$cost(a',b',x_i)=p\cdot (x_r'-x_i-kx_r')+ (1-p)\cdot (|x_i-\frac{x_r'}{2}|-\frac{kx_r'}{2}) +1-(1-k)\mathbb E[b'].$$
Since $\mathbb E[b']\le \mathbb E[b]$ and $|x_i-\frac{x_r'}{2}|\ge x_i-\frac{x_r'}{2}$, it follows that 
\begin{align*}
    cost(a',b',x_i)-cost(a,b,x_i)&\ge p\cdot (1-k)(x_r'-x_r)-(1-p) \cdot\frac{1+k}{2}(x_r'-x_r)\\
    &\ge \frac{1+k}{3-k}(1-k)(x_r'-x_r)-\frac{2-2k}{3-k}\cdot\frac{1+k}{2}(x_r'-x_r)= 0, 
\end{align*}
indicating that this agent cannot gain. 

\textbf{Case 3}. When $\mathbb E[a']> \mathbb E[a], \mathbb E[b']> \mathbb E[b]$, the agent located at $y_l$ must be in the group and misreport a location on the right of $y_l$. Still we calculate the cost of agent at $y_l$. Under solution $(a, b)$, it is
$$
cost(a, b, y_l) = \mathbb{E}[b] - y_l + k(\mathbb{E}[b] - \mathbb{E}[a]) + \mathbb{E}[a]
$$
And under solution $(a', b')$, it is
$$
cost(a', b', y_l) = \mathbb{E}[b'] - y_l + k(\mathbb{E}[b'] - \mathbb{E}[a']) + \mathbb{E}[a']
$$
So we have
$$cost(a',b',y_l)-cost(a,b,y_l)=(1+k)(\mathbb E[b']-\mathbb E[b])+(1-k)(\mathbb E[a']-\mathbb E[a])> 0, $$
and thus this agent cannot decrease the cost.
\end{proof}

Then we prove the approximation ratio.

\begin{proof}[Proof of Theorem~\ref{thm:ranub2}]
    For the approximation, given any instance with location profile $\mathbf x$, we 
assume w.l.o.g. that $1-y_r\ge x_l$. By Theorem \ref{thm:opt}, the optimal solution is $(a,b)=(\frac{x_l+x_r}{2},\frac{y_l-x_l}{2}+\frac12)$, and the optimal maximum cost is 
\begin{align*}
cost(a,b,x_l)&=a-x_l+k(b-a)+1-b=\frac12[1+(1-k)x_r+k(1-2x_l)-(1-k)y_l].
\end{align*}

    Now we consider the solution  returned by Mechanism  \ref{alg:ran3}. We discuss the 2 realizations of the probability distribution. 
    \begin{itemize}
        \item $(x_r,y_l)$ with probability $p$.  By the analysis in the proof of Theorem \ref{thm:21k} and the assumption $1-y_r\ge x_l$,  the maximum cost is attained by $x_l$, that is, 
        \begin{align*}
            MC(x_r,y_l,\mathbf x)=cost(x_r,y_l,x_l)&=x_r-x_l+k(y_l-x_r)+1-y_l.
        \end{align*}
        
         \item $(\frac{x_r}{2},\frac{1+y_l}{2})$ with probability $1-p$. The maximum cost is attained by $x_r$ or $y_l$, where both costs are equal to  
         $$MC(\frac{x_r}{2},\frac{1+y_l}{2},\mathbf x)=\frac{x_r}{2}+k(\frac{1+y_l}{2}-\frac{x_r}{2})+\frac{1-y_l}{2}.$$

    \end{itemize}

Then  the expected maximum cost is
\begin{align*}
    &p\cdot \left(x_r-x_l+k(y_l-x_r)+1-y_l\right)+ (1-p)\cdot \left( \frac{x_r}{2}+k(\frac{1+y_l}{2}-\frac{x_r}{2})+\frac{1-y_l}{2}\right)\\
    =&~\frac{(1+p)(1-k)(x_r-y_l) + 1+p + (1-p)k - 2px_l}{2}.
\end{align*}

Therefore, the ratio between the expected maximum cost and the optimal maximum cost is
\begin{align}
   & \frac{(1+p)(1-k)(x_r-y_l) + 1+p + (1-p)k - 2px_l}{1+k+(1-k)(x_r-y_l)-2kx_l}\nonumber\\
    \le &~\frac{1+p+(1-p)k-2px_l}{1+k-2kx_l}\label{eq:88}\\
    \le &~\max\left(\frac{1+p+(1-p)k}{1+k}, \frac{1+p+(1-p)k-p}{1+k-k}\right)\label{eq:99}\\
    =&~ 1+\max\left(\frac{p(1-k)}{1+k}, (1-p)k\right),\label{eq:1010}
\end{align}
where \eqref{eq:88} is because $\frac{1+p+(1-p)k-2px_l}{1+k-2kx_l}$ is no more than $1+p$, and \eqref{eq:99} comes from the facts that  $1-y_r \ge x_l$ and $x_l\le 0.5$.

Though setting $p=\frac{k^2+k}{1+k^2}$ minimizes the bound in \eqref{eq:1010}, recall that the mechanism is group strategyproof only when $p\ge \frac{1+k}{3-k}$. Hence,we set $p=\max\left(\frac{k^2+k}{1+k^2}, \frac{1+k}{3-k}\right)$. In this way, when $k<0.5$, the ratio is at most $\frac{4-2k}{3-k}$, and when $k>0.5$, the ratio is at most $\frac{1+k}{1+k^2}$.
\end{proof}

This mechanism ``binds'' two pairs of points as the output. There is another mechanism that ``unbinds'' the two pairs and let these $2\times 2$ points combine each other with 4 discrete outputs instead of 2, however, its upper bound is strictly worse (see details in Appendix \ref{sec:another}). Last we consider the lower bound.



\begin{theorem}\label{thm:randlb}
     No randomized strategyproof mechanism has an approximation ratio less than $\frac{6+6k}{5+7k}$ for the maximum cost, for any $k\in[0,1)$.
\end{theorem}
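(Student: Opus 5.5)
We work with $L=0$ and use the standard two-profile argument for randomized facility-location lower bounds. For a randomized strategyproof mechanism $f$ we only track the expectations $\mathbb E[a]$ and $\mathbb E[b]$, since every agent's cost is affine in them. A left agent at $x_i$ with $x_i\ge a$ almost surely has cost $x_i-(1+k)\mathbb E[a]+1-(1-k)\mathbb E[b]$. For a right agent the roles of $a$ and $b$ swap in the same way.

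The instance family is a two-agent base profile. Put the obstacle at $o$ near $1$, so that the right region is essentially the point $1$. Place one agent at $0$ and one agent at $1$, and let both left agents' behaviour be the issue. Concretely, take the profile $\mathbf x=(0,\,t,\,1)$ with $t<o$ close to $o$, and set $\mu=\mathbb E[a]$ under $f(\mathbf x)$; we may take $\mathbb E[b]\to1$ because the right region is negligible.

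\begin{enumerate}
\item In $\mathbf x'=(0,\,\mu',\,1)$, with the middle agent moved near $\mu$, the optimum is the midpoint and has cost about $\frac{\mu}{2}+k(1-\frac{\mu}{2})$. Strategyproofness compares the moved agent's cost under truthful and deviating reports. Its cost is linear in $\mathbb E[a]$ with a kink where the support straddles its location. Using the convexity bound $\mathbb E|x-a|\ge |x-\mathbb E[a]|$, we bound the maximum cost of $f(\mathbf x')$ from below as a function of $\mu$.
\item In $\mathbf x''=(\mu'',\,t,\,1)$, with the agent at $0$ moved toward $\mu$, we bound the maximum cost of $f(\mathbf x'')$ from below in the same way. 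The optimum cost here is about $\frac{1+k}{2}$ after scaling.
\end{enumerate}

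Each strategyproofness constraint gives an inequality of the form ``the expected cost of the deviator does not improve''. Combined with the identity for cost in terms of expectations, it pins $\mathbb E[a]$ in the new profile to lie on the appropriate side of $\mu$ up to a correction term. That term is the expected distance from the agent to the part of the support on its other side, and it adds cost with coefficient $2$ instead of $1\pm k$.

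Writing the two ratios as functions $r_1(\mu)$, increasing, and $r_2(\mu)$, decreasing, we minimise $\max(r_1,r_2)$ over $\mu$. We solve the linear equation $r_1=r_2$; with the right placements this should give exactly $\frac{6+6k}{5+7k}$. As a sanity check, $k=0$ should give $6/5$ and $k\to1$ should give $1$. The value $6/5$ suggests a three-agent construction with a $1/3$ versus $2/3$ split, as in classical $3/2$-type bounds.

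The main obstacle is the randomized strategyproofness step. Unlike Corollary~\ref{cor:lower}, a moved agent does not force the output distribution to stay the same. Only an inequality on its own expected cost survives, and that cost involves $\mathbb E|x_i-a|$, not just $\mathbb E[a]$. I would first try placing the moved agents at extreme points, $0$ or at or beyond the support, so that their cost becomes exactly affine in $\mathbb E[a]$ and $\mathbb E[b]$. If one profile pair does not reach the target constant, I would add a third symmetric profile with the right agents moved toward $\mathbb E[b]$ and average the three constraints, tuning agent positions, e.g.\ at $0,\frac12,1$ with $o$ chosen accordingly, until the minimax equals $\frac{6+6k}{5+7k}$.
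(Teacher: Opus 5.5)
Your proposal is not a proof yet. The decisive step is left open, and Step 1 cannot be carried out as stated: the strategyproofness constraints linking $\mathbf x=(0,1^-,1)$ and $\mathbf x'$ give no nontrivial bound on the cost of $f(\mathbf x')$ in terms of $\mu$ alone. For example, take $\mu=\frac12$, let $a$ be uniform on $\{0,1^-\}$ at $\mathbf x$, and let $a'\equiv\frac14$ at $\mathbf x'=(0,\frac12,1)$. The agent at $\frac12$ pays $\frac14+\frac{3k}{4}$ truthfully versus $\frac12+\frac k2$ by reporting $1^-$. The near-obstacle agent pays $\frac{1+k}{2}$ truthfully versus $\frac{3(1+k)}{4}$ by reporting $\frac12$. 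So neither gains, and the ratio on $\mathbf x'$ is $1$; the mechanism pays only on the base profile, where its ratio is $\frac{2+k}{1+k}$.

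So the dispersion of $a$ at the base profile must be charged to the ratio there, and your plan never does this. Your $\mathbf x''$ adds nothing: since $|\mu''-a|\le\max(a,1-a)$, it has the same maximum-cost function and optimum as $\mathbf x$, and you again treat its ratio as a function of $\mu$. If you ignore dispersion, your $r_1,r_2$ are exactly the two ratios of the paper's deterministic proof, whose minimax is $\frac{2}{1+\sqrt k}$. At $k=0$ that is $2$, while \textsc{RandMaxCost} achieves $\frac43$, so the whole randomized bound lives in the ``correction term'' you leave unbounded.

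Placing moved agents at extreme points does not help. Between $\mathbf x$ and $\mathbf x'$, the only constraint affine in the means is the near-obstacle agent's, and it gives merely $\mathbb E[a']\le\mathbb E[a]$. That is met at ratio $1$ on both profiles by $a\equiv\frac12$, $a'\equiv\frac14$. The constraint that actually bites is the interior agent's, and it is intrinsically non-affine. Finally, tuning positions ``until the minimax equals the target'' is not an argument.

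The missing idea, which is how the paper argues, has three steps. The approximation guarantee at the base profile caps the mechanism's dispersion around the base optimum. The triangle inequality turns that cap into a lower bound on the expected distance of the agent who will deviate. Strategyproofness then carries that lower bound to the second profile.

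The paper uses the base $(\frac13,\frac23,1)$ and obtains $(1-k)\mathbb E|s-\frac13|\le(r-1)\frac{1+2k}{3}$, hence $\mathbb E|s-\frac23|\ge\frac13-\mathbb E|s-\frac13|$. It then moves the agent at $\frac23$ next to the obstacle and splits on $\mathbb E[s']\le\mathbb E[s]$ versus $\mathbb E[s']>\mathbb E[s]$.

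Your own pair also works if you put the interior agent at the fixed base optimum $\frac12$ rather than at $\mu$. Let $a\sim f(0,1^-,1)$ and $a'\sim f(0,\frac12,1)$.
\begin{itemize}
\item The two ratio constraints, together with $\mathbb E[a]\le c+\mathbb E|a-c|$, give $(1-k)\mathbb E|a-\frac12|\le(r-1)\frac{1+k}{2}$ and $(1-k)\mathbb E|a'-\frac14|\le(r-1)\frac{1+3k}{4}$.
\item The two strategyproofness constraints give $\mathbb E|\frac12-a'|\le\mathbb E|\frac12-a|$.
\item The triangle inequality gives $\frac14\le\mathbb E|a'-\frac14|+\mathbb E|\frac12-a'|$.
\end{itemize}
Combining these yields $r\ge\frac{4+4k}{3+5k}$, which is at least $\frac{6+6k}{5+7k}$ for all $k\in[0,1)$.
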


\begin{proof}
Suppose that $f$ is a randomized strategyproof mechanism with approximation ratio $r<\frac{6+6k}{5+7k}$. 
Consider the instance with location profile $\mathbf x=(\frac13,\frac23,1)$, where the obstacle is located at $1-\epsilon$ and $\epsilon>0$ is a sufficiently small positive number. For convenience, we will ignore the terms with respect to $\epsilon$ in the following calculations. The optimal solution is $(\frac13,1)$, and the optimal maximum cost is $\frac13+\frac{2k}{3}$.  Let $P$ be the distribution of the left endpoint returned by mechanism $f$. For any realization  $s\sim P$, the maximum cost is $\frac13+|s-\frac13|+k(1-s)$ (attained by either the agent at $\frac13$ or the agent at $\frac23$), and the expected maximum cost is 
$$MC= \frac13+\mathbb E[|s-\frac13|]+k-k\mathbb E[s].$$
By the approximation ratio of $r$, we have 
\begin{align}
&\frac{\frac13+\mathbb E[|s-\frac13|]+k-k\mathbb E[s]}{\frac13+\frac{2k}{3}}\le r,\label{eq:11}\\
\Rightarrow~~&\mathbb E[|s-\frac13|]\le r(\frac13+\frac{2k}{3})+k(\mathbb E[|s-\frac13|]+\frac13)-\frac13-k,\label{eq:22}\\
\Rightarrow~~  &  \mathbb E[|s-\frac23|]\ge \frac13-\mathbb E[|s-\frac13|]\ge \frac13-\frac{r(\frac13+\frac{2k}{3})+\frac k3-\frac13-k}{1-k}.\label{eq:33}
\end{align}

Next, we consider the instance with location profile $\mathbf x=(\frac13,1-2\epsilon,1)$, where the obstacle is located at $1-\epsilon$. Again we ignore the terms with respect to $\epsilon$ for simplicity.  The optimal solution is $(\frac12,1)$, and the optimal maximum cost is $\frac12+\frac{k}{2}$. Let $P'$ be the distribution of the left endpoint returned by mechanism $f$, and $s'\sim P'$. By the strategyproofness, we have 
\begin{equation}\label{eq:444}
    \mathbb E_{s'\sim P'}[|s'-\frac23|]+k(1-\mathbb E[s'])\ge  \mathbb E[|s-\frac23|]+k(1-\mathbb E[s]),
\end{equation}
as otherwise the agent located at $\frac23$ in the first instance would like to misreport the location as $1-2\epsilon$ and decrease the cost. 

If $\mathbb E[s']\le \mathbb E[s]$, then by \eqref{eq:11} we have
\begin{align*}
& \frac{(1-k)\mathbb E[s']+k}{\frac13+\frac{2k}{3}}\le   \frac{\mathbb E[s]+k-k\mathbb E[s]}{\frac13+\frac{2k}{3}}\le r\\
\Rightarrow~~ & \mathbb E[s']\le \frac{(\frac13+\frac{2k}{3})r-k}{1-k}.
\end{align*}
The maximum cost induced by the mechanism is at least 
\begin{align*}
    \frac12+\mathbb E[|s'-\frac12|]+k(1-\mathbb E[s'])&\ge 1+k-(1+k)\mathbb E[s']\\
    &\ge 1+k-(1+k)\frac{(\frac13+\frac{2k}{3})r-k}{1-k}
\end{align*}
Recall that the optimal maximum cost is $\frac{1+k}{2}$. Hence, the approximation ratio is at least 
\begin{align*}
    2-\frac{2(\frac13+\frac{2k}{3})r-2k}{1-k}=\frac{2-\frac{2+4k}{3}r}{1-k},
\end{align*}
which can be easily verified to be larger than $\frac{6+6k}{5+7k}$ for any $k\in[0,1)$, given that $r<\frac{6+6k}{5+7k}$. Therefore, it contradicts the approximation ratio.

If $\mathbb E[s']>\mathbb E[s]$, then by \eqref{eq:444} we have $\mathbb E[|s'-\frac23|]\ge  \mathbb E[|s-\frac23|]$.
The maximum cost induced by the mechanism is at least 
\begin{align*}
    \frac12+\mathbb E[|s'-\frac12|]+k(1-\mathbb E[s'])&\ge \frac12+\mathbb E[|s'-\frac12|]+k(\frac12-\mathbb E[|s'-\frac12|])\\
    &=\frac{1+k}{2}+(1-k)\mathbb E[|s'-\frac12|]\\
    &\ge \frac{1+k}{2}+(1-k)(\mathbb E[|s'-\frac23|]-\frac16)\\
    &\ge \frac{1+k}{2}+(1-k)(\mathbb E[|s-\frac23|]-\frac16)\\
    &\ge \frac{1+k}{2}+(1-k) (\frac13-\frac{r(\frac13+\frac{2k}{3})+\frac k3-\frac13-k}{1-k})  -\frac{1-k}{6}\\
    &=\frac{2+k}{3}-r(\frac13+\frac{2k}{3})-\frac k3+\frac13+k\\
    &= 1+k-r(\frac13+\frac{2k}{3}),
\end{align*}
where the last inequality comes from \eqref{eq:33}.
Then the approximation ratio is at least 
$$\frac{1+k-r(\frac13+\frac{2k}{3}) }{\frac12+\frac k2},$$
which is strictly larger than $r$ when $r<\frac{6+6k}{5+7k}$. This gives a contradiction. 
\end{proof}

It is worth noting that the largest gap between the upper and lower bounds is about
0.162,
 which happens when 
$k \approx0.249$.

\section{Conclusion}

We studied a novel mechanism design setting for connecting two regions disconnected by obstacles under disruptions by adding a pathway to minimize the social cost and the maximum cost of the agents. 
We first characterize all of the strategyproof and anonymous mechanisms as 2-dimensional generalized median mechanisms.
For the social cost and maximum cost, we derived optimal solutions on where to add the pathway and design strategyproof mechanisms. 


For the open directions, an immediate direction is to examine whether one can improve the gaps between the upper and lower bounds. Moreover, it would be interesting to consider more general settings, including other types of regions (e.g., convex regions), more than two regions, more than one obstacle, or more than one pathway. 

\section*{Acknowledgments}
Hau Chan is supported by the National Institute of General Medical Sciences of the National Institutes of Health [P20GM130461], the Rural Drug Addiction Research Center at the University of Nebraska-Lincoln, and the National Science Foundation under grants IIS:RI \#2302999 and IIS:RI \#2414554. 
The work is supported in part by the Guangdong
Provincial Key Laboratory IRADS (2022B1212010006, R0400001-22).
 Chenhao Wang is supported by NSFC under grant 12201049, and by UIC (R0200008-23, R0700036-22). The content is solely the responsibility of the authors and does not necessarily represent the official views of the funding agencies.

\bibliographystyle{plain}
\bibliography{mybibfile}

\appendix
\newpage
\onecolumn

\section{Remarks on single-peaked preferences}\label{app:a}

We remark that the preference profile of agents  is not one-dimensional single-peaked, though we have shown it is two-dimensional single-peaked. 

\begin{proposition}
    For any instance of our problem, the preference profile of agents is not one-dimensional single-peaked.
\end{proposition}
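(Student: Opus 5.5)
Throughout we take $L=0$, as the reference to Section~\ref{sec:cha} indicates. The plan is to show that no strict linear order $<_A$ on $D$ can make the preferences of all agents single-peaked at once. One-dimensional single-peakedness is a property of a profile, so "for any instance" has to be read as "for any instance with enough agents." We therefore fix $o$ and consider profiles that contain agents of both types: a left agent at $x_1\in[0,o)$ with peak $(x_1,1)$, and a right agent at $x_2\in(o,1]$ with peak $(0,x_2)$.

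Suppose an axis $<_A$ works. Single-peakedness along $<_A$ forces the upper contour sets $\{\alpha:\alpha\succeq_i\beta\}$ to be intervals of $<_A$ that contain the peak. Equivalently, any three outcomes $\alpha<_A\beta<_A\gamma$ cannot have $\beta$ strictly worse than both $\alpha$ and $\gamma$ for any agent. So it suffices to find a small set of outcomes on which these "no middle is worst" constraints, taken over two or three agents, cannot all hold in any linear order.

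For the construction we use four outcomes near the corners of $D$:
\[
P=(0,1),\quad Q=(o^-,1),\quad R=(0,o^+),\quad S=(o^-,o^+).
\]

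A left agent at $x_1=0$ has cost $|a|+k(b-a)+1-b$, which gives
\[
P < Q\approx R < S \text{ in cost when } k<1.
\]
More precisely, $Q$ costs $o+k(1-o)$ and $R$ costs $ko+1-o$, so we can choose $o$ to make $Q$ and $R$ distinct as needed. A right agent at $x_2=1$ ranks $P$ best and $S$ worst, with $Q$ and $R$ again in the middle. A left agent near $o$ ranks $Q$ best. A right agent near $o$ ranks $R$ best and $Q$ worst.

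Each agent's worst outcome among the four must lie at an end of the axis restricted to $\{P,Q,R,S\}$. The agents near $o$ also give constraints of the form "$Q$ is worst, so $Q$ is at an end," while other agents place $S$ as worst, so $S$ is at an end too. Preferences over triples then force the peak of each agent to be adjacent-consistent. In particular, the left agent near $o$ has peak $Q$ at an extreme position while still needing $S$ at an end. The key check is to derive a contradiction from these constraints, namely two agents whose worst-of-triple requirements conflict. If four outcomes turn out to be insufficient, I would add the midpoints $(o/2,1)$ and $(0,(1+o)/2)$ and use agents whose peaks are at those points.

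The main obstacle is choosing the agent set and outcomes so that the triple constraints are genuinely contradictory, because the costs depend on $k$ and $o$. I would first handle $k=0$, where the costs are sums of two one-dimensional distances and the failure is the standard fact that a product of two lines is not one-dimensional single-peaked. Since all the relevant inequalities are strict, they persist for small perturbations, and I would then check that they still hold for all $k\in[0,1)$ by tracking the $(1\pm k)$ coefficients.
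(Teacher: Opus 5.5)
\textbf{The gap: the contradiction is never derived.} Your framework is sound. A strict worst outcome among any three outcomes cannot lie strictly between the other two on the axis. But the proposal stops before the step that actually constitutes the proof. The constraints you derive are that $Q$ sits at an end (it is the strict worst for the right agent near $o$) and that $S$ sits at an end (it is the strict worst for the agents at $0$ and $1$). These are jointly satisfiable: for example, $Q<_A P<_A R<_A S$ is consistent with all of those agents. Calling the remaining step ``adjacent-consistent'' does not produce a contradiction, and you concede this by calling it ``the key check'' and hedging that four outcomes might not suffice.

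The missing ingredient is the \emph{worst} outcome of the left agent near $o$, which you never compute. Take concrete points $Q=(o-\delta,1)$, $R=(0,o+\delta)$, $S=(o-\delta,o+\delta)$, and put that agent at $o-\delta$. Its costs are
\[
Q:\ k(1-o+\delta),\qquad S:\ 1-o-\delta+2k\delta,\qquad R:\ 1-2\delta+k(o+\delta).
\]
So $R$ is its strict worst, with margins $(1+k)(o-\delta)$ over $S$ and $1-k+2ko-2\delta$ over $Q$.

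\textbf{How this closes the proof.} The triple $\{Q,R,S\}$ alone now suffices.
\begin{itemize}
\item The agent at $0$ has strict worst $S$, with margins $(1-k)(1-o-\delta)$ and $(1-k)(o-\delta)$.
\item The right agent at $o+\delta$ has strict worst $Q$, with margins $(1+k)(1-o-\delta)$ and $1+k-2ko-2\delta$.
\item The left agent at $o-\delta$ has strict worst $R$.
\end{itemize}
Whichever of $Q,R,S$ is in the middle under $<_A$ violates single-peakedness for one of these three agents. All margins are positive whenever $0<\delta<\min\{o,1-o,(1-k)/2\}$, for every $k\in[0,1)$. So your plan to treat $k=0$ and then perturb is unnecessary, and on its own it would only cover $k$ near $0$.

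\textbf{Comparison with the paper.} Completed this way, your route is a shorter version of the paper's that works for every $o\in(0,1)$. The paper fixes $o=0.5$, uses agents at $0$, $0.2$, $0.9$ and outcomes $(0,1),(0,0.9),(0.2,1),(0.2,0.9)$. It then does a case analysis on whether $(0.2,1)$ and $(0.2,0.9)$ lie on the same side of the peak $(0,1)$. The underlying obstruction is the same: each of $(0,0.9)$, $(0.2,1)$, $(0.2,0.9)$ is the strict worst for one of its three agents.

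\textbf{On reading ``for any instance''.} ``For any instance with enough agents'' is not the right reading. Agents at a single location always form a single-peaked profile, however many there are, and so do agents at $0$ and $1$, since they have identical cost functions. What both arguments actually establish is that, for the given $o$, some profile of agent locations is not single-peaked.
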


\begin{proof}
    Let the obstacle be located at $o=0.5$. Suppose for contradiction that there is a one-dimensional axis $A$ on $D$ so that every agent is one-dimensional single-peaked with respect to $A$. Consider an agent $i\in N_1$ located at $x_i=0$. This agent has a  peak at $(0,1)$ and a preference
\begin{equation}\label{eq:pref}
    (0,1)\succeq_i (0,0.9)\succeq_i(0.2,1)\succeq_i (0.2,0.9)
\end{equation}
over several possible outcomes. 

If both outcomes $(0.2,1), (0.2,0.9)$ are on the same region of $(0,1)$ with respect to axis $A$, by the single-peakedness and \eqref{eq:pref},  the ordering can only be
$(0,1)<_A (0,0.9)<_A (0.2,1)<_A (0.2,0.9)$, or $(0.2,0.9)<_A (0.2,1)<_A(0,0.9)<_A (0,1)$, or $(0,0.9)<_A (0,1)<_A (0.2,1)<_A (0.2,0.9)$, or $(0.2,0.9)<_A (0.2,1)<_A(0,1)<_A (0,0.9)$. Then we consider an agent $i'\in N_2$ located at $x_{i'}=0.9$ whose peak is $(0,0.9)$. In all of the four cases, by the strict ordering $<_A$ and the single-peakedness of $i'$, agent $i'$ should prefer $(0.2,1)$  to $(0.2,0.9)$. However, this is not true because
$$cost(0.2,1,x_{i'})=0.3+0.8k >0.2+0.7k =cost(0.2,0.9,x_{i'}),$$
giving a contradiction. 

If the outcomes $(0.2,1), (0.2,0.9)$ are on different regions of $(0,1)$ with respect to axis $A$, by the single-peakedness of agent $i$ and \eqref{eq:pref},  the ordering can only be
$(0.2,1)<_A (0,1)<_A (0,0.9)<_A (0.2,0.9)$, or $(0.2,1)<_A (0,0.9)<_A (0,1)<_A (0.2,0.9)$, or $(0.2,0.9)<_A (0,1)<_A (0,0.9)<_A (0.2,1)$, or $(0.2,0.9)<_A (0,0.9)<_A(0,1)<_A (0.2,1)$.
Then we consider an agent $i''\in N_1$ located at $x_{i''}=0.2$ whose peak is $(0.2,1)$. In all of the four cases, by the strict ordering $<_A$ and the single-peakedness of $i''$, agent $i''$ should prefer $(0,0.9)$  to $(0.2,0.9)$. However, this is not true because
$$cost(0,0.9,x_{i''})=0.3+0.9k >0.1+0.7k =cost(0.2,0.9,x_{i''}),$$
giving a contradiction. 

Therefore, it is impossible for every agent preference to be one-dimensional single-peaked with respect to axis $A$.
\end{proof}

As an example of $2$-dimensional generalized median mechanisms, the \textsc{Median} mechanism selects the $x$-coordinate (resp. $y$-coordinate) of the outcome to be the median of the $x$-coordinates (resp. $y$-coordinates) of $n$ agent peaks. 
Note that the peak of an agent $i\in N_1$ is $(x_i,1)$, and the peak of an agent $j\in N_2$ is $(0,x_j)$. That is, the \textsc{Median} mechanism  returns 
\begin{align*}
    \big(&\text{med}(0,\ldots,0, (x_i)_{i\in N_1},b_1,\ldots,b_{n+1}), \\
    &\text{med}((x_j)_{j\in N_2},1,\ldots,1,b_1,\ldots,b_{n+1})\big),
\end{align*} where function $\text{med}$ has $2n+1$ entries, $b_1=\cdots=b_{\lceil\frac n2\rceil}=-\infty$, and $b_{\lceil\frac n2\rceil+1}=\cdots=b_{n+1}=+\infty$. 

While this mechanism is of interests in facility location problems \cite{procaccia2013approximate,sui2015mechanism}, it does not perform as well as our mechanisms proposed for both social cost and maximum cost.
 We give some examples to illustrate it. Consider the instance with location profile $\mathbf x=(0,\epsilon,1-\epsilon,1)$, where the obstacle $o$ satisfies $1-\epsilon<o<1$ and $\epsilon>0$ is a sufficiently small positive number. For the maximum cost, the solution returned by the \textsc{Median} mechanism is $(\epsilon, 1)$, and the optimal solution is $(\frac{1-\epsilon}{2}, 1)$. The approximation ration is at least
\begin{align*}
    \frac{MC(\epsilon, 1, \mathbf{x})}{MC(\frac{1-\epsilon}{2}, 1, \mathbf{x})}&=\frac{k(1-\epsilon)+1-2\epsilon}{k(1-\frac{1-\epsilon}{2})+\frac{1-\epsilon}{2}}\rightarrow 2,
\end{align*}
which is  larger than the approximation ratio $\frac{2}{1+k}$ of our $\textsc{TwoInnerExtreme}$ mechanism  for any $k\in(0,1)$. 
 
 For the social cost, we note that when $k=0$ the \textsc{Median} mechanism is exactly our $\textsc{OptSocCost}$ mechanism, and thus is optimal. 
When $k> \frac12$, consider profile $\mathbf x$.  The solution returned by the \textsc{Median} mechanism is again $(\epsilon, 1)$, and the optimal solution is $(a\rightarrow 1-\epsilon, 1)$. The approximation ratio is at least
\begin{align*}
    \frac{SC(\epsilon, 1, \mathbf{x})}{SC(1-\epsilon, 1, \mathbf{x})}&=\frac{4k(1-\epsilon)+\epsilon+\epsilon+1-2\epsilon}{4k\epsilon+2(1-\epsilon)+(1-2\epsilon)}\rightarrow \frac{4k+1}{3},
\end{align*}
which is worse than our mechanism $\textsc{OptSocCost}$. 
When $0<k \le \frac12$, consider another instance with one agent located at $0$, $t$ agents at $\epsilon$, and $t$ agents at $1$ ($t=1,2,3,...$), where the total number of agents is $n=2t+1>\frac 1k$ when $t$ is sufficiently large, and $\epsilon > 0$ is a sufficiently small positive number. The obstacle $o$ satisfies $0 < o < \epsilon$. Thus, the location profile is $\mathbf x' = (0, \overbrace{\epsilon, \ldots, \epsilon}^{t}, \overbrace{1, \ldots, 1}^{t})$. The \textsc{Median} mechanism outputs the solution $(0, 1)$, while the optimal solution is $(0, \epsilon)$. The approximation ratio is at least
\begin{align*}
    \frac{SC(0, 1, \mathbf{x'})}{SC(0, \epsilon, \mathbf{x'})}&=\frac{k(2t+1)+t(1-\epsilon)}{k\epsilon(2t+1)+(t+1)(1-\epsilon)}\rightarrow \frac{t+k(2t+1)}{t+1} = 1 + \frac{k(2t+1)-1}{t+1} > 1,
\end{align*}
which is worse than our mechanism $\textsc{OptSocCost}$.

\section{Other Two-Extreme Mechanisms}\label{app:other}

The \textsc{TwoInnerExtreme} mechanism that returns $(x_r,y_l)$ is proven to be group strategyproof and $\frac{2}{1+k}$-approximation for the maximum cost in Theorem \ref{thm:21k}. 
We remark that other two-extreme mechanisms that return $(x_l,y_r)$, $(x_l,y_l)$ or $(x_r,y_r)$  are also group strategyproof, but the approximation ratio is 2. 

The group strategyproofness follows from a similar analysis as in the proof of Theorem \ref{thm:21k}. For the approximation ratio, we focus only on the instances with $1-y_r\ge x_l$, as other instances are symmetric.  The optimal solution is $(a^*,b^*)=(\frac{x_l+x_r}{2},\frac{y_l-x_l}{2}+\frac12)$, and the optimal maximum cost is attained by $x_l,x_r$ and $y_l$ simultaneously.  
 For the mechanism that returns $(x_l,y_r)$, clearly the maximum cost is attained by $x_r$ or $y_l$. The cost of the agent at $x_r$ is 
    \begin{align*}
        cost(x_l,y_r,x_r)
        &=2(a^*-x_l)+k(y_r-x_l)+1-y_r\\
        &\le 2(a^*-x_l)+2k(b^*-a^*)+2(1-b^*)\\
        &=2\cdot cost(a^*,b^*,x_r),
    \end{align*}
    where the last inequality comes from the fact that $b^*=\frac12+\frac{y_l-x_l}{2}\le \frac{1+y_r}{2}$. The cost of the agent at $y_l$ is 
    \begin{align*}
        cost(x_l,y_r,y_l)&=y_r-y_l+k(y_r-x_l)+x_l\\
        &\le 2(b^*-y_l)+2k(b^*-a^*)+x_l\\
        &\le 2(b^*-y_l)+2k(b^*-a^*)+2a^*\\
        &=2\cdot cost(a^*,b^*,y_l).
    \end{align*}
   For the mechanism that returns $(x_l,y_l)$, the maximum cost is attained by $x_r$, that is, 
    \begin{align*}
        cost(x_l,y_l,x_r)
        &=2(a^*-x_l)+k(y_l-x_l)+1-y_l\\
        &\le 2(a^*-x_l)+2k(b^*-a^*)+2(1-b^*)\\
        &=2\cdot cost(a^*,b^*,x_r),
    \end{align*}
     For the mechanism that returns $(x_r,y_r)$, we have $cost(x_r,y_r,x_l)\le cost(x_r,y_l,x_l)\le \frac{2}{1+k}cost(a^*,b^*,x_r)$ by Theorem \ref{thm:21k}, and 
    \begin{align*}
        cost(x_r,y_r,y_l)&=y_r-y_l+k(y_r-x_r)+x_r\\
        &\le 2(b^*-y_l)+2k(b^*-a^*)+2a^*\\
        &=2\cdot cost(a^*,b^*,y_l).
    \end{align*}
Therefore, the approximation ratio of all above mechanisms is 2.

The 2-approximation for the two-extreme mechanisms that return $(x_l,y_r)$, $(x_l,y_l)$ or $(x_r,y_r)$ cannot be improved, for any $k\in[0,1]$. Consider any instance with $x_l=0$, $x_r=1-\epsilon$ for some sufficiently small positive number $\epsilon$, and $y_l=y_r=1$, and the obstacle is between $1-\epsilon$ and $1$. The optimal solution is $(\frac{x_l+x_r}{2},1)$ by Theorem \ref{thm:opt}, and the optimal maximum cost is $\frac{1-\epsilon}{2}+k(1-\frac{1-\epsilon}{2})=\frac{1+k}{2}-\frac{1-k}{2}\epsilon$. However,  the maximum cost induced by  $(x_l,y_r)=(x_l,y_l)=(0,1)$ is $1-\epsilon+k$. We have $\frac{1-\epsilon+k}{(1+k)/2-(1-k)\epsilon/2}\rightarrow 2$, when $\epsilon$ approaches $0$. A symmetric instance shows that the 2-approximation analysis for mechanism $(x_r,y_r)$ is also tight. 


\section{Proof of Theorem \ref{thm:round2}}\label{app:f}

In this proof, we first treat $c$ as a parameter in interval $[0,1]$ and do not specify its value. Then we select the best value of $c$ to minimize the approximation ratio.

\paragraph{Proof of the group strategyproofness.}




Let $(a, b)$ be the output. Note that $a$ and $b$ are independent random variables, and $(1-c)o$ and $o+c(1-o)$ are two constants only related to $k$. Consider a group of agents $S\subseteq N_1\cup N_2$, and let $(a',b')$ be the output when the agents in $S$ misreport. Assume w.l.o.g. that $|a'-a|\ge |b'-b|$ and $|a'-a|>0$.

When $a< x_r$, since $|a'-a|>0$, an agent located at $x_r$ must be in the group and misreport a location to the left of $x_r$, implying that $a'<a<x_r$.  The cost of this agent decreases by at most $(1-k)|b'-b|-(1+k)(a-a')\le 0$, indicating that this agent can never gain. 

When $a=x_r$, it is either the case when an agent at $x_r$  misreports to its left so that $a'<a$, or the case when some agent in $N_1$ misreports to the right of $x_r$ so that $a'>a$. In both cases,  the cost of this agent decreases by at most $(1-k)|b'-b|-(1-k)|a'-a|\le 0$, indicating that this agent can never gain.

\medskip
\paragraph{Proof of the approximation ratio.}
   Given any instance with location profile $\mathbf x$, we 
assume w.l.o.g. that $1-y_r\ge x_l$. By Theorem \ref{thm:opt}, the optimal solution is $(a^*,b^*)=(\frac{x_l+x_r}{2},\frac{y_l-x_l}{2}+\frac12)$, and the optimal maximum cost is 
\begin{align*}
OPT = cost(a^*,b^*,x_l)&=a^*-x_l+k(b^*-a^*)+1-b^*=\frac12[1+(1-k)x_r+k(1-2x_l)-(1-k)y_l].
\end{align*}
We discuss four cases with respect to the output of the mechanism. 

\textbf{Case 1}. $x_r\le o(1-c), y_l \ge o+c(1-o)$. The output is $a=x_r, b=y_l$, and the maximum cost must be achieved by $x_l$, because when $1-y_r\ge x_l$ the cost at $y_r$ is no more than the cost at $x_l$. We have
    \begin{align}
        \frac{cost(a,b,x_l)}{2\cdot OPT} &=\frac{x_r-x_l+k(y_l-x_r)+1-y_l}{1+(1-k)x_r+k(1-2x_l)-(1-k)y_l}\nonumber\\
        &= \frac{1+(1-k)x_r-x_l-(1-k)y_l}{1+(1-k)x_r+k(1-2x_l)-(1-k)y_l}\label{eq:cas10}\\
        &\le  \frac{1+(1-k)(1-c)o-x_l-(1-k)y_l}{1+(1-k)(1-c)o+k(1-2x_l)-(1-k)y_l}\nonumber\\
        &\le \frac{1+(1-k)(1-c)o-x_l-(1-k)(o+c(1-o))}{1+(1-k)(1-c)o+k(1-2x_l)-(1-k)(o+c(1-o))}\nonumber\\
        &=  \frac{1-(1-k)c-x_l}{1+k-(1-k)c-2kx_l}\label{eq:cas11}\\
        &\le  \frac{1-(1-k)c}{1+k-(1-k)c}.\nonumber
    \end{align}
    The first inequality is because \eqref{eq:cas10} is no more than 1, and the last inequality is because \eqref{eq:cas11} is no more than $\frac{1}{2k}$. 

    \textbf{Case 2}. $x_r\ge o(1-c), y_l\le o+c(1-o)$. The output is $a=o(1-c), b=o+c(1-o)$. Note that the cost at $y_r$ is either at most the cost at $y_l$ or at most that at $x_l$ (since $1-y_r\ge x_l$).  Thus the maximum cost is achieved by at least one of the agents at $x_l, x_r, y_l$. First, we consider the cost at $x_l$, and we can assume $x_l\le o(1-c)$; otherwise we have $cost(a,b,x_l)\le cost(a,b,x_r)$ and it reduces to consider the cost at $x_r$.  We have
    \begin{align*}
        \frac{cost(a,b,x_l)}{2\cdot OPT} &=\frac{o(1-c)-x_l+k(o+c(1-o)-o(1-c))+1-o-c(1-o)}{1+(1-k)x_r+k(1-2x_l)-(1-k)y_l}\\
        &= \frac{1-(1-k)c-x_l}{1+(1-k)x_r+k(1-2x_l)-(1-k)y_l}\\
        &\le \frac{1-(1-k)c-x_l}{1+(1-k)(1-c)o+k(1-2x_l)-(1-k)(o+c(1-o))}\\
        &=  \frac{1-(1-k)c-x_l}{1+k(1-2x_l)-(1-k)c}\\
        &\le \frac{1-(1-k)c}{1+k-(1-k)c}.
    \end{align*}

    Second, for the cost at $x_r$, we have
    \begin{align*}
        \frac{cost(a,b,x_r)}{2\cdot OPT} &= \frac{x_r-o(1-c)+kc+1-(o+c(1-o))}{1+(1-k)x_r+k(1-2x_l)-(1-k)y_l}\\
        &=  \frac{x_r-2o(1-c)+1-(1-k)c}{1+(1-k)x_r+k(1-2x_l)-(1-k)y_l}\\
        &\le  \frac{x_r-2o(1-c)+1-(1-k)c}{1+(1-k)x_r+k(1-2x_l)-(1-k)(o+c(1-o))}\\
         &\le  \frac{o-2o(1-c)+1-(1-k)c}{1+(1-k)o+k(1-2x_l)-(1-k)(o+c(1-o))}\\
        &\le  \frac{o-2o(1-c)+1-(1-k)c}{1+(1-k)o+k(1-2\min(o,1-o))-(1-k)(o+c(1-o))}\\
        &\le  \max\left(\frac{1-(1-k)c}{1+k-(1-k)c}, \frac{1+2ck}{2-(1-k)c}, c\right).
    \end{align*}
    The second last inequality is because $x_l\le o$ and $x_l\le 1-y_r\le 1-o$.
    For the last inequality, we regard $o$ as a variable, and it is easy to find that when $0\le o\le 0.5$, we have
    $$
    \frac{cost(a,b,x_r)}{2\cdot OPT}\le \frac{o-2o(1-c)+1-(1-k)c}{1+(1-k)o+k(1-2o)-(1-k)(o+c(1-o))},
    $$
    and when $0.5<o\le 1$, we have
    $$
    \frac{cost(a,b,x_r)}{2\cdot OPT}\le \frac{o-2o(1-c)+1-(1-k)c}{1+(1-k)o+k(1-2(1-o))-(1-k)(o+c(1-o))}.
    $$
    Since both expressions on the right hand side are monotone with respect to $o$ (possibly increasing or decreasing), the upper bound must be attained by the maximum of the three cases when $o=0, 0.5, 1$, establishing the inequality. 

    Last, for the cost at $y_l$ we have
    \begin{align*}
        \frac{cost(a,b,y_l)}{2\cdot OPT} &=\frac{o(1-c)+kc+o+c(1-o)-y_l}{1+(1-k)x_r+k(1-2x_l)-(1-k)y_l}\\
        &\le  \frac{o(1-c)+kc+o+c(1-o)-y_l}{1+(1-k)o(1-c)+k(1-2x_l)-(1-k)y_l}\\
        &\le \frac{o(1-c)+kc+o+c(1-o)-o}{1+(1-k)o(1-c)+k(1-2x_l)-(1-k)o}\\
        &\le \frac{o(1-c)+kc+o+c(1-o)-o}{1+(1-k)o(1-c)+k(1-2\min(o, 1-o))-(1-k)o}\\
        &\le  \max\left(c, \frac{1+2ck}{2-(1-k)c},\frac{1-(1-k)c}{1+k-(1-k)c}\right).
    \end{align*}
For the last inequality, we regard $o$ as a variable, and it is easy to find that when $0\le o\le 0.5$, we have
    $$
    \frac{cost(a,b,y_l)}{2\cdot OPT}\le \frac{o(1-c)+kc+o+c(1-o)-o}{1+(1-k)o(1-c)+k(1-2o)-(1-k)o},
    $$
    and when $0.5<o\le 1$, we have
    $$
    \frac{cost(a,b,y_l)}{2\cdot OPT}\le \frac{o(1-c)+kc+o+c(1-o)-o}{1+(1-k)o(1-c)+k(1-2(1-o))-(1-k)o}.
    $$
    Since both expressions on the right hand side are monotone with respect to $o$, the upper bound must be attained by the maximum of the three cases when $o=0, 0.5, 1$, establishing the inequality. 

 \textbf{Case 3}. $x_r\le o(1-c), y_l\le o+c(1-o)$. The output is $a=x_r, b=o+c(1-o)$, and the maximum cost is achieved by $x_l$ or $y_l$. First, we consider the cost at $x_l$, and we have
    \begin{align*}
        \frac{cost(a,b,x_l)}{2\cdot OPT} &= \frac{x_r-x_l+k(o+c(1-o)-x_r)+1-o-c(1-o)}{1+(1-k)x_r+k(1-2x_l)-(1-k)y_l}\\
        &=  \frac{1+(1-k)x_r-x_l-(1-k)(o+c(1-o))}{1+(1-k)x_r+k(1-2x_l)-(1-k)y_l}\\
        &\le \frac{1+(1-k)(1-c)o-x_l-(1-k)(o+c(1-o))}{1+(1-k)(1-c)o+k(1-2x_l)-(1-k)y_l}\\
        &\le  \frac{1+(1-k)(1-c)o-x_l-(1-k)(o+c(1-o))}{1+(1-k)(1-c)o+k(1-2x_l)-(1-k)(o+c(1-o))}\\
        &= \frac{1-(1-k)c-x_l}{1-(1-k)c+k(1-2x_l)}\\
        &\le   \frac{1-(1-k)c-0}{1-(1-k)c+k(1-0)}\\
        &= \frac{1-(1-k)c}{1+k-(1-k)c}.
    \end{align*}

    Second, for the cost at $y_l$, we have
    \begin{align*}
        \frac{cost(a,b,y_l)}{2\cdot OPT} &=  \frac{x_r+k(o+c(1-o)-x_r)+o+c(1-o)-y_l}{1+(1-k)x_r+k(1-2x_l)-(1-k)y_l}\\
        &=  \frac{(1-k)x_r-y_l+(1+k)(o+c(1-o))}{1+(1-k)x_r+k(1-2x_l)-(1-k)y_l}\\
        &\le  \frac{(1-k)o(1-c)-y_l+(1+k)(o+c(1-o))}{1+(1-k)o(1-c)+k(1-2x_l)-(1-k)y_l}\\
        &\le  \frac{(1-k)o(1-c)-o+(1+k)(o+c(1-o))}{1+(1-k)o(1-c)+k(1-2x_l)-(1-k)o}\\
        &\le \frac{(1-k)o(1-c)-o+(1+k)(o+c(1-o))}{1-(1-k)oc+k(1-2\min(1-o, o(1-c)))}\\
        &\le \max\left(c, \frac{k(2c-c^2)+1-c^2}{2-2c+2ck}, \frac{1-(1-k)c}{1+k-(1-k)c}\right).
    \end{align*}
    The second last inequality is because $x_l\le x_r\le o(1-c)$ and $x_l\le 1-y_r\le 1-o$.
    For the last inequality, we regard $o$ as a variable, and it is easy to find that when $0\le o\le \frac{1}{2-c}$, we have
    $$
    \frac{cost(a,b,y_l)}{2\cdot OPT}\le \frac{o-2oc+c+kc}{1-(1-k)oc+k(1-2o(1-c))},
    $$
    and when $\frac{1}{2-c}<o\le 1$, we have
    $$
    \frac{cost(a,b,y_l)}{2\cdot OPT}\le \frac{o-2oc+c+kc}{1-(1-k)oc+k(1-2(1-o))}.
    $$
    Since both expressions on the right hand side are monotone with respect to $o$, the upper bound must be attained by the maximum of the three cases when $o=0, \frac{1}{2-c}, 1$, establishing the inequality. 

    \textbf{Case 4}. $x_r\ge o(1-c), y_l\ge o+c(1-o)$. The output is $a = o(1-c), b = y_l$, and the maximum cost is achieved by $x_l$ or $x_r$. First, we consider $cost(a,b,x_l)$, and we can assume $x_l\le o(1-c)$, as otherwise $cost(x_l)\le cost(x_r)$. We have
    \begin{align*}
        \frac{cost(a,b,x_l)}{2\cdot OPT} &= \frac{o(1-c)-x_l+k(y_l-o(1-c))+1-y_l}{1+(1-k)x_r+k(1-2x_l)-(1-k)y_l}\\
        &= \frac{1+(1-k)(1-c)o-x_l-(1-k)y_l}{1+(1-k)x_r+k(1-2x_l)-(1-k)y_l}\\
        &\le  \frac{1+(1-k)(1-c)o-x_l-(1-k)y_l}{1+(1-k)o(1-c)+k(1-2x_l)-(1-k)y_l}\\
        &\le  \frac{1+(1-k)(1-c)o-x_l-(1-k)(o+c(1-o))}{1+(1-k)(1-c)o+k(1-2x_l)-(1-k)(o+c(1-o))}\\
        &= \frac{1-x_l-(1-k)c}{1+k(1-2x_l)-(1-k)c}\\
        &\le \frac{1-(1-k)c}{1+k-(1-k)c}.
    \end{align*}

    Second, for the cost at $x_r$, we have
    \begin{align*}
        \frac{cost(a,b,x_r)}{2\cdot OPT} &= \frac{x_r-o(1-c)+k(y_l-o(1-c))+1-y_l}{1+(1-k)x_r+k(1-2x_l)-(1-k)y_l}\\
        &=  \frac{1+x_r-(1+k)(1-c)o-(1-k)y_l}{1+(1-k)x_r+k(1-2x_l)-(1-k)y_l}\\
        &\le  \frac{1+o-(1+k)(1-c)o-(1-k)y_l}{1+(1-k)o+k(1-2x_l)-(1-k)y_l}\\
        &\le  \frac{1+o-(1+k)(1-c)o-(1-k)(o+c(1-o))}{1+(1-k)o+k(1-2x_l)-(1-k)(o+c(1-o))}\\
        &\le  \frac{1-o+2co-(1-k)c}{1+k(1-2\min(o,(1-c)(1-o)))-(1-k)c(1-o)}\\
        &\le  \max\left(\frac{1-(1-k)c}{1+k-(1-k)c}, \frac{k(2c-c^2)+1-c^2}{2-2c+2ck}, c\right).
    \end{align*}
    The second last inequality is because $x_l\le x_r\le o$ and $x_l\le 1-y_r\le 1-y_l\le 1-o-c(1-o)$.
    For the last inequality, we regard $o$ as a variable, and it is easy to find that when $0\le o\le \frac{1-c}{2-c}$, we have
    $$
    \frac{cost(a,b,x_r)}{2\cdot OPT}\le \frac{1-o+2co-(1-k)c}{1+k(1-2o)-(1-k)c(1-o)},
    $$
    and when $\frac{1-c}{2-c}<o\le 1$, we have
    $$
    \frac{cost(a,b,x_r)}{2\cdot OPT}\le \frac{1-o+2co-(1-k)c}{1+k(1-2(1-c)(1-o))-(1-k)c(1-o)}.
    $$
    Since both expressions on the right hand side are monotone with respect to $o$, the upper bound must be attained by the maximum of the three cases when $o=0, \frac{1-c}{2-c}, 1$, establishing the inequality. 

 \medskip   According to the four cases above, the ratio is
    \begin{equation}\label{eq:maxmin}
        \frac{MC(a,b,\mathbf x)}{OPT} \le 2\cdot\max\left(\frac{1-(1-k)c}{1+k-(1-k)c}, \frac{k(2c-c^2)+1-c^2}{2-2c+2ck}, \frac{1+2ck}{2-(1-k)c}, c\right).
    \end{equation}
    We need to select a proper value of $c$ so that the right hand side of \eqref{eq:maxmin} is minimized. Fixing $k$, note that $\frac{1-(1-k)c}{1+k-(1-k)c}$ is decreasing with $c$, and $\frac{1+2ck}{2-(1-k)c}$ is increasing with $c$.
    Consider the equation
    $$
    \frac{1-(1-k)c}{1+k-(1-k)c} = \frac{1+2ck}{2-(1-k)c}.
    $$
    The only solution is
    $$
    c=\frac{1+k^2- \sqrt{k^4-k^3+3k^2+k}}{1-k^2}.
    $$
    Furthermore, when $c=\frac{1+k^2- \sqrt{k^4-k^3+3k^2+k}}{1-k^2}$, we have $ \frac{1-(1-k)c}{1+k-(1-k)c}= \frac{k(2c-c^2)+1-c^2}{2-2c+2ck}$ and $ \frac{1-(1-k)c}{1+k-(1-k)c}\ge c$. 
    Hence, it minimizes the right hand side of \eqref{eq:maxmin}.

\section{Another randomized mechanism for maximum cost}\label{sec:another}

We present another randomized mechanism that is at most 1.441-approximation for maximum cost.

\begin{mechanism}
\label{alg:ran-bad}
    Given location profile $\mathbf x$, let $a$ be $x_r$ and $\frac{x_r}{2}$ with probabilities $\frac{1+k}{3-k}$ and $\frac{2(1-k)}{3-k}$, respectively. Let $b$ be $y_l$ and $\frac{1+y_l}{2}$ with probabilities $\frac{1+k}{3-k}$ and $\frac{2(1-k)}{3-k}$, respectively. Return $(a,b)$.
\end{mechanism}

\begin{lemma}\label{lem:rans}
    Mechanism \ref{alg:ran-bad} is group strategyproof. 
\end{lemma}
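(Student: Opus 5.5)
The plan is to mirror the proof of Lemma~\ref{lem:rans2}. The key observation is that an agent's expected cost depends on the output distribution only through the marginal law of the endpoint on its own side and the expectation of the endpoint on the other side. The joint dependence is irrelevant, so the independence of $a$ and $b$ in Mechanism~\ref{alg:ran-bad} changes nothing. Set $p=\frac{1+k}{3-k}$. Then $\mathbb E[a]=\frac{1+p}{2}x_r$ and $\mathbb E[b]=p\,y_l+(1-p)\frac{1+y_l}{2}$, which are strictly increasing in $x_r$ and $y_l$ respectively. For $i\in N_1$ the expected cost is
\[
\mathbb E|x_i-a|+k(\mathbb E[b]-\mathbb E[a])+1-\mathbb E[b].
\]
The cost of $j\in N_2$ is symmetric. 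Fix a group $S$ with truthful output $(a,b)$ and deviating output $(a',b')$. By symmetry assume $|\mathbb E[a]-\mathbb E[a']|\ge|\mathbb E[b]-\mathbb E[b']|$ and that this quantity is positive. Then $x_r'\neq x_r$, and since only reports of $N_1$ agents can move $x_r$, some $N_1$ agent lies in $S$.

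\textbf{Case 1:} $x_r'<x_r$. The agent at $x_r$ must be in $S$ and must report to its left. Every realization of $a$ and of $a'$ lies at or to the left of $x_r$, so $\mathbb E|x_r-a|=x_r-\mathbb E[a]$, and likewise for $a'$. Its cost change is therefore
\[
(1+k)(\mathbb E[a]-\mathbb E[a'])-(1-k)(\mathbb E[b']-\mathbb E[b]),
\]
which is at least $0$ by the WLOG assumption.

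\textbf{Case 2:} $x_r'>x_r$ and $\mathbb E[b']\le\mathbb E[b]$. Here I must show that every $i\in N_1$ (in particular every member of $S\cap N_1$) does not gain. First take $x_i\le x_r/2$. Both support points of $a$ move rightwards and away from $x_i$, so $\mathbb E|x_i-a|$ rises by $\mathbb E[a']-\mathbb E[a]$, which dominates the $-k(\mathbb E[a']-\mathbb E[a])$ term. Also $(1-k)\mathbb E[b]$ does not increase, so the cost does not decrease. Next take $x_i\in(x_r/2,x_r]$. I would redo the computation of Case 2 in Lemma~\ref{lem:rans2}: use $|x_i-x_r'/2|\ge x_i-x_r'/2$ and the fact that $x_r'>x_r\ge x_i$ to bound the cost change from below by
\[
p(1-k)(x_r'-x_r)-(1-p)\tfrac{1+k}{2}(x_r'-x_r),
\]
which vanishes exactly at $p=\frac{1+k}{3-k}$. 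This is the step I expect to be the delicate one. One must check that $x_r'\ge x_i$ so that the $p$-branch contributes $x_r'-x_i$ without an absolute value. One must also handle the possibility $x_r'/2>x_i$, where the inequality $|x_i-x_r'/2|\ge x_i-x_r'/2$ is used only in the safe direction.

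\textbf{Case 3:} $x_r'>x_r$ and $\mathbb E[b']>\mathbb E[b]$. Since $\mathbb E[b]$ is increasing in $y_l$, we get $y_l'>y_l$, so the agent at $y_l$ belongs to $S$ and reports to its right. All realizations of $b$ and $b'$ lie at or to the right of $y_l$. Hence its cost change is
\[
(1+k)(\mathbb E[b']-\mathbb E[b])+(1-k)(\mathbb E[a']-\mathbb E[a])>0.
\]

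In every case some member of $S$ fails to strictly gain, which establishes group strategyproofness.
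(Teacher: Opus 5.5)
Your proof is correct and is essentially the paper's argument. It uses the same WLOG $|\mathbb E[a]-\mathbb E[a']|\ge|\mathbb E[b]-\mathbb E[b']|$ and the same three cases: the agent at $x_r$ when $x_r'<x_r$; every agent of $N_1$, split at $x_r/2$, when $x_r'>x_r$ and $\mathbb E[b']\le\mathbb E[b]$, with the bound cancelling exactly at $p=\frac{1+k}{3-k}$; and the agent at $y_l$ otherwise. You also make explicit some steps the paper leaves implicit, such as the linearity that makes the independence of $a$ and $b$ irrelevant. The points you flag as delicate in Case~2 are harmless, since $x_r'>x_r\ge x_i$ and $|x_i-x_r'/2|\ge x_i-x_r'/2$ hold unconditionally.
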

\begin{proof}
We consider a group of agents $S\subseteq N_1\cup N_2$. Let $f(\mathbf x)=(a,b)$ be the outcome when all agents report true locations, and $f(\mathbf x_{S}',\mathbf x_{-S})=(a',b')$ be the outcome when the agents in $S$ misreport $\mathbf x_{S}'$, where $a,b,a',b'$ are random variables that follow the distributions given in the mechanism. Assume w.l.o.g. that $|\mathbb E[a]-\mathbb E[a']|\ge |\mathbb E[b]-\mathbb E[b']|$. We show that at least one agent in the group cannot gain by misreporting. 

When $\mathbb E[a']< \mathbb E[a]$, then it must be  $x_r'<x_r$, and the agent located at $x_r$ is in the group. Under the solution $(a,b)$, the cost of the agent at $x_r$ is
$$cost(a,b,x_r)=x_r-\mathbb E[a]+k(\mathbb E[b]-\mathbb E[a])+(1-\mathbb E[b]).$$
Under the solution $(a',b')$, the cost of the agent at $x_r$ is
$$cost(a',b',x_r)=x_r-\mathbb E[a']+k(\mathbb E[b']-\mathbb E[a'])+(1-\mathbb E[b']).$$
Since $|\mathbb E[a]-\mathbb E[a']|\ge |\mathbb E[b]-\mathbb E[b']|$, it follows that 
\begin{align*}
    &cost(a',b',x_r)-cost(a,b,x_r)=(1+k)(\mathbb E[a]-\mathbb E[a'])+(1-k)(\mathbb E[b]-\mathbb E[b'])\ge 0, 
\end{align*}
indicating that this agent cannot gain.

When $\mathbb E[a']> \mathbb E[a]$, there exists  at least one agent $i\in S\cap N_1$. If $\mathbb E[b']\le \mathbb E[b]$, it is clear that any agent located at $[0,\frac{x_r}{2}]$ cannot gain because the change of the endpoints in both regions do not benefit this agent. For an agent $i\in N_1$ located at $(\frac{x_r}{2},x_r]$, under the solution $(a,b)$, the cost of the agent at $x_r$ is
\begin{align*}
    cost(a,b,x_i)&=\frac{1+k}{3-k}(x_r-x_i+k(\mathbb E[b]-x_r))+ \frac{2-2k}{3-k}(x_i-\frac{x_r}{2}+k(\mathbb E[b]-\frac{x_r}{2})) +1-\mathbb E[b]\\
    &=\frac{1+k}{3-k}(x_r-x_i-kx_r)+ \frac{2-2k}{3-k}(x_i-\frac{x_r}{2}-\frac{kx_r}{2}) +1-(1-k)\mathbb E[b]
\end{align*}
Under the solution $(a',b')$, the cost of this agent is
$$cost(a',b',x_i)=\frac{1+k}{3-k}(x_r'-x_i-kx_r')+ \frac{2-2k}{3-k}(|x_i-\frac{x_r'}{2}|-\frac{kx_r'}{2}) +1-(1-k)\mathbb E[b'].$$
Since $\mathbb E[b']\le \mathbb E[b]$, it follows that 
$$cost(a',b',x_i)-cost(a,b,x_i)\ge \frac{1+k}{3-k}(1-k)(x_r'-x_r)-\frac{2-2k}{3-k}\cdot\frac{1+k}{2}(x_r'-x_r)= 0, $$
indicating that this agent cannot gain. 
If $\mathbb E[b']> \mathbb E[b]$, then the agent located at $y_l$ must be in the group and misreport a location on the right of $y_l$. It is easy to see that 
$$cost(a',b',y_l)-cost(a,b,y_l)=(1+k)(\mathbb E[b']-\mathbb E[b])+(1-k)(\mathbb E[a']-\mathbb E[a])\ge 0, $$
and thus this agent cannot decrease the cost.   
\end{proof}

Now we prove the approximation ratio.

\begin{theorem}\label{thm:ranub-bad}
    Mechanism  \ref{alg:ran-bad} is a randomized group strategyproof mechanism. The approximation ratio for maximum cost is  $\frac{4-2k}{3-k}$ when $k\in[0,\kappa]$, and is $\frac{11+2k^3-9k^2}{9+k^2-6k}$ when $k\in[\kappa,1)$, where $\kappa=\frac{9-\sqrt{73}}{4}\approx 0.114$. 
\end{theorem}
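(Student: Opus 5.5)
Group strategyproofness is already given by Lemma~\ref{lem:rans}, so the substance is the approximation ratio. As in the proof of Theorem~\ref{thm:ranub2}, I will work under $L=0$ and assume w.l.o.g.\ that $1-y_r\ge x_l$. Theorem~\ref{thm:opt} then gives
\[
OPT=\tfrac12\bigl[1+(1-k)x_r+k(1-2x_l)-(1-k)y_l\bigr].
\]
Write $p=\frac{1+k}{3-k}$ and $q=1-p=\frac{2(1-k)}{3-k}$. Since $a$ and $b$ are drawn independently, there are four realizations:
\begin{itemize}
\item $(x_r,y_l)$ with probability $p^2$;
\item $(x_r,\frac{1+y_l}{2})$ with probability $pq$;
\item $(\frac{x_r}{2},y_l)$ with probability $qp$;
\item $(\frac{x_r}{2},\frac{1+y_l}{2})$ with probability $q^2$.
\end{itemize}

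For each realization I will find the maximum cost among the four extreme agents $x_l,x_r,y_l,y_r$.
\begin{itemize}
\item For $(x_r,y_l)$ the proof of Theorem~\ref{thm:21k} shows the maximum is attained by $x_l$.
\item For $(\frac{x_r}{2},\frac{1+y_l}{2})$ it is attained by $x_r$ and $y_l$ with equal cost, as in Theorem~\ref{thm:ranub2}. This needs $x_l\le x_r/2$; otherwise $x_r$ dominates anyway. A similar check is needed for $y_r$.
\item For the mixed realization $(x_r,\frac{1+y_l}{2})$ the candidates are $x_l$ (cost $x_r-x_l+k(\frac{1+y_l}{2}-x_r)+\frac{1-y_l}{2}$) and $y_l$ (cost $\frac{1-y_l}{2}+k(\frac{1+y_l}{2}-x_r)+x_r$). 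The $y_l$ cost is at least the $x_l$ cost, so the maximum is the $y_l$ cost.
\item For $(\frac{x_r}{2},y_l)$ the candidates are $x_r$ and $y_r$. Using $1-y_r\ge x_l$, the agent $x_r$ (cost $\frac{x_r}{2}+k(y_l-\frac{x_r}{2})+1-y_l$) dominates unless $y_r$ is large; I expect $x_r$ to be the maximizer after checking.
\end{itemize}
Summing with the weights gives the expected maximum cost as an affine function of $x_l,x_r,y_l$. So the ratio to $OPT$ is a linear-fractional function of these variables.

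Next I maximize this ratio over the feasible region $0\le x_l\le x_r<o<y_l\le y_r\le 1$ with $x_l\le 1-y_r$. As in Theorem~\ref{thm:21k}, the dependence on $x_r$ and $y_l$ is only through $d=y_l-x_r\ge 0$ up to small corrections. I will argue that the ratio is monotone in $d$, which should push the worst case to $d\to 0$. The remaining variable is $x_l\in[0,\frac12]$, and a linear-fractional function of one variable is maximized at an endpoint. I then compare the two resulting expressions, $x_l=0$ and $x_l=\frac12$ (the latter corresponding to $y_r=\frac12$-type instances).

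The $x_l=0$, $d\to0$ endpoint should give $1+\frac{p(1-k)}{1+k}$-type terms adjusted by the cross terms. I expect it to yield $\frac{4-2k}{3-k}$. The other endpoint should yield $\frac{11+2k^3-9k^2}{9+k^2-6k}$, noting that $9+k^2-6k=(3-k)^2$, consistent with the $p^2,pq,q^2$ weights. Equating the two gives $2k^2-9k+1=0$, whose root in $[0,1)$ is $\kappa=\frac{9-\sqrt{73}}{4}$. That confirms this switch point.

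For tightness I will exhibit the two instances: $x_l=0$, $x_r,y_l\to o$; and $x_l$ near $\frac12$ with the obstacle near $\frac12$ and $y_r=1-x_l$.

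The main obstacle is correctly identifying the maximizing agent in the two mixed realizations. The identity of the maximizer may switch with the instance, for example $y_r$ versus $x_r$ in $(\frac{x_r}{2},y_l)$. This forces a case split before the linear-fractional optimization, and the endpoint analysis must be redone in each case. I would first try to bound each mixed realization's cost by the larger of two affine expressions whose maximum is attained at the same extreme instances. That would let a single two-endpoint comparison suffice.
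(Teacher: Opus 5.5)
Your proposal is correct and follows essentially the same route as the paper: the same four realizations with weights $p^2,pq,pq,q^2$ and the same maximizing agents. From there, the paper likewise obtains an expected cost whose dependence on $x_r,y_l$ is exactly through $x_r-y_l$, pushes $x_r-y_l\to 0$, and compares the endpoints $x_l\in\{0,\frac12\}$, with the switch at the root $\kappa$ of $(k-1)(2k^2-9k+1)=0$. The case split you anticipate does not arise: in $(\frac{x_r}{2},y_l)$ the agent at $x_r$ always beats $y_r$ (because $y_r\le 1$), and in $(x_r,\frac{1+y_l}{2})$ the agent at $y_l$ always beats the agents at $x_l$, $x_r$ and $y_r$.
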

\begin{proof} 
    Given any instance with location profile $\mathbf x$, we 
assume w.l.o.g. that $1-y_r\ge x_l$. By Theorem \ref{thm:opt}, the optimal solution is $(a,b)=(\frac{x_l+x_r}{2},\frac{y_l-x_l}{2}+\frac12)$, and the optimal maximum cost is 
\begin{align*}
cost(a,b,x_l)&=a-x_l+k(b-a)+1-b=\frac12[1+(1-k)x_r+k(1-2x_l)-(1-k)y_l].
\end{align*}

    Now we consider the solution  returned by Mechanism  \ref{alg:ran-bad}. We discuss the 4 realizations of the probability distribution. 
    \begin{itemize}
        \item $(x_r,y_l)$ with probability $\frac{(1+k)^2}{(3-k)^2}$.  By the analysis in the proof of Theorem \ref{thm:21k} and the assumption $1-y_r\ge x_l$,  the maximum cost is attained by $x_l$, that is, 
        \begin{align*}
            MC(x_r,y_l,\mathbf x)=cost(x_r,y_l,x_l)&=x_r-x_l+k(y_l-x_r)+1-y_l.
        \end{align*}
        
        \item $(x_r,\frac{y_l+1}{2})$ with probability $\frac{2(1-k)(1+k)}{(3-k)^2}$. The maximum cost is attained by $x_l$ or $y_l$. The cost of $x_l$ is $x_r-x_l+k(\frac{y_l+1}{2}-x_r)+1-\frac{y_l+1}{2}$, and the cost of $y_l$ is $\frac{1-y_l}{2}+k(\frac{y_l+1}{2}-x_r)+x_r$.  It is easy to see that the cost of $y_l$ is no less than the cost of $x_l$. Hence, the maximum cost is attained by $y_l$, that is, 
         \begin{align*}
            MC(x_r,\frac{y_l+1}{2},\mathbf x)=cost(x_r,\frac{y_l+1}{2},y_l)&=\frac{1-y_l}{2}+k(\frac{y_l+1}{2}-x_r)+x_r.
        \end{align*}
        
        \item $(\frac{x_r}{2},y_l)$ with probability $\frac{2(1-k)(1+k)}{(3-k)^2}$. The maximum cost is attained by $x_r$, that is, 
         \begin{align*}
            MC(\frac{x_r}{2},y_l,\mathbf x)=cost(\frac{x_r}{2},y_l,x_r)&=\frac{x_r}{2}+k(y_l-\frac{x_r}{2})+1-y_l.
        \end{align*}
        
         \item $(\frac{x_r}{2},\frac{1+y_l}{2})$ with probability $\frac{4(1-k)^2}{(3-k)^2}$. The maximum cost is attained by $x_r$ or $y_l$, where both costs are equal to  
         $$MC(\frac{x_r}{2},\frac{1+y_l}{2},\mathbf x)=\frac{x_r}{2}+k(\frac{1+y_l}{2}-\frac{x_r}{2})+\frac{1-y_l}{2}.$$

    \end{itemize}
Therefore, the expected maximum cost of the solution returned by the mechanism is
\begin{align*}
    &\frac{(1+k)^2}{(3-k)^2}\cdot \left(x_r-x_l+k(y_l-x_r)+1-y_l\right)+ \frac{2(1-k)(1+k)}{(3-k)^2}\cdot \left(\frac{1-y_l}{2}+k(\frac{y_l+1}{2}-x_r)+x_r\right)\\
    &+ \frac{2(1-k)(1+k)}{(3-k)^2}\cdot \left(\frac{x_r}{2}+k(y_l-\frac{x_r}{2})+1-y_l\right)+ \frac{4(1-k)^2}{(3-k)^2}\cdot \left( \frac{x_r}{2}+k(\frac{1+y_l}{2}-\frac{x_r}{2})+\frac{1-y_l}{2}\right)\\
    =~& \frac{(1+k)^2}{(3-k)^2}\cdot \left(x_r-x_l+k(y_l-x_r)+1-y_l\right)+ \frac{2(1-k)(1+k)}{(3-k)^2}\cdot \left(\frac{3-3y_l}{2}+k(\frac{3y_l+1}{2}-\frac{3x_r}{2})+\frac{3x_r}{2}\right)\\
    &+ \frac{4(1-k)^2}{(3-k)^2}\cdot \left( \frac{x_r}{2}+k(\frac{1+y_l}{2}-\frac{x_r}{2})+\frac{1-y_l}{2}\right)\\
     =~& \frac{(1+k)^2}{(3-k)^2}\cdot \left(x_r-x_l+k(y_l-x_r)+1-y_l\right)+ \frac{2(1-k)(1+k)}{(3-k)^2}\cdot \left(\frac{3(k-1)(y_l-x_r)}{2}+\frac{3+k}{2} \right)\\
    &+ \frac{4(1-k)^2}{(3-k)^2}\cdot \left( \frac{x_r}{2}+k(\frac{1+y_l}{2}-\frac{x_r}{2})+\frac{1-y_l}{2}\right)\\
    =~& \frac{(1+k)(2-k)}{3-k}+\frac{2(1-k)(3-k)x_r -2(1-k)(3-k)y_l -(1+k)^2x_l}{(3-k)^2}\\
     =~& \frac{(1+k)(2-k)+2(1-k)(x_r-y_l)}{3-k}-\frac{(1+k)^2x_l}{(3-k)^2}.
\end{align*}

Then, the ratio between the expected maximum cost and the optimal maximum cost is
\begin{equation}\label{eq:exc}
    2\cdot \frac{ \frac{(1+k)(2-k)+2(1-k)(x_r-y_l)}{3-k}-\frac{(1+k)^2x_l}{(3-k)^2}}{1+(1-k)x_r+k(1-2x_l)-(1-k)y_l}.
\end{equation}
Considering $x_r-y_l$ as a variable of the function in \eqref{eq:exc},  the derivative with respect to this variable is always non-negative for any $k\in[0,1)$, which implies that the maximum possible value is achieved when $x_r=y_l$. 
Then the ratio becomes 
\begin{equation}\label{eq:tgd}
    2\cdot \frac{ \frac{(1+k)(2-k)}{3-k}-\frac{(1+k)^2x_l}{(3-k)^2}}{1+k-2kx_l}=\frac{2}{(3-k)^2}\cdot \frac{(1+k)(2-k)(3-k)-(1+k)^2x_l}{1+k-2kx_l}.
\end{equation}
Let $\kappa=\frac{9-\sqrt{73}}{4}\approx 0.114$ be the root of the equation $\frac{(1+k)(2-k)(3-k)}{1+k}=\frac{(1+k)^2}{2k}$. 
\begin{itemize}
    \item When $k\in[0,\kappa]$, we have $\frac{(1+k)(2-k)(3-k)}{1+k}\le \frac{(1+k)^2}{2k}$, the maximum value of the ratio in \eqref{eq:tgd} is achieved when $x_l=0$, that is 
    \begin{equation}\label{eq:tgd33}
    \frac{2}{(3-k)^2}\cdot \frac{(1+k)(2-k)(3-k)}{1+k}=\frac{4-2k}{3-k}.
    \end{equation}

    \item When $k\in[\kappa,1]$, we have $\frac{(1+k)(2-k)(3-k)}{1+k}\ge \frac{(1+k)^2}{2k}$, and the ratio in  \eqref{eq:tgd} is increasing with $x_l$. Since $1-y_r\ge x_l$, $x_l$ is upper bounded by $\frac12$. Letting $x_l=\frac12$, the maximum value of the ratio in \eqref{eq:tgd} is 
    \begin{equation}\label{eq:tgd44}
    \frac{2(1+k)(2-k)(3-k)-(1+k)^2}{(3-k)^2}=\frac{11+2k^3-9k^2}{9+k^2-6k}.
    \end{equation}
\end{itemize}
\end{proof}

The maximum possible value of the approximation ratio over all $k\in[0,1)$ is $9-6\sqrt[3]{2}\approx 1.441$, which is attained by $k=3-2\sqrt[3]{2}$. Hence, generally we can say that 
 Mechanism  \ref{alg:ran-bad} is $1.441$-approximation for any $k\in[0,1)$, and in particular, it is $\frac43$-approximation when $k=0$, and nearly optimal when $k$ approaches 1. 
Compared with the approximation ratio $\frac{2}{1+k}$ of the deterministic \textsc{TwoInnerExtreme}, this randomized one improves when $k\le 0.396$, but is worse for any larger $k$.


\section{Lower Bound by Experiment}\label{app:experiment}

In this appendix, we provide the details of our computational experiment to establish the lower bound function $L(k)$. We include the complete algorithm, convergence analysis, and the resulting data.

\subsection{Experimental Setup}

We consider the two-facility location problem with $n=4$ agents. For each $k \in [0,1)$, we search for the worst-case approximation ratio by enumerating all possible combinations of agent positions that maintain the same facility placement $(a_0,b_0)$ under any deterministic strategyproof mechanism. As proved in Theorem~\ref{thm:lower}, there are exactly $2^4 = 16$ such profiles for each candidate $(a_0,b_0)$.

For each $k \in \{0, 0.01, 0.02, \ldots, 0.99\}$ and $o \in [0.5,1]$, we compute:
\[
r^*(k,o) = \min_{\substack{a_0 \in [0,o] \\ b_0 \in [o,1]}} \max_{\text{16 profiles}} \frac{\text{cost}((a_0,b_0),\text{profile})}{\text{OPT}(\text{profile})}
\]
where $\text{OPT}(\text{profile})$ denotes the optimal cost for the given profile.

The overall lower bound is then:
\[
L(k) = \max_{o \in [0.5,1]} r^*(k,o).
\]

\subsection{Numerical Method and Convergence Analysis}

We discretize the parameter space $[0,o] \times [o,1]$ using a uniform grid of $1000 \times 1000$ points. The choice of grid density is justified by a convergence analysis comparing results from $300\times300$ and $1000\times1000$ grids. The maximum difference between the two grids is $\Delta = 0.0021$ at $k=0.01$.

Under the standard grid search error model $f_N = f^* + C N^{-p}$, where $f_N$ is the value on an $N\times N$ grid, $f^*$ is the true optimum, and $p$ is the convergence order, we estimate $p \approx 1.5$ based on the smoothness of the resulting function $L(k)$. This yields an error bound of $0.00062$ for the $1000\times1000$ grid, ensuring three decimal places of accuracy.

\subsection{Complete Results}

Table~\ref{tab:lower-bound-data} provides the complete lower bound function $L(k)$ for $k \in [0,1)$. These values were computed using the algorithm described below.

\begin{table}[ht]
\centering
\caption{Lower bound $B(k)$ for strategyproof mechanisms}\label{tab:lower-bound-data}
\begin{tabular}{ccccccccccc}
\toprule
$k$ & $B(k)$ & $k$ & $B(k)$ & $k$ & $B(k)$ & $k$ & $B(k)$ & $k$ & $B(k)$ \\
\midrule
0.00 & 2.000000 & 0.20 & 1.428571 & 0.40 & 1.295660 & 0.60 & 1.195596 & 0.80 & 1.099201 \\
0.01 & 1.819950 & 0.21 & 1.420010 & 0.41 & 1.290144 & 0.61 & 1.190880 & 0.81 & 1.094324 \\
0.02 & 1.756491 & 0.22 & 1.411843 & 0.42 & 1.284899 & 0.62 & 1.186074 & 0.82 & 1.089445 \\
0.03 & 1.711085 & 0.23 & 1.403846 & 0.43 & 1.279744 & 0.63 & 1.181276 & 0.83 & 1.084476 \\
0.04 & 1.676035 & 0.24 & 1.395953 & 0.44 & 1.274488 & 0.64 & 1.176489 & 0.84 & 1.079593 \\
0.05 & 1.646055 & 0.25 & 1.388656 & 0.45 & 1.269273 & 0.65 & 1.171711 & 0.85 & 1.074628 \\
0.06 & 1.620805 & 0.26 & 1.381262 & 0.46 & 1.264283 & 0.66 & 1.166856 & 0.86 & 1.069737 \\
0.07 & 1.598661 & 0.27 & 1.374163 & 0.47 & 1.259114 & 0.67 & 1.162002 & 0.87 & 1.064773 \\
0.08 & 1.578814 & 0.28 & 1.367500 & 0.48 & 1.254126 & 0.68 & 1.157248 & 0.88 & 1.059809 \\
0.09 & 1.560757 & 0.29 & 1.360748 & 0.49 & 1.249178 & 0.69 & 1.152496 & 0.89 & 1.054849 \\
0.10 & 1.544260 & 0.30 & 1.354144 & 0.50 & 1.244142 & 0.70 & 1.147611 & 0.90 & 1.049915 \\
0.11 & 1.529331 & 0.31 & 1.347719 & 0.51 & 1.239288 & 0.71 & 1.142843 & 0.91 & 1.044915 \\
0.12 & 1.515347 & 0.32 & 1.341455 & 0.52 & 1.234303 & 0.72 & 1.138046 & 0.92 & 1.039947 \\
0.13 & 1.501947 & 0.33 & 1.335341 & 0.53 & 1.229417 & 0.73 & 1.133207 & 0.93 & 1.034974 \\
0.14 & 1.489905 & 0.34 & 1.329362 & 0.54 & 1.224584 & 0.74 & 1.128325 & 0.94 & 1.029998 \\
0.15 & 1.478327 & 0.35 & 1.323508 & 0.55 & 1.219637 & 0.75 & 1.123576 & 0.95 & 1.024991 \\
0.16 & 1.467175 & 0.36 & 1.317768 & 0.56 & 1.214906 & 0.76 & 1.118698 & 0.96 & 1.020008 \\
0.17 & 1.457173 & 0.37 & 1.312133 & 0.57 & 1.210061 & 0.77 & 1.113822 & 0.97 & 1.015003 \\
0.18 & 1.447090 & 0.38 & 1.306595 & 0.58 & 1.205185 & 0.78 & 1.108948 & 0.98 & 1.010001 \\
0.19 & 1.437579 & 0.39 & 1.301144 & 0.59 & 1.200331 & 0.79 & 1.104075 & 0.99 & 1.005000 \\
\bottomrule
\end{tabular}
\end{table}

\subsection{Implementation Details}

The core algorithm is implemented in Python and consists of the following components:

\begin{enumerate}
\item \textbf{Profile generation}: For each candidate $(a_0,b_0)$, generate all $16$ profiles according to the scheme in Theorem~\ref{thm:lower}.
\item \textbf{Cost calculation}: Compute the maximum cost for the mechanism placing facilities at $(a_0,b_0)$ and the optimal cost for each profile.
\item \textbf{Ratio computation}: Calculate the approximation ratio as the maximum of the $16$ cost ratios.
\item \textbf{Grid search}: Minimize this ratio over all $(a_0,b_0)$ in a $1000\times1000$ grid covering $[0,o] \times [o,1]$.
\item \textbf{Worst-case $o$}: For each $k$, maximize the minimum ratio over $o \in [0.5,1]$.
\end{enumerate}

The complete code is shown below.

\begin{python}
import math
import itertools

# =====================
# Parameters
# =====================
K_GRID = [i/100 for i in range(0, 100)]      # k in [0,1)
O_GRID = [0.5 + i/40 for i in range(0, 21)]  # o in [0.5,1]
# In fact, the worst case always happens when o is 0.5, therefore we can use O_GRID = [0.5] instead.
A_GRID_SIZE = 1000
B_GRID_SIZE = A_GRID_SIZE
EPS = 1e-8

# =====================
# Cost functions
# =====================
def cost_left(x, a, b, k):
    return abs(x - a) + k * (b - a) + (1 - b)

def cost_right(x, a, b, k):
    return abs(x - b) + k * (b - a) + a

def max_cost(a, b, profile, k):
    xl, xr, yl, yr = profile
    return max(
        cost_left(xl, a, b, k),
        cost_left(xr, a, b, k),
        cost_right(yl, a, b, k),
        cost_right(yr, a, b, k)
    )

# =====================
# Optimal MC (oracle)
# =====================
def opt_max_cost(profile, k):
    xl, xr, yl, yr = profile
    if 1 - yr >= xl:
        a = (xl + xr) / 2
        b = (yl - xl) / 2 + 0.5
    else:
        a = (xr - yr) / 2 + 0.5
        b = (yl + yr) / 2
    return max_cost(a, b, profile, k)

# =====================
# 16 profiles generator
# =====================
def generate_profiles(a0, b0, o, eps=1e-6):
    choices = [
        [0, a0],            # xl
        [a0, o - eps],      # xr
        [o + eps, b0],      # yl
        [b0, 1]             # yr
    ]
    return list(itertools.product(*choices))

# =====================
# Worst ratio for fixed (a0,b0)
# =====================
def worst_ratio(a0, b0, o, k):
    profiles = generate_profiles(a0, b0, o)
    worst = 0
    for p in profiles:
        ALG = max_cost(a0, b0, p, k)
        OPT = opt_max_cost(p, k)
        if OPT < EPS:
            continue
        worst = max(worst, ALG / OPT)
    return worst

# =====================
# Main experiment
# =====================
def run_experiment():
    results = {}

    for k in K_GRID:
        lb_k = 2 / (1 + math.sqrt(k)) if k > 0 else 2
        worst_over_o = lb_k
        o_result = {}

        for o in O_GRID:
            best = float("inf")
            for i in range(A_GRID_SIZE):
                a0 = o * i / A_GRID_SIZE
                for j in range(B_GRID_SIZE):
                    b0 = o + (1 - o) * j / B_GRID_SIZE
                    r = worst_ratio(a0, b0, o, k)
                    best = min(best, r)

            best = max(best, lb_k)   # respect analytical LB
            o_result[o] = best
            worst_over_o = max(worst_over_o, best)

        results[k] = worst_over_o

        # one-line log per k
        print(
            f"k={k:.3f} | "
            + " ".join([f"o={o:.2f}:{o_result[o]:.3f}" for o in O_GRID])
            + f" || worst={worst_over_o:.3f}"
        )

    return results

# Final output
if __name__ == "__main__":
    res = run_experiment()
    print(res)

\end{python}

\end{document}